\definecolor{palecerulean}{rgb}{0.61, 0.77, 0.89}
\definecolor{pistachio}{rgb}{0.58, 0.77, 0.45}
\definecolor{bubblegum}{rgb}{0.99, 0.76, 0.8}
\definecolor{buublecerulean}{rgb}{0.80,0.77,0.82}
\definecolor{buublepistachio}{rgb}{0.79,0.77,0.63}
\definecolor{ceruleanpistachio}{rgb}{0.59,0.77,0.67}
\newcommand{\oraclemin}{\ensuremath{\mathcal{O}_\mathrm{min}}}
\newcommand{\oraclemax}{\ensuremath{\mathcal{O}_\mathrm{max}}}
\newcommand{\oracleds}{\ensuremath{\mathcal{O}_\mathrm{ds}}}
\newcommand{\oraclegamma}{\ensuremath{\mathcal{O}_\Gamma}}
\title{Finding Diverse Solutions in Combinatorial Problems with a Distributive Lattice Structure}
\author{Mark de Berg \thanks{Eindhoven University of Technology, Netherlands, \texttt{m.t.d.berg@tue.nl}} \and Andr\'es L\'opez Mart\'inez \thanks{Eindhoven University of Technology, Netherlands, \texttt{a.lopez.martinez@tue.nl}} \and Frits Spieksma \thanks{Eindhoven University of Technology, Netherlands, \texttt{f.c.r.spieksma@tue.nl}}}
\date{\today}
\begin{document}

\maketitle

\pagenumbering{arabic}

\begin{abstract}
    We generalize the polynomial-time solvability of $k$-\textsc{Diverse Minimum s-t Cuts} (De Berg et al., ISAAC'23) to a wider class of combinatorial problems whose solution sets have a distributive lattice structure. We identify three structural conditions that, when met by a problem, ensure that a $k$-sized multiset of maximally-diverse solutions---measured by the sum of pairwise Hamming distances---can be found in polynomial time. We apply this framework to obtain polynomial time algorithms for finding diverse minimum $s$-$t$ cuts and diverse stable matchings. %and diverse market clearing price vectors. 
    Moreover, we show that the framework extends to two other natural measures of diversity. Lastly, we present a simpler algorithmic framework for finding a largest set of pairwise disjoint solutions in problems that meet these structural conditions. 
\end{abstract}

\section{Introduction} \label{section.1}

In combinatorial optimization problems, the objective is typically to identify a single optimal solution. However, this approach may be inadequate or impractical in real-world situations, where some constraints and factors are often overlooked or unknown in advance. This motivates the development of algorithms capable of finding multiple solutions, with \textit{diversity} playing a key role. A growing body of research has focused on finding diverse solutions in classical combinatorial problems, much of it emerging from the field of fixed-parameter tractability \cite{baste2019fpt, fomin2020diverse, hanaka2021finding, hanaka2022framework, misra2024parameterized, shida2024finding, drabik2024finding, kumabe2024max}. These studies show that finding diverse solutions is, in general, computationally more challenging than finding a single one. For instance, while \textsc{Matching} is solvable in polynomial time, finding two edge-disjoint matchings is NP-hard, even on 3-regular graphs \cite{fomin2020diverse}. 

Here, we aim to develop theoretically efficient algorithms that produce a collection of maximally diverse solutions. We use the sum of pairwise Hamming distances between solutions as our measure of diversity. In contrast with the aforementioned literature, we show that a broader class of diverse problems is computationally no harder than finding a single solution in polynomial time. Specifically, we generalize the polynomial-time solvability of $k$-\textsc{Diverse Minimum s-t Cuts} by De Berg et al. \cite{de2023finding} to a class of combinatorial problems whose solution sets form a distributive lattice. 

We state our main result in terms of a unified general problem: \textsc{Max-Sum $k$-Diverse Solutions}. Let $E$ be a finite set with $n$ elements, and let $\Gamma \subseteq 2^E$ be a set of feasible solutions. For two feasible solutions $X, Y \in \Gamma$, the symmetric difference, or Hamming distance, between them is defined as $X \triangle Y = (X \setminus Y) \cup (Y \setminus X)$. Let $(X_1, X_2, \ldots, X_k)$ be a collection of $k$ subsets of $E$. We consider the \textit{pairwise-sum} diversity measure:
\begin{equation*}
    d_\mathrm{sum}(X_1, X_2, \ldots, X_k) = \sum_{1 \leq i < j \leq k} |X_i \triangle X_j|.
\end{equation*}

Adopting the notation from Hanaka et al. \cite{hanaka2022framework}, we define \textsc{Max-Sum $k$-Diverse Solutions} as follows. Here, $k$ is a fixed constant; that is, $k$ is not part of the input. 

\begin{extthm}[\textsc{Max-Sum $k$-Diverse Solutions}] \label{problem:1}
Given a finite set $E$ of size $n$, an implicitly defined family $\Gamma$ of subsets of $E$, referred to as feasible solutions, and a membership oracle for $\Gamma \subseteq 2^E$, find a $k$-multiset $C = (X_1, X_2, \ldots, X_k)$ with $X_1, X_2, \ldots, X_k \in \Gamma$, such that $d_\mathrm{sum}(C)$ is maximum. 
\end{extthm}

Our main result is as follows. 

\begin{restatable}[]{theorem}{mainTheorem} \label{theorem.1}
    \textsc{Max-Sum $k$-Diverse Solutions} can be solved in polynomial time if the set of feasible solutions $\Gamma$ satisfies the following three properties: 
    \begin{enumerate}
        \item There is a relation $\leq$ such that the poset $(E, \leq)$ can be expressed as a disjoint union of chains and each feasible solution $X \subseteq E$ contains exactly one element from each chain. \label{property.1}
        \item The set of feasible solutions with componentwise order defines a distributive lattice. \label{property.2}
        \item A compact representation of this lattice can be constructed in polynomial time. \label{property.3}
        \end{enumerate}
\end{restatable}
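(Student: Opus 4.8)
The plan is to convert the combinatorial diversity objective into a separable, concave function of element multiplicities, then use distributivity to restrict to \emph{nested} families of solutions, and finally encode such families as flows in a network built from the compact representation of the lattice. For the first step, given a $k$-multiset $(X_1,\ldots,X_k)$ and $e \in E$ let $t_e = |\{\, i : e \in X_i \,\}|$. A pair $(i,j)$ contributes to $|X_i \triangle X_j|$ through $e$ exactly when precisely one of $X_i,X_j$ contains $e$, so double counting over pairs yields
$$d_\mathrm{sum}(X_1,\ldots,X_k) = \sum_{e \in E} t_e\,(k - t_e).$$
The single-variable function $t \mapsto t(k-t)$ is concave, with decreasing marginal increments $t(k-t) - (t-1)(k-t+1) = k - 2t + 1$; this concavity is what will later let me model the objective by convex edge costs in a flow network.

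Next I reduce to a nested chain. By Property~\ref{property.1} I identify each solution with its vector of chosen heights $(s_1,\ldots,s_m)$, where $s_r$ indexes the element chosen from the $r$-th chain, and by Property~\ref{property.2} I regard $\Gamma$ as a sublattice of the product of chains $\prod_r \{0,1,\ldots,\ell_r\}$, in which meet and join are coordinatewise minimum and maximum. Given any multiset $(X_1,\ldots,X_k)$, I form the lattice order statistics
$$Z_i = \bigvee_{|S| = i}\ \bigwedge_{j \in S} X_j \qquad (i = 1,\ldots,k),$$
with $S$ ranging over the $i$-element subsets of $\{1,\ldots,k\}$. Since $\Gamma$ is closed under meet and join, each $Z_i \in \Gamma$, and because the operations act coordinatewise, $Z_i$ takes in every coordinate the $i$-th largest of the corresponding values of $X_1,\ldots,X_k$. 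Hence $Z_1 \geq \cdots \geq Z_k$ is a chain, and $(Z_1,\ldots,Z_k)$ is just the coordinatewise sort of $(X_1,\ldots,X_k)$, so it induces the same multiset of values in each chain and therefore the same value of every $t_e$. By the decomposition above the objective is unchanged, so there is always a maximum-diversity multiset forming a chain in $\Gamma$, and I restrict to such nested families henceforth.

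Finally I encode nested families as flows. Using the compact representation from Property~\ref{property.3}, I realize $\Gamma$ as the lattice of down-sets of a polynomially sized poset $P$ of join-irreducibles (equivalently, closed sets of a DAG), in the spirit of Birkhoff's representation theorem; concretely the join-irreducibles are the thresholds $y_{r,t} = [\, s_r \geq t \,]$ within each chain, and the relations of $P$ are exactly the lattice constraints linking the chains. A nested chain $Z_1 \geq \cdots \geq Z_k$ then corresponds to an order-reversing labeling $\lambda : P \to \{0,\ldots,k\}$, that is, to $k$ nested down-sets, which I read as $k$ units of flow in a network built on $P$: routing flow respects the down-set constraints, while the flow crossing the gadget associated with $e$ equals $t_e$. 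Assigning each gadget a piecewise-linear convex cost whose $t$-th unit costs $-(k-2t+1)$ makes the total cost equal to $-d_\mathrm{sum}$ up to an additive constant, so a single minimum-cost flow of value $k$---computable in polynomial time, with convex costs realized by splitting each gadget into $k$ unit-capacity parallel arcs of increasing cost---produces the optimal nested family, which is decoded by peeling off the level sets of $\lambda$. I expect the main obstacle to lie precisely here: the objective is defined on the ground set $E$ whereas the representation lives on the join-irreducibles of $P$, so the delicate part is to build gadgets that simultaneously enforce exactly the down-set constraints of $P$ (hence feasibility in $\Gamma$ and nestedness) and measure precisely $t_e$ for each $e \in E$, and then to prove that the resulting flow-to-family correspondence is a cost-preserving bijection. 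This is the step that specializes, in the min-cut case, to the flow network of De~Berg~et~al.~\cite{de2023finding}.
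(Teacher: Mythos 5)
Your first two steps are correct and coincide with the paper's own machinery. The identity $d_\mathrm{sum}=\sum_{e\in E}t_e(k-t_e)$ is Equation \eqref{eq:dsumMultiplicity} in disguise, and your lattice order statistics $Z_i=\bigvee_{|S|=i}\bigwedge_{j\in S}X_j$ give a clean, non-algorithmic proof of exactly what Claim \ref{claim.0} and the Cost-Equivalence Lemma (Lemma \ref{lemma.2}) establish via a sorting procedure: one may restrict attention to nested (left-right ordered) families without changing any multiplicity $t_e$. Where you genuinely diverge is the final step. The paper never builds per-element gadgets: it proves that $\hat d_\mathrm{sum}(C)=\sum_e\binom{\mu_e(C)}{2}$ is submodular on the lattice $L^*$ of left-right ordered tuples (Lemma \ref{lemma.3}, via Claims \ref{claim.1}--\ref{claim.3}), shows that $J(L^*)$ has size $O(kn)$ and is computable from $J(L)$ (Lemma \ref{lemma.4}), and then invokes generic submodular function minimization on a distributive lattice (Theorem \ref{theorem:sfmDistributiveLattices}). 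Your plan replaces that SFM black box by a min-cost flow computation on a network built from $J(L)$; if completed, this would be a genuinely different and in practice much faster algorithm. (One inaccuracy: according to the present paper, De Berg et al.\ also proceed via SFM, so there is no ``flow network of De Berg et al.'' to specialize to.)

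However, as written the flow step is a plan rather than a proof, and the gap sits exactly where you flag it. Two things are missing. First, the translation of the objective from $E$ to $J(L)$: writing $X(\ell)$ for the element of chain $E_\ell$ contained in $X$, you need that for every $e\in E_\ell$ the threshold set $\{X\in\Gamma : e\preceq_\ell X(\ell)\}$ is empty, all of $\Gamma$, or a principal filter whose generator $p_e$ is a join-irreducible of $L$; then $e\in X$ iff $p_e$ lies in the ideal of $X$ while $p_{e'}$ does not ($e'$ the chain-successor of $e$), whence $t_e=\lambda(p_e)-\lambda(p_{e'})$ for the antitone labeling $\lambda:J(L)\to\{0,\ldots,k\}$ encoding your nested family. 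This is true (if the generator $Z$ of the filter satisfied $Z=A\vee B$ with $A,B\prec Z$, then one of $A,B$ would already meet the threshold, contradicting minimality of $Z$), but it is nowhere stated, and you additionally need these generators to be extractable from the compact representation, whereas Property \ref{property.3} only hands you the Hasse diagram of $J(L)$. Second, the asserted correspondence ``nested families $\leftrightarrow$ $k$ units of flow with congestion $t_e$ at the gadget for $e$'' is never constructed. What your reduction actually yields is: minimize $\sum_e\binom{\lambda(p_e)-\lambda(p_{e'})}{2}$ over antitone integer labelings $\lambda$ with box constraints, i.e., a separable convex objective in \emph{differences} of node labels subject to difference constraints. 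This problem is indeed polynomial-time solvable---it is the convex-cost integer dual network flow problem of Ahuja, Hochbaum, and Orlin, equivalently an L$^\natural$-convex minimization in the sense of discrete convex analysis \cite{murota2003}, and min-cost flow arises as its LP dual---but you must either cite such a result or construct the network and prove the cost-preserving bijection yourself. Neither is done, and this is precisely the content that the paper's Submodularity and Compactness Lemmas supply in its route; until that step is filled in, your argument does not yet establish the theorem.
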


Similarly to the approach of De Berg et al. \cite{de2023finding}, we achieve this result via a reduction to the \textit{submodular function minimization} problem (SFM) on a distributive lattice, which is known to be solvable in polynomial time \cite{grotschel2012geometric, iwata2001combinatorial, schrijver2000combinatorial}. More precisely, we show that the pairwise-sum measure (reformulated as a minimization objective) is a submodular function on a distributive lattice of appropriately ordered $k$-sized collections of feasible solutions. By applying this result, we obtain polynomial-time algorithms for finding maximally diverse $k$-sized collections of stable matchings, %market clearing price vectors, 
while also reproducing the findings of De Berg et al. for minimum $s$-$t$ cuts. 

For simplicity, we present our results in terms of the $d_\mathrm{sum}$ measure. However, in Section \ref{sec.5} we will show that the framework extends to at least two other measures of diversity: the \textit{coverage} ($d_\mathrm{cov}$) and \textit{absolute-difference} ($d_\mathrm{abs}$) measures. Lastly, we consider the problem of finding a largest set of pairwise disjoint solutions in problems whose feasible solution set satisfies properties \ref{property.1} and \ref{property.2} of the Theorem \ref{theorem.1}. In Section \ref{sec.6} we present an algorithm for this problem that bypasses the need for SFM. 

\paragraph{Roadmap.} In Section \ref{sec.2} we provide some preliminaries on lattice theory and submodular function minimization. Then, in Section \ref{sec.3} we present the proof of Theorem \ref{theorem.1}; that is, the reduction to SFM on a distributed lattice. Next, in Section \ref{sec.4}, we give examples of problems whose solution sets satisfy properties \ref{property.1}-\ref{property.3} of the theorem. In Section \ref{sec.5} we extend the result of Theorem \ref{theorem.1} to two other diversity measures. Then, in Section \ref{sec.6}, we present our oracle algorithm for finding a largest set of mutually disjoint solutions. We conclude in Section \ref{sec.7}. 

\section{Preliminaries} \label{sec.2}
In this section, we introduce the notation and some basic results used throughout the paper. For a more comprehensive discussion on sets and posets, we refer to \cite{harzheim2005ordered, stanley2011enumerative}, and for a detailed introduction to lattice theory, we refer to \cite{birkhoff1937rings, davey2002introduction, gratzer2009lattice}.

\paragraph{Sets, Multisets, and Tuples.} For $k \in \mathbb{N}$, we use $[k]$ to denote the set $\{1, \ldots, k\}$. The power set of a set $M$ is denoted by $2^M$. For any set $M$, we use the symbol $M^k$ for the cartesian product; $\{(a_1, a_2, \ldots, a_k) \; | \; a_i \in M\}$. The \textit{disjoint union} of two sets is simply their union, but with the additional information that the two sets have no elements in common. 

A \textit{multiset} is a set in which elements can appear multiple times. The number of times an element appears in a multiset is referred to as its \textit{multiplicity}. The \textit{sum} of two multisets $A$ and $B$, denoted by $A \uplus B$, is a multiset in which each element appears with a multiplicity equal to the sum of its multiplicity in $A$ and in $B$. We refer to a multiset of cardinality $k$ as a $k$-multiset. For a set $M$, we denote by $M_k$ a $k$-multiset where all elements are drawn from $M$. 

Unlike a multiset, where elements are unordered, a \textit{tuple} is a collection of possibly repeated elements that is ordered. A $k$-tuple is a tuple of $k$ elements. We denote a tuple by listing its elements within parenthesis and separated by commas; e.g., $(a, b, c, d)$. Note that the cartesian product of $k$ sets is a $k$-tuple. 

\paragraph{Posets.} A \textit{partially ordered set (poset)} $P = (X, \preceq_P)$ consists of a ground set $X$ along with a binary relation $\preceq_P$ on $X$ that satisfies reflexivity, antisymmetry, and transitivity. When the relation $\preceq_P$ is evident from the context, we often use the same notation for both the poset and its ground set. In case a poset is indexed by a subscript $i$, we use $\preceq_i$ to denote its order relation. 

The Hasse diagram $G(P)$ of $P$, is a directed graph where each element of $X$ is represented as a node, and an edge exists from element $a$ to element $b$ if $a \preceq_P b$ and no intermediate element $c$ satisfies $a \preceq_P c \preceq_P b$. Typically, vertices are arranged so that edge directions are implicitly understood as pointing upward. 

A poset $P^* = (X^*, \preceq_P^*)$ is called a \textit{subposet} of another poset $P = (X, \preceq_P)$ if (i) $X^* \subseteq X$ and (ii) for any $x, y \in X^*$ if $x \preceq_P^*y$ then $x \preceq_P y$. If the other direction of (ii) also holds, then we call $P^*$ the subposet of $P$ induced by $X^*$, and write $P^* = P[X^*]$. 

Given two posets $P = (X, \preceq_P)$ and $Q = (Y, \preceq_Q)$, their \textit{disjoint union} $P \sqcup Q$ is the disjoint union of $X$ and $Y$ together with relation $\preceq_{P+Q}$ where $x \preceq_{P+Q} y$ if one of the following conditions holds: (i) $x, y \in X$ and $x \preceq_P y$, or (ii) $x, y \in Q$ and $x \preceq_Q y$. Thus, the Hasse diagram of $P \sqcup Q$ consists of the disconnected Hasse diagrams of $P$ and $Q$ drawn together. 

A \textit{chain} is a subset of a poset in which every pair of elements is comparable, and an \textit{antichain} is a subset of a poset in which no two (distinct) elements are comparable. For any two elements $x$ and $y$ in a chain $E$ with order relation $\preceq_E$, we say that $x$ (resp. $y$) is a \textit{chain-predecessor} (\textit{chain-successor}) of $y$ if $x \preceq_E y$. A poset is called a \textit{chain decomposition} if the poset can be expressed as the disjoint union of chains. 

For a poset $P = (X, \preceq_P)$, an \textit{ideal} is a set $U \subseteq X$ where $u \in U$ implies that $v \in U$ for all $v \preceq_P u$. In terms of its Hasse diagram $G(P) = (X, E)$, a subset $U$ of $X$ is an ideal if and only if there is no incoming edge from $U$. We use $\mathcal{D}(P)$ to denote the family of all ideals of $P$. If $x \preceq_P y$ in the poset, then the closed interval from $x$ to $y$, denoted by $[x, y]$, is the poset with ground set $\{z \in X \; | \; x \preceq_P z \preceq_P y\}$ together with relation $\preceq_P$.

Now we introduce the notion of \textit{componentwise order}. Let $(X_i, \preceq_i)$, $i \in [r]$ be posets, with $r$ a positive integer, and let $Y \subseteq X_1 \times \dots \times X_r$. The componentwise order on $Y$ is an order relation $\preceq$ defined as follows: Given two tuples $(a_1, a_2, \ldots, a_r)$ and $(b_1, b_2, \ldots, b_r) \in Y$, we write $(a_1, a_2, \ldots, a_r) \preceq (b_1, b_2, \ldots, b_r)$ iff $a_i \preceq_i b_i$ for all $i \in [r]$. Note that we drop the subscript in $\preceq$ whenever the order relation is a component-wise order. If the posets $(X_i, \preceq_i)$, $i \in [r]$, are all the same poset $(X, \preceq)$, we use $\preceq^r$ to denote the componentwise order on $X^r$ and refer to it as the \textit{product order}. 

\paragraph{Lattices.} A \textit{lattice} is a poset $L = (X, \preceq)$ in which any two elements $x, y \in X$ have a (unique) greatest lower bound, or \textit{meet}, denoted by $x \wedge y$, as well as a (unique) least upper bound, or \textit{join}, denoted by $x \vee y$. We can uniquely identify $L$ by the tuple $(X, \vee, \wedge)$. The \textit{bottom}, or minimum, element in the lattice $L$ is denoted by $0_L := \bigwedge_{x \in L} x$. Likewise, the \textit{top}, or maximum, element of $L$ is given by $1_L := \bigvee_{x \in L} x$. 
A lattice $L'$ is a \textit{sublattice} of $L$ if $L' \subseteq L$ and $L'$ has the same meet and join operations as $L$. 
In this paper we only consider \textit{distributive lattices}, which are lattices whose meet and join operations satisfy distributivity; that is, $x \vee (y \wedge z) = (x \vee y) \wedge (x \vee z)$ and $x \wedge (y \vee z) = (x \wedge y) \vee (x \wedge z)$, for any $x,y,z \in L$. Note that a sublattice of a distributive lattice is also distributive. Every chain is a distributive lattice with $\max$ as join ($\vee$) and $\min$ as meet ($\wedge$).

Suppose we have a collection $L_1, \ldots, L_k$ of lattices $L_i = (X_i, \vee_i, \wedge_i)$ with $i \in [k]$. The \textit{(direct) product lattice} $L_1 \times \ldots \times L_k$ is a lattice with ground set $X = \{(x_1, \ldots, x_k) \, : \, x_i \in L_i\}$ and join $\vee$ and meet $\wedge$ operations acting component-wise; that is, $x \vee y = (x_1 \vee_1 y_1, \ldots, x_k \vee_k y_k)$ and $x \wedge y = (x_1 \wedge_1 y_1, \ldots, x_k \wedge_k y_k)$ for any $x, y \in X$. The lattice $L^k$ is the product lattice of $k$ copies of $L$, and is called the $k$-th \textit{power} of $L$. If $L$ is a distributive lattice, then $L^k$ is also distributive. 

A crucial notion we will need is that of \textit{join-irreducibles}. An element $x$ of a lattice $L$ is called \textit{join-irreducible} iff $x \neq 0_L$ and it cannot be expressed as the join of two elements $y, z \in L$ with $y, z \neq x$. In a lattice, any element is equal to the join of all join-irreducible elements lower than or equal to it. The set of join-irreducible elements of $L$ is denoted by $J(L)$. Note that $J(L)$ is a poset whose order is inherited from $L$. Due to Birkhoff's representation theorem---a fundamental tool for studying distributive lattices---every distributive lattice $L$ is isomorphic to the lattice $\mathcal{D}(J(L))$ of ideals of its poset of join-irreducibles, with union and intersection as join and meet operations. Hence, a distributive lattice $L$ can be uniquely recovered from its poset $J(L)$.

\begin{theorem}[Birkhoff's Representation Theorem \cite{birkhoff1937rings}] \label{thm.birkhoff}
Any distributive lattice $L$ can be represented as the poset of its join-irreducibles $J(L)$, with the order induced from $L$.
\end{theorem}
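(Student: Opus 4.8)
The plan is to exhibit an explicit isomorphism $\phi : L \to \mathcal{D}(J(L))$ and verify that it is a bijective lattice homomorphism. The natural candidate is the map sending each element to the set of join-irreducibles beneath it,
\[
\phi(x) = \{\, j \in J(L) \;:\; j \preceq x \,\}.
\]
First I would check that $\phi$ is well-defined, i.e. that $\phi(x)$ is genuinely an ideal of the poset $J(L)$: if $j \in \phi(x)$ and $j' \preceq j$ with $j' \in J(L)$, then $j' \preceq x$ by transitivity, so $j' \in \phi(x)$. Thus $\phi$ indeed maps $L$ into $\mathcal{D}(J(L))$.

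Next I would establish injectivity. The key ingredient is the decomposition fact already recorded above: every element satisfies $x = \bigvee \phi(x)$, the join of all join-irreducibles below it. Order-preservation of $\phi$ is immediate ($x \preceq y$ implies $\phi(x) \subseteq \phi(y)$), and the decomposition yields the converse: if $\phi(x) \subseteq \phi(y)$ then $x = \bigvee \phi(x) \preceq \bigvee \phi(y) = y$. Hence $\phi$ is an order-embedding, and in particular $\phi(x) = \phi(y)$ forces $x = y$.

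The crux of the argument---and the only place distributivity is truly needed---is surjectivity. Given an arbitrary ideal $I \in \mathcal{D}(J(L))$, set $x := \bigvee I$. The inclusion $I \subseteq \phi(x)$ is clear. For the reverse inclusion, take any join-irreducible $j \preceq x = \bigvee I$. Using (finite) distributivity, I would write
\[
j = j \wedge \Bigl(\bigvee_{i \in I} i\Bigr) = \bigvee_{i \in I}\, (j \wedge i).
\]
Because $j$ is join-irreducible, one of the terms must equal $j$, so $j = j \wedge i$ for some $i \in I$, i.e. $j \preceq i$; since $I$ is a down-set, $j \in I$. Therefore $\phi(x) = I$ and $\phi$ is onto.

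Finally, since $\phi$ is an order-isomorphism between two lattices, it automatically preserves meets and joins and is thus a lattice isomorphism; alternatively one verifies directly that $\phi(x \wedge y) = \phi(x) \cap \phi(y)$ (trivial) and $\phi(x \vee y) = \phi(x) \cup \phi(y)$ (again invoking distributivity to split a join-irreducible below $x \vee y$ into one below $x$ or below $y$). I expect surjectivity to be the main obstacle, as it is precisely where a non-distributive lattice would fail: the identity $j \wedge \bigvee_{i} i = \bigvee_{i} (j \wedge i)$ breaks down, and a join-irreducible below $\bigvee I$ need not lie below any single member of $I$.
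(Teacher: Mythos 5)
The paper never proves this statement: it is quoted as a classical result with a citation to Birkhoff's 1937 paper, and the surrounding text only records its consequences (the isomorphism $L \cong \mathcal{D}(J(L))$ and the compact representation $G(J(L))$). So there is no "paper proof" to compare against; your proposal must be judged on its own. It is the standard textbook argument, and it is correct: the map $\phi(x) = \{j \in J(L) : j \preceq x\}$ lands in $\mathcal{D}(J(L))$ by transitivity, is an order-embedding because $x = \bigvee \phi(x)$, and is surjective because a join-irreducible $j \preceq \bigvee I$ satisfies $j = \bigvee_{i \in I}(j \wedge i)$ by distributivity and hence lies below some single $i \in I$. You also correctly identify surjectivity (and the splitting of a join-irreducible across a join) as the only places where distributivity is essential.

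One caveat you should make explicit: every key step of your argument is finitary. The decomposition $x = \bigvee \phi(x)$, the distribution of $j \wedge (\cdot)$ over $\bigvee_{i \in I} i$, and the conclusion that a join-irreducible equal to a join of finitely many elements must equal one of them all require $L$ (hence $I$) to be finite, or at least to satisfy the descending chain condition. The theorem as stated in the paper ("any distributive lattice") is false for infinite lattices --- the real interval $[0,1]$ viewed as a chain is distributive but has no join-irreducible elements at all, so $\mathcal{D}(J(L))$ collapses; representing infinite distributive lattices requires Priestley or Stone duality. In the paper's setting this is harmless, since every lattice considered lives inside $2^E$ with $|E| = n$ finite, and indeed the paper's own preliminary claim that every lattice element is the join of the join-irreducibles below it carries the same implicit finiteness assumption. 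A careful write-up should state "finite distributive lattice" (as Birkhoff's theorem is usually phrased) or flag where finiteness is used.
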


For a distributive lattice $L$, this implies that there exists a \textit{compact representation} of $L$ as the Hasse diagram $G(J(L))$ of its poset of join-irreducibles. This is useful when designing algorithms, as the size of $G(J(L))$ is $O(|J(L)|^2)$, while $L$ can have as many as $2^{|J(L)|}$ elements. Keep in mind, however, that Theorem \ref{thm.birkhoff} only guarantees the existence of such a compact representation; it does not provide a method to efficiently find the set $J(L)$.   

\begin{figure}
    \begin{center}
        \begin{tikzpicture}[scale=1.0, transform shape, baseline={(0,0)}]
    \tikzset{edge/.style = {->,> = latex}}
    
    \node[] (label) at (0, -0.75) {$L$};
    \node[] (n1) at (0, 0) {$x_1$};
    \node[] (n2) at (-1, 1) {\color{blue}{$x_2$}};
    \node[] (n3) at (1, 1) {\color{blue}{$x_3$}};
    \node[] (n4) at (0, 2) {$x_4$};
    \node[] (n5) at (2, 2) {\color{blue}{$x_5$}};
    \node[] (n6) at (1, 3) {$x_6$};
    
    \draw[] (n1) -- (n2);
    \draw[] (n1) -- (n3);
    \draw[] (n2) -- (n4);
    \draw[] (n3) -- (n4);
    \draw[] (n3) -- (n5);
    \draw[] (n4) -- (n6);
    \draw[] (n5) -- (n6);
    \end{tikzpicture}
    \hspace{1.25cm}
    \begin{tikzpicture}[scale=1.0, transform shape, baseline={(0,0)}]
    \tikzset{edge/.style = {->,> = latex}}
    
    \node[] (label) at (0, -0.75) {$\mathcal{D}(J(L))$};
    \node[] (n1) at (0, 0) {$\emptyset$};
    \node[] (n2) at (-1, 1) {\color{blue}{$\{x_2\}$}};
    \node[] (n3) at (1, 1) {\color{blue}{$\{x_3\}$}};
    \node[] (n4) at (0, 2) {$\{x_2, x_3\}$};
    \node[] (n5) at (2, 2) {\color{blue}{$\{x_3, x_5\}$}};
    \node[] (n6) at (1, 3) {$\{x_2, x_3, x_5\}$};
    
    \draw[] (n1) -- (n2);
    \draw[] (n1) -- (n3);
    \draw[] (n2) -- (n4);
    \draw[] (n3) -- (n4);
    \draw[] (n3) -- (n5);
    \draw[] (n4) -- (n6);
    \draw[] (n5) -- (n6);
    \end{tikzpicture}
    \hspace{1.25cm}
    \begin{tikzpicture}[scale=1.0, transform shape, baseline={(0,0)}]
    \tikzset{edge/.style = {->,> = latex}}
    
    \node[] (label) at (-0.75, -0.75) {$G(J(L))$};
    \node[] (n5) at (0, 2) {\color{black}{$x_5$}};
    \node[] (n2) at (-1.5, 0.5) {\color{black}{$x_2$}};
    \node[] (n3) at (0, 0.5) {\color{black}{$x_3$}};
    
    \draw[] (n5) -- (n3);
    \end{tikzpicture}
    \end{center}
    \vspace*{-4mm}
    \caption{Example of Birkhoff's representation theorem for distributive lattices. The left is a distributive lattice $L$, the middle is the isomorphic lattice $\mathcal{D}(J(L))$ of ideals of join-irreducibles of $L$, and the right shows the compact representation $G(J(L))$ of $L$. The join irreducible elements of $L$ and $\mathcal{D}(J(L))$ are highlighted in blue.}
    \label{fig:compactRepresentation}
\end{figure}
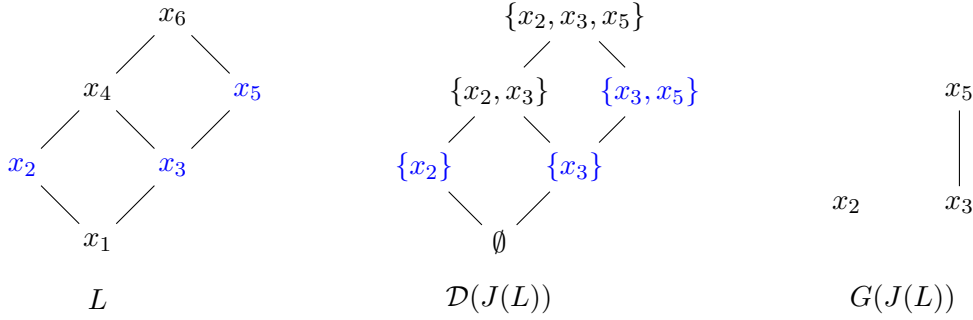

\paragraph{Submodular Function Minimization.} Let $f:X \rightarrow \mathbb{R}$ be a real-valued function on a lattice $L = (X, \preceq)$. We say that $f$ is \textit{submodular} on $L$ if 
\begin{equation} \label{eq:submodular}
    f(x \wedge y) + f(x \vee y) \leq f(x) + f(y), \quad \text{for all } x,y \in X\text{.}
\end{equation}
If $-f$ is submodular on $L$, then we say that $f$ is \textit{supermodular}. If $f$ satisfies Eq. \eqref{eq:submodular} with equality we say that $f$ is \textit{modular} in $L$. 
The \textit{submodular function minimization} problem (SFM) on lattices is, given a submodular function $f$ on $L$, to find an element $x \in L$ such that $f(x)$ is minimum. An important fact that we use in our work is that the sum of submodular functions is also submodular. Also, note that minimizing $f$ is equivalent to maximizing $-f$. 

Consider the special case of a lattice whose ground set $X \subseteq 2^U$ is a family of subsets of a set $U$, and meet and join are intersection and union of sets, respectively. It is known that any function $f$ satisfying \eqref{eq:submodular} on such a lattice can be minimized in polynomial time in $|U|$ \cite{grotschel2012geometric, iwata2001combinatorial, schrijver2000combinatorial}. This holds when assuming that for any $Y \subseteq U$, the value of $f(Y)$ is given by an \textit{evaluation oracle} that also runs in polynomial time in $|U|$. The current fastest algorithm for SFM on sets runs in $O(|U|^3 T_\mathrm{EO})$ time \cite{jiang2021minimizing}, where $T_\mathrm{EO}$ is the time required for one call to the evaluation oracle. 

Due to Birkhoff's theorem, the seemingly more general case of SFM on distributive lattices can be reduced to SFM on sets (see e.g. \cite[Sec. 3.1]{bolandnazar2015note} for details). Hence, any polynomial-time algorithm for SFM on sets can be used to minimize a submodular function $f$ defined on a distributive lattice $L$. An important remark is that the running time depends on the size of the set $J(L)$ of join-irreducibles.

\begin{theorem}[{\cite[Note 10.15]{murota2003} and \cite[Thm.1]{markowsky2001overview}}] \label{theorem:sfmDistributiveLattices}
For any distributive lattice $L$, given by its poset of join-irreducibles $J(L)$, a submodular function $f: L \rightarrow \mathbb{R}$ can be minimized in polynomial time in $|J(L)|$, provided a polynomial time evaluation oracle for $f$.
\end{theorem}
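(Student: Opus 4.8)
The plan is to reduce minimization of $f$ over $L$ to submodular function minimization over the Boolean lattice $2^{J(L)}$, where the algorithm quoted above applies with ground set $U = J(L)$. The starting point is Birkhoff's theorem (Theorem \ref{thm.birkhoff}): it yields a lattice isomorphism $\phi \colon L \to \mathcal{D}(J(L))$ sending each $x \in L$ to the ideal $\{\, p \in J(L) : p \preceq x \,\}$ of its poset of join-irreducibles, under which $\vee$ becomes $\cup$ and $\wedge$ becomes $\cap$. First I would check that this isomorphism transports submodularity: since $\phi(x \vee y) = \phi(x) \cup \phi(y)$ and $\phi(x \wedge y) = \phi(x) \cap \phi(y)$, the function $\hat f := f \circ \phi^{-1}$ satisfies $\hat f(S \cup T) + \hat f(S \cap T) \le \hat f(S) + \hat f(T)$ for all ideals $S, T$, so $\hat f$ is submodular on the lattice $\mathcal{D}(J(L))$. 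Given the compact representation $G(J(L))$, both $\phi$ and $\phi^{-1}$ are computable in time polynomial in $|J(L)|$, so one evaluation of $\hat f$ costs one call to the oracle for $f$ plus polynomial overhead.

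It then remains to minimize $\hat f$ over $\mathcal{D}(P)$, the family of down-sets of the poset $P := J(L)$. This family is a \emph{ring family} over the ground set $P$: it is closed under $\cup$ and $\cap$, and membership is described compactly by the precedence constraints read off from $G(P)$, namely $S$ is an ideal iff $u \in S \Rightarrow v \in S$ for every covering pair $v \prec u$. To convert this constrained minimization into an unconstrained one over $2^P$, I would attach a penalty for each violated precedence. For a covering pair $v \prec u$ set $g_{u,v}(S) = \mathbf{1}[\, u \in S,\ v \notin S \,]$; writing the indicator as a function $\theta$ of the two bits $(\mathbf{1}[u\in S],\, \mathbf{1}[v\in S])$, one has $\theta(0,0)+\theta(1,1) = 0 \le 1 = \theta(0,1)+\theta(1,0)$, which is exactly the condition for a pairwise Boolean function to be submodular on $2^P$. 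Hence, for a suitably large weight $M$, the function
\begin{equation*}
    g(S) \;=\; \hat f(S) \;+\; M \sum_{v \prec u} g_{u,v}(S)
\end{equation*}
is a sum of submodular functions, hence submodular on $2^P$, and every minimizer of $g$ is an ideal, since the penalty is $0$ on ideals and at least $M$ otherwise; on ideals $g$ and $\hat f$ agree.

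Finally I would invoke the polynomial-time algorithm for submodular function minimization on sets with ground set $U = P$ and the oracle for $g$; by the $O(|U|^3\, T_\mathrm{EO})$ bound quoted above this runs in time polynomial in $|J(L)|$ times the (polynomial) cost of one oracle call. Mapping the resulting minimizing ideal back through $\phi^{-1}$ gives an element of $L$ minimizing $f$, as required.

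The main obstacle is the reduction in the second paragraph: passing from the sublattice $\mathcal{D}(P)$ to the full Boolean lattice $2^P$ without destroying submodularity and without the optimum escaping the feasible region. Two points need care. (i) The precedence penalties must themselves be submodular on $2^P$, which is what the pairwise check above secures. (ii) The function $\hat f$ is a priori defined only on ideals, so I must either supply a submodularity-preserving extension of $\hat f$ to all of $2^P$ or, more cleanly, appeal to the classical result that submodular minimization over a ring family given by a compact digraph description and a value oracle is polynomial-time; that algorithm queries $f$ only on members of the family, sidestepping the extension issue entirely. Choosing an explicit threshold $M$ (e.g.\ larger than the spread $\max \hat f - \min \hat f$ over the lattice) and verifying that the composite oracle remains polynomial-time are then routine; this is the level of detail at which the cited statements of Murota and Markowsky operate.
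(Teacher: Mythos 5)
The statement you are proving is one the paper itself does not prove: Theorem~\ref{theorem:sfmDistributiveLattices} is imported from Murota and Markowsky, and the paper's entire justification is the sentence preceding it---Birkhoff's theorem reduces SFM on a distributive lattice to SFM on sets, with details deferred to \cite[Sec.~3.1]{bolandnazar2015note}. Your first paragraph carries out exactly that sketch, correctly: the Birkhoff isomorphism $\phi\colon L \to \mathcal{D}(J(L))$ exchanges $\vee,\wedge$ with $\cup,\cap$, so $\hat f = f \circ \phi^{-1}$ is submodular on the lattice of ideals, and evaluating $\hat f$ costs one oracle call plus polynomial overhead. So in outline your route is the paper's route.

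The gap is in your second paragraph, and it is the one you only half-close at the end. The expression $g(S) = \hat f(S) + M\sum_{v \prec u} g_{u,v}(S)$ is not a definition: $\hat f(S) = f(\phi^{-1}(S))$ has no meaning when $S$ is not an ideal, and non-ideal $S$ are exactly the sets the penalty is there to handle. The submodularity of the pairwise penalties (your check is fine) buys nothing until $\hat f$ itself is replaced by a submodular function on all of $2^{J(L)}$, and producing such an extension is precisely the nontrivial content of the ring-family reduction---it cannot be deferred to a remark. The standard repair is to evaluate at the down-closure: set $g(S) = \hat f(d(S)) + M\,|d(S)\setminus S|$, where $d(S)$ is the smallest ideal containing $S$ (computable from $G(J(L))$ in polynomial time). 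This needs a genuine argument, because $\hat f \circ d$ alone is \emph{not} submodular: one has $d(S\cup T) = d(S)\cup d(T)$ but only $d(S\cap T) \subseteq d(S)\cap d(T)$, creating a defect $\hat f(d(S\cap T)) - \hat f(d(S)\cap d(T))$ that can be positive. One must observe that this defect arises only when $|d(S)\cap d(T)| - |d(S\cap T)| \geq 1$, in which case the penalty terms contribute slack at least $M$, and then choose $M$ exceeding the spread $\max \hat f - \min \hat f$; with that coupling, $g$ is submodular on $2^{J(L)}$, its minimizers are ideals (the penalty is at least $M$ off the ideals), and $g = \hat f$ on ideals. Your alternative escape---citing that SFM over a ring family given by a digraph and a value oracle is polynomial-time---is legitimate and is in effect what the paper does; but if you take that route, your penalty construction is no longer a proof step at all, since the cited result already subsumes the entire reduction you set out to perform.
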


\section{The Reduction to SFM} \label{sec.3}

In this section, we prove Theorem \ref{theorem.1} by reducing \textsc{Max-Sum $k$-Diverse Solutions} to SFM on a distributive lattice, under the assumption that the feasible solution set satisfies properties \ref{property.1}–\ref{property.3} of the theorem. Our approach closely follows the proof by De Berg et al. \cite{de2023finding} for minimum $s$-$t$ cuts, with modifications to generalize the results and streamline certain arguments. 

The proof is divided into four parts, each supported by a corresponding lemma. The Distributivity Lemma (Lemma \ref{lemma.1}) shows that the set of \textit{left-right ordered} $k$-tuples of feasible solutions, with product order, defines a distributive lattice $L^*$. The Cost-Equivalence Lemma (Lemma \ref{lemma.2}) further shows that optimizing the diversity over this lattice is the same as optimizing over the original set $\Gamma_k$ of $k$-multisets of $\Gamma$. Hence, we can restrict ourselves to the elements of $L^*$. Next, the Submodularity Lemma (Lemma \ref{lemma.3}) establishes that the pairwise-sum measure (reformulated as a minimization objective) is a submodular function on $L^*$. Finally, the Compactness Lemma (Lemma \ref{lemma.4}) ensures that a compact representation of $L^*$ can be constructed in polynomial time.  

We begin by establishing some consequences of properties \ref{property.1}–\ref{property.3} of Theorem \ref{theorem.1}. Consider a ground set $E$ and a set of feasible solutions $\Gamma \subseteq 2^E$ for which the properties hold. By property \ref{property.1}, we know that there is a poset associated to $E$ that is the disjoint union of $r$ chains $(E_i, \preceq_i)$, $i \in [r]$, and that each feasible solution $X \in \Gamma$ contains exactly one element from each chain, meaning $\Gamma \subseteq E_1 \times \dots \times E_r$. Then, the set $\Gamma$, with componentwise order $\preceq$, forms a poset of feasible solutions $L = (\Gamma, \preceq)$. Furthermore, by properties \ref{property.1} and \ref{property.2}, this poset is a distributive lattice, with join ($\vee$) and meet ($\wedge$) given by componentwise maximum and minimum. Let us now consider the poset $(\Gamma^k, \preceq^k)$ of $k$-tuples of feasible solutions, with product order $\preceq^k$.  We say that a $k$-tuple $C = (X_1, X_2, \ldots, X_k)$ in $\Gamma^k$ is in \textit{left-right order} if $X_i \preceq X_j$ for all $i < j$. That is, the feasible solutions in $C$ are arranged in non-decreasing order according to relation $\preceq$. Let $\Gamma_{\mathrm{lr}}^k \subseteq \Gamma^k$ denote the subset of left-right ordered $k$-tuples.

\paragraph{Part 1: Distributivity.} 
We now establish the first of the four lemmas. 

\begin{lemma}[Distributivity Lemma] \label{lemma.1}
    The poset $L^* = (\Gamma_{\mathrm{lr}}^k, \preceq^k)$ is a distributive lattice. 
\end{lemma}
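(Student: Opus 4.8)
The plan is to realize $L^*$ as a sublattice of the distributive power lattice $\Gamma^k$ and then invoke the fact, recorded in the preliminaries, that any sublattice of a distributive lattice is itself distributive. By properties \ref{property.1} and \ref{property.2}, the set $\Gamma$ with componentwise order $\preceq$ is a distributive lattice whose join and meet are componentwise maximum and minimum; consequently its $k$-th power $\Gamma^k$, equipped with the product order $\preceq^k$ and coordinatewise meet and join, is again a distributive lattice. Since $\Gamma_{\mathrm{lr}}^k \subseteq \Gamma^k$, it suffices to show that the left-right ordered tuples form a sublattice, i.e., that $\Gamma_{\mathrm{lr}}^k$ is closed under the meet $\wedge$ and join $\vee$ of $\Gamma^k$.

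To verify closure, I would take two left-right ordered tuples $C = (X_1, \ldots, X_k)$ and $D = (Y_1, \ldots, Y_k)$, so that $X_1 \preceq \cdots \preceq X_k$ and $Y_1 \preceq \cdots \preceq Y_k$, and examine their coordinatewise join $C \vee D = (X_1 \vee Y_1, \ldots, X_k \vee Y_k)$ and meet $C \wedge D = (X_1 \wedge Y_1, \ldots, X_k \wedge Y_k)$. The only thing to check is that these tuples remain in left-right order, and this reduces to the monotonicity of the lattice operations in $\Gamma$. For the join, from $X_i \preceq X_{i+1}$ and $Y_i \preceq Y_{i+1}$ one deduces that $X_{i+1} \vee Y_{i+1}$ is an upper bound of both $X_i$ and $Y_i$, so the least upper bound satisfies $X_i \vee Y_i \preceq X_{i+1} \vee Y_{i+1}$; the dual argument, using that $X_i \wedge Y_i$ is a lower bound of both $X_{i+1}$ and $Y_{i+1}$, yields $X_i \wedge Y_i \preceq X_{i+1} \wedge Y_{i+1}$. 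Thus both $C \vee D$ and $C \wedge D$ are again left-right ordered, establishing that $\Gamma_{\mathrm{lr}}^k$ is closed under the operations of $\Gamma^k$.

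With closure in hand, $(\Gamma_{\mathrm{lr}}^k, \preceq^k)$ inherits the meet and join of the distributive lattice $\Gamma^k$, and since a sublattice of a distributive lattice is distributive, we conclude that $L^*$ is a distributive lattice.

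I anticipate no serious obstacle: the entire argument rests on the standard monotonicity of $\wedge$ and $\vee$, and the real content is merely recognizing that left-right order is preserved coordinatewise. The one point requiring care is to confirm that the meet and join on $\Gamma_{\mathrm{lr}}^k$ are genuinely those computed coordinatewise in $\Gamma^k$, rather than some intrinsic operation on $\Gamma_{\mathrm{lr}}^k$ that might differ; but this is exactly what the closure property guarantees, which is why I frame the proof as verifying the sublattice condition rather than constructing meets and joins on $\Gamma_{\mathrm{lr}}^k$ from scratch.
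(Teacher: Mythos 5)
Your proof is correct and follows essentially the same route as the paper: view $L^*$ as a sublattice of the distributive power lattice $L^k = (\Gamma^k, \preceq^k)$ and invoke the fact that a sublattice of a distributive lattice is distributive. In fact, your write-up is more complete than the paper's, which asserts that $\Gamma_{\mathrm{lr}}^k \subseteq \Gamma^k$ makes $L^*$ a sublattice without verifying closure under $\vee$ and $\wedge$; your monotonicity argument (that coordinatewise joins and meets of left-right ordered tuples remain left-right ordered) is exactly the missing verification, and it is correct.
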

\begin{proof} 
    By property \ref{property.2} of Theorem \ref{theorem.1}, $L = (\Gamma, \preceq)$ is a distributive lattice. 
    Now, let $L^k = (\Gamma^k, \preceq^k)$ be the $k$-th power of $L$. We know that the product of distributive lattices is distributive; hence $L^k$ is a distributive lattice. Moreover, since $\Gamma_{\mathrm{lr}}^k \subseteq \Gamma^k$, the poset $L^*$ is a sublattice of $L^k$. As any sublattice of a distributive lattice is itself distributive, the lemma follows. 
\end{proof}

\paragraph{Part 2: Cost equivalence.} 
Following the proof of the Distributivity Lemma, we now establish an equivalence between the costs of maximum diversity solutions in the sets $\Gamma^k$ and $\Gamma_{\mathrm{lr}}^k$. %For this, we have the following lemma. 
(Note that this is the same as establishing the equivalence between the sets $\Gamma_k$ and $\Gamma_{\mathrm{lr}}^k$, since a $k$-multiset over $\Gamma$ has the same diversity as each of its up to $n!$ permutations---each a $k$-tuple---in $\Gamma^k$.) 
For this, we use the notion of element multiplicity. Let $C \in \Gamma^k$ be a $k$-tuple of solutions. The multiplicity $\mu_e(C)$ of an element $e \in E$, with respect to $C$, is the number of feasible solutions in $C$ that contain $e$. Since a feasible solution contains no repeated elements, $\mu_e(C)$ is also the number of times $e$ appears in the multiset sum of the solutions in $C$. An immediate consequence of property \ref{property.1} of Theorem \ref{theorem.1} is the following.

% p.605 main notebook and/or p.621
\begin{observation} \label{obs.1}
    For any two $X, Y \in L$, we have $X \uplus Y = (X \wedge Y) \uplus (X \vee Y)$.
\end{observation}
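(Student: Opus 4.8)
The plan is to verify the claimed multiset identity by tracking multiplicities element by element, exploiting the chain-decomposition structure guaranteed by Property \ref{property.1}. Since two multisets are equal precisely when every $e \in E$ has the same multiplicity in both, it suffices to compare $\mu_e(X \uplus Y)$ with $\mu_e\big((X \wedge Y) \uplus (X \vee Y)\big)$ for each element $e$.

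First I would invoke Property \ref{property.1} to write $X = (x_1, \ldots, x_r)$ and $Y = (y_1, \ldots, y_r)$ with $x_i, y_i \in E_i$, so that each solution contributes exactly one element to each chain $E_i$. This is the step that licenses everything that follows, since it lets me describe the lattice operations componentwise: recall that in $L$ the meet and join are the componentwise minimum and maximum with respect to the chain orders $\preceq_i$, so the $i$-th component of $X \wedge Y$ is $\min_i(x_i, y_i)$ and the $i$-th component of $X \vee Y$ is $\max_i(x_i, y_i)$. Consequently, the right-hand multiset contains, from chain $E_i$, exactly the two elements $\min_i(x_i, y_i)$ and $\max_i(x_i, y_i)$, whereas the left-hand multiset contains $x_i$ and $y_i$.

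The crux is then a per-chain observation: because each $E_i$ is totally ordered, the unordered pair $\{x_i, y_i\}$ coincides, as a multiset, with $\{\min_i(x_i, y_i), \max_i(x_i, y_i)\}$. Indeed, if $x_i \preceq_i y_i$ the two pairs are literally equal, and symmetrically if $y_i \preceq_i x_i$; in the degenerate case $x_i = y_i$ both sides contribute that element with multiplicity two. Summing these identities over all $r$ chains yields $X \uplus Y = (X \wedge Y) \uplus (X \vee Y)$, or equivalently $\mu_e(X \uplus Y) = \mu_e\big((X \wedge Y) \uplus (X \vee Y)\big)$ for every $e \in E$.

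There is no genuine obstacle here; the statement is an essentially immediate consequence of total order within each chain, and the only point demanding care is the appeal to Property \ref{property.1}, which guarantees that $X$ and $Y$ decompose cleanly into one element per chain and thereby justifies both the componentwise description of $\wedge$ and $\vee$ and the reduction to a chain-by-chain comparison.
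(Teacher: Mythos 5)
Your proof is correct and follows essentially the same route as the paper's: both decompose $X$ and $Y$ componentwise via Property \ref{property.1}, identify $\wedge$ and $\vee$ as componentwise $\min$ and $\max$, and observe that within each totally ordered chain the pair $\{x_i, y_i\}$ is just a rearrangement of $\{\min(x_i,y_i), \max(x_i,y_i)\}$. Your per-chain multiset identity is a slightly cleaner packaging of the paper's case analysis (equal, distinct, absent), but the underlying argument is the same.
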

\begin{proof}
    Let $X = (x_1, \ldots, x_r)$ and $Y = (y_1, \ldots, y_r)$, with $x_i, y_i \in E_i$ and $i \in [r]$, where $E_i$ is the $i$-th chain in the chain decomposition of $E$. By property \ref{property.1}, we know that the meet and join of two elements $L$ are given by componentwise minimum and maximum. That is, \[X \wedge Y = (\min(x_1, y_1), \ldots, \min(x_r, y_r)), \quad \text{and} \quad
    X \vee Y = (\max(x_1, y_1), \ldots, \max(x_r, y_r)).\]
    Hence, if $x_i = y_i$, the element $x_i$ appears twice in the multiset sum $X \uplus Y$ and twice in the sum $(X \wedge Y) \uplus (X \vee Y)$. If $x_i \neq y_i$, then $x_i$ appears in either the join or the meet of $X$ and $Y$, and similarly for element $y_i$. Finally, if an element $e \in E$ is not in $X \cup Y$, then it is neither the minimum nor maximum of any entry and therefore cannot appear in $(X \wedge Y) \cup (X \vee Y)$. Since this holds for each $i \in [k]$, the observation is proven. 
\end{proof}

Observation \ref{obs.1} implies that the join and meet operations of the lattice $L$ of feasible solutions preserve element multiplicities. This is relevant in the proof of Lemma \ref{lemma.2}. 
Consequently, any $k$-tuple in $\Gamma^k$ can be reordered into a left-right ordered form while preserving element multiplicities, as stated in the following claim. 

\begin{claim} \label{claim.0}
    For every $k$-tuple $C \in \Gamma^k$ there exists a left-right ordered $k$-tuple $\hat{C} \in L^k_\mathrm{lr}$ such that $\mu_e(C) = \mu_e(\hat{C})$ for all $e \in E$.
\end{claim}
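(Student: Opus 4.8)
The plan is to prove this by an exchange (bubble-sort) argument built directly on Observation \ref{obs.1}. The key operation is the following: given a $k$-tuple $C = (X_1, \ldots, X_k) \in \Gamma^k$ that is not yet left-right ordered, pick any pair of positions $i < j$ with $X_i \not\preceq X_j$, and replace the pair $(X_i, X_j)$ by $(X_i \wedge X_j,\, X_i \vee X_j)$, leaving all other entries untouched. Since $L = (\Gamma, \preceq)$ is a lattice, both $X_i \wedge X_j$ and $X_i \vee X_j$ lie in $\Gamma$, so feasibility is preserved; and since $X_i \wedge X_j \preceq X_i \vee X_j$ always holds, this particular pair becomes correctly ordered. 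Crucially, Observation \ref{obs.1} gives $X_i \uplus X_j = (X_i \wedge X_j) \uplus (X_i \vee X_j)$, so the total multiset sum $\biguplus_{\ell} X_\ell$ is unchanged by the operation, and hence $\mu_e$ is preserved for every $e \in E$. Thus any tuple reachable from $C$ by a sequence of such operations has exactly the same element multiplicities as $C$, and if the process halts it halts at a left-right ordered tuple $\hat C$, which is precisely what the claim demands.

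The substance of the proof is therefore a termination argument, and I would carry it out with a potential function. Using property \ref{property.1}, write $E = E_1 \sqcup \cdots \sqcup E_r$ with each $E_i$ a chain, assign to each element $e \in E_i$ its rank $\rho(e)$ within that chain, and define $v(X) = \sum_{i=1}^r \rho(x_i)$ for a solution $X = (x_1, \ldots, x_r)$. Because meet and join act as componentwise minimum and maximum, $v$ is \emph{modular}: $v(X \wedge Y) + v(X \vee Y) = v(X) + v(Y)$, and $v$ is strictly monotone ($X \prec Y$ implies $v(X) < v(Y)$). I would then take as potential the position-weighted sum $\Phi(C) = \sum_{\ell=1}^k \ell \cdot v(X_\ell)$.

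A short computation shows that applying the operation at positions $i < j$ changes $\Phi$ by exactly $(j - i)\bigl(v(X_i) - v(X_i \wedge X_j)\bigr)$. Modularity of $v$ makes the contribution from the two affected positions telescope cleanly, and since $X_i \wedge X_j \preceq X_i$ the factor $v(X_i) - v(X_i \wedge X_j)$ is nonnegative, and it is strictly positive exactly when $X_i \wedge X_j \neq X_i$, i.e. exactly when $X_i \not\preceq X_j$ — which is the condition under which we chose to apply the operation. Hence each operation strictly increases the integer-valued $\Phi$ by at least $1$. Finally, because $\sum_\ell v(X_\ell)$ is invariant under the operation (again by modularity), $\Phi$ is bounded above by $k \sum_\ell v(X_\ell)$, so the process must terminate after finitely many steps; at termination no inverted pair remains, so $X_i \preceq X_j$ for all $i < j$ and $\hat C \in \Gamma^k_{\mathrm{lr}}$.

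The main obstacle I anticipate is getting the termination bound right: the operation can temporarily disorder other pairs, so one cannot argue termination by a naive "count of inversions" as in ordinary sorting of a total order. The modular potential $\Phi$ is what rescues this — its monotone increase is a global quantity insensitive to such side effects — so the only real care needed is verifying the modularity identity for $v$ and the sign of the potential change. Feasibility and multiplicity preservation, by contrast, are immediate from the lattice closure of $\Gamma$ and from Observation \ref{obs.1}, respectively.
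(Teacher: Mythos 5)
Your proof is correct, and its engine is the same as the paper's: the exchange $(X_i, X_j) \mapsto (X_i \wedge X_j,\, X_i \vee X_j)$, with feasibility coming from lattice closure (property \ref{property.2}) and multiplicity preservation from Observation \ref{obs.1}. Where you genuinely differ is in the scheduling of exchanges and, consequently, in the termination argument. The paper fixes a selection-sort-style double loop (Algorithm \ref{algo:leftRightOrder}): exactly $\binom{k}{2}$ exchanges are performed, so termination is automatic, and correctness follows from the invariant that after outer iteration $i$ the entry $X_i$ precedes every later entry and is never modified again (subsequent exchanges act only on positions greater than $i$, and both the meet and the join of two successors of $X_i$ remain successors of $X_i$). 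You instead allow an arbitrary inverted pair to be exchanged at each step, which is why you must prove termination at all; your potential $\Phi(C)=\sum_{\ell} \ell\, v(X_\ell)$ with $v$ the modular rank-sum does this correctly --- the identity $\Delta\Phi = (j-i)\bigl(v(X_i)-v(X_i \wedge X_j)\bigr) \geq 1$ checks out, as does the upper bound via invariance of $\sum_\ell v(X_\ell)$. What each buys: the paper's fixed schedule yields the simplest proof and an explicit $O(k^2)$ bound on the number of exchanges; your argument is stronger in that it shows \emph{any} maximal sequence of meet/join exchanges sorts the tuple, which substantiates (and slightly generalizes) the paper's passing remark that any sorting algorithm using join and meet as comparison operators would work. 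The obstacle you flag --- that an exchange can disorder other pairs, defeating a naive inversion count --- is real for your free schedule, but note that the paper sidesteps it entirely rather than solving it: under its schedule, an exchange at $(i,j)$ can only disorder pairs involving positions beyond the finalized prefix, which is all its invariant requires.
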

%Proof in p. 621 of main notebook
\begin{proof}
    To prove this, we give an algorithm that takes any $k$-tuple $C \in \Gamma$ of feasible solutions and transforms it into a $k$-tuple $\hat{C} \in \Gamma^k_\mathrm{lr}$ that is in left-right order while preserving element multiplicities. The algorithm iteratively rearranges the elements using join ($\vee$) and meet ($\wedge$) operations to enforce the desired ordering. We remark that any sorting algorithm that uses join and meet as comparison operators would achieve the same result. The algorithm is outlined below. 
    
\begin{algorithm}[H]
\caption[Caption for LOF]{LRO($C = (X_1, \ldots, X_k)$)} \label{algo:leftRightOrder}
\vspace{0.5em}
{\setlist{nolistsep}
    \begin{enumerate}
    %\item Let $\hat{C} \leftarrow C$
    \item For each $i \in \{1, \ldots, k-1\}$:
        \begin{enumerate}
        \item For each $j \in \{i+1, \ldots, k\}$:
            \begin{enumerate}
                \item $A \leftarrow X_i \wedge X_j$ and $B \leftarrow X_i \vee X_j$
                \item $X_i \leftarrow A$ and $X_j \leftarrow B$
        \end{enumerate}
    \end{enumerate}
    \item Return $\hat{C} \leftarrow  (X_1, X_2, \ldots, X_k)$.
    \end{enumerate}
}
\vspace{0.5em}
\end{algorithm}

The algorithm iterates through the tuple, comparing each element $X_i$ with every subsequent element $X_j$ for $j > i$. In each comparison, $X_i$ is replaced by their meet $X_i \wedge X_j$ and $X_j$ is replaced by their join $X_i \vee X_j$. By the definition of join and meet, at the end of iteration $i$, the solution $X_i$ is a predecessor of all other solutions $X_j \in C$ with $j > i$. By repeating this process over all pairs, the elements are gradually rearranged so that $X_i \preceq X_j$ for all $i < j$. 

It is clear that by the end of the algorithm, the $k$-tuple $\hat{C}$ is in left-right order. It only remains to verify that $\mu_e(C) = \mu_e(\hat{C})$ for each $e \in E$. By Observation \ref{obs.1} the multiplicity of elements is preserved under the join and meet operations. Hence the multiplicity of elements remains invariant at each pairwise comparison operation between solutions $X_i$ and $X_j$, and the claim is proven.     
\end{proof}

Now, consider the pairwise-sum diversity measure introduced in Section \ref{section.1}. We can rewrite it directly in terms of the multiplicity as 
\begin{align}
    & d_{\text{sum}}(C) = 2 \left[r \binom{k}{2} - \sum_{e \in E} \binom{\mu_e(C)}{2} \right], \label{eq:dsumMultiplicity}
\end{align}
for any $C \in \Gamma^k$. Notice that the terms outside the summation are constant. (Recall that $\binom{m}{n} = 0$ when $n > m$.) This formulation highlights that maximizing $d_\mathrm{sum}$ depends only on the distribution of elements across feasible solutions, rather than their specific ordering within a tuple $C$. The following lemma is an immediate consequence of Claim \ref{claim.0} and Equation \eqref{eq:dsumMultiplicity}. 

\begin{lemma}[Cost-Equivalence Lemma] \label{lemma.2}
    For any $C \in \Gamma^k$ there exists $\hat{C} \in \Gamma_{\mathrm{lr}}^k$ such that $d_{\mathrm{sum}}(\hat{C}) = d_{\mathrm{sum}}(C)$. 
\end{lemma}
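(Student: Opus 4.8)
The plan is to combine Claim~\ref{claim.0} with the multiplicity reformulation in Equation~\eqref{eq:dsumMultiplicity}; the statement then follows essentially by substitution, since the substantive work has already been done. First I would fix an arbitrary $C \in \Gamma^k$ and apply Claim~\ref{claim.0} to obtain a left-right ordered tuple $\hat{C} \in \Gamma_{\mathrm{lr}}^k$ satisfying $\mu_e(C) = \mu_e(\hat{C})$ for every $e \in E$. This is the whole reason Claim~\ref{claim.0} was proved: it guarantees the existence of a sorted representative whose element multiplicities exactly match those of $C$.

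Next I would invoke Equation~\eqref{eq:dsumMultiplicity}, which expresses $d_{\mathrm{sum}}$ solely in terms of the quantities $r$, $k$, and the element multiplicities $\mu_e(\cdot)$. The term $2\,r\binom{k}{2}$ is a constant independent of the tuple, so the only tuple-dependent contribution is $-2\sum_{e \in E}\binom{\mu_e(\cdot)}{2}$. Since $C$ and $\hat{C}$ share the same multiplicity profile, these two summations agree term by term, whence $d_{\mathrm{sum}}(\hat{C}) = d_{\mathrm{sum}}(C)$, as required. Because $k$ is a fixed constant, obtaining $\hat{C}$ from $C$ is also efficient, though for the statement itself only existence is needed.

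I do not expect any real obstacle here: the genuine content lives upstream, in Observation~\ref{obs.1} (meet and join preserve multiplicities) and in Claim~\ref{claim.0} (sorting via these operations yields a left-right ordered tuple with identical multiplicities). The only point requiring mild care is that $d_{\mathrm{sum}}$ is a function of the multiplicities \emph{alone}---that is, that Equation~\eqref{eq:dsumMultiplicity} is an exact identity rather than a bound---since it is precisely this that lets us transfer equality of multiplicities to equality of diversity. Granting that, the lemma is immediate.
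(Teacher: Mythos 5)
Your proposal is correct and follows exactly the paper's own argument: apply Claim~\ref{claim.0} to obtain a left-right ordered tuple with identical multiplicities, then use the multiplicity reformulation in Equation~\eqref{eq:dsumMultiplicity} to conclude that the diversity values coincide. Your additional remark that Equation~\eqref{eq:dsumMultiplicity} must be an exact identity (not a bound) is a fair point of care, but it matches the paper's reasoning rather than departing from it.
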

\begin{proof}
By Claim \ref{claim.0}, for any $C \in \Gamma^k$ there exists a tuple $\hat{C} \in \Gamma^k_\mathrm{lr}$ with the same element multiplicities. Equation \eqref{eq:dsumMultiplicity} establishes that the diversity $d_\mathrm{sum}$ of a collection of solutions depends solely on element multiplicities. Then, $C$ and $\hat{C}$ must also have the same diversity value. 
\end{proof}

Lemma \ref{lemma.2} tells us that in order to solve \textsc{Max-Sum $k$-Diverse Solutions} we do not need to optimize over the set of $k$-element multisets of $\Gamma$. Instead, we can optimize over the set $\Gamma_{\mathrm{lr}}^k$ of $k$-tuples that are in left-right order. Moreover, it follows from Eq. \eqref{eq:dsumMultiplicity} that maximizing $d_\mathrm{sum}$ is equivalent to minimizing 
% Maybe state this one as an observation, as in p.622 of the main nb.
\begin{align}
    & \hat{d}_\mathrm{sum}(C) = \sum_{e \in E} \binom{\mu_e(C)}{2}.
    \label{eq:dsumMultiplicityMin} 
\end{align}

Hence, solving \textsc{Max-Sum $k$-Diverse Solutions} reduces to minimizing $\hat{d}_\mathrm{sum}(C)$ in the lattice $L^*$. All we have left to do to complete the reduction to SFM is show that $\hat{d}_\mathrm{sum}(C)$ is submodular in the lattice $L^*$. 

\paragraph{Part 3: Submodularity.} 
We begin with three claims regarding the multiplicity function $\mu_e(C)$ on $L^*$. We remark that these claims rely crucially in property \ref{property.1} of Theorem \ref{theorem.1}.

\begin{claim} \label{claim.1}
The multiplicity function $\mu_e: \Gamma_{\mathrm{lr}}^k \rightarrow \mathbb{N}$ is modular on $L^*$.  
\end{claim}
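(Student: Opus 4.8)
The plan is to verify the defining equality of modularity directly, namely that $\mu_e(C \wedge D) + \mu_e(C \vee D) = \mu_e(C) + \mu_e(D)$ for all $C, D \in \Gamma_{\mathrm{lr}}^k$, and to reduce this to a coordinatewise application of Observation \ref{obs.1}. First I would rewrite $\mu_e$ as a sum of indicator contributions across the $k$ coordinates of a tuple. For $C = (X_1, \ldots, X_k)$, property \ref{property.1} guarantees that each feasible solution contains $e$ at most once, so $\mu_e(C) = \sum_{i=1}^k \mathbbm{1}[e \in X_i]$. Since $L^*$ is a sublattice of the product lattice $L^k$ (Lemma \ref{lemma.1}), its meet and join act coordinatewise: writing $D = (Y_1, \ldots, Y_k)$, the $i$-th coordinate of $C \wedge D$ is $X_i \wedge Y_i$ and that of $C \vee D$ is $X_i \vee Y_i$.

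Next I would apply Observation \ref{obs.1} separately to each pair $(X_i, Y_i)$. The observation states $X_i \uplus Y_i = (X_i \wedge Y_i) \uplus (X_i \vee Y_i)$, so the multiplicity of $e$ agrees on both sides of this multiset identity; because each solution contains $e$ at most once, this reads $\mathbbm{1}[e \in X_i] + \mathbbm{1}[e \in Y_i] = \mathbbm{1}[e \in X_i \wedge Y_i] + \mathbbm{1}[e \in X_i \vee Y_i]$. Summing this identity over $i \in [k]$, the left-hand side collapses to $\mu_e(C) + \mu_e(D)$ and the right-hand side to $\mu_e(C \wedge D) + \mu_e(C \vee D)$, which is precisely the modular equality.

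I do not expect a genuine obstacle here: essentially all of the content is delegated to Observation \ref{obs.1}, which is where property \ref{property.1} enters (one element per chain, so that meet and join are the componentwise minimum and maximum and the element-counting bookkeeping works out). The only point requiring a moment of care is the interface between the lattice operations of $L^*$ and those of $L^k$ and $L$: one must note that, because $L^*$ is a sublattice, the meet and join computed in $L^*$ coincide with the coordinatewise operations, so the reduction to a per-coordinate statement is legitimate. Once that is observed, the summation step is purely formal.
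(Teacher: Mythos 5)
Your proposal is correct and follows essentially the same route as the paper: both reduce to the coordinatewise action of meet and join on $L^k$ (justified via the sublattice property from Lemma \ref{lemma.1}), apply Observation \ref{obs.1} to each pair $(X_i, Y_i)$, and sum over the $k$ coordinates to obtain the modular equality. Your indicator-function bookkeeping is just a rephrasing of the paper's multiset-sum argument, so there is no substantive difference.
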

\begin{proof}
Consider $C_1 = (X_1, \ldots, X_k)$ and $C_2 = (Y_1, \ldots, Y_k) \in \Gamma^k_\mathrm{lr}$. By the definition of product order, we know that $C_1 \vee C_2 = (X_1 \vee Y_1, \ldots, X_k \vee Y_k)$ and $C_1 \wedge C_2 = (X_1 \wedge Y_1, \ldots, X_k \wedge Y_k)$. Now, by Observation \ref{obs.1}, we have $X_i + Y_i = (X_i \vee Y_i) + (X_i \wedge Y_i)$ for all $i \in [k]$. Taking the multiset sum over each $i$ and rearranging, we have $(X_1 + \ldots + X_k) + (Y_1 + \ldots + Y_k) = ((X_1 \vee Y_1) + \ldots + (X_k \vee Y_k)) + ((X_1 \wedge Y_1) + \ldots + (X_k \wedge Y_k))$. That is, $\mu_e(C_1) + \mu_e(C_2) = \mu_e(C_1 \vee C_2) + \mu_e(C_1 \wedge C_2)$ for each element $e \in E$. By definition of modular function, the claim follows. 
\end{proof}

Notice that Claim \ref{claim.1} holds for $L^k$ in general, not just in the lattice of left-right ordered $k$-tuples. The following two claims however, are only true for $L^*$. We use $E(C)$ to denote the set of elements $\bigcup_{X \in C} X$ for a tuple $C \in \Gamma^k$.

\begin{claim} \label{claim.2}
For any $C = (X_1, \ldots, X_k)$ in $L^*$, the element $e \in E(C)$ appears in every feasible solution of a contiguous subsequence $C' = (X_i, \ldots, X_j)$ of $C$, $1 \leq i \leq j \leq k$, with size $|C'| = \mu_e(C)$.
\end{claim}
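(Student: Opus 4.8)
The plan is to exploit property \ref{property.1} to reduce the statement to an elementary fact about monotone sequences in a totally ordered set. First I would observe that, since $e \in E(C)$, the element $e$ lies in exactly one chain of the chain decomposition of $E$, say $E_\ell$. Writing each solution as $X_t = (x_{t,1}, \ldots, x_{t,r})$ with $x_{t,i} \in E_i$, property \ref{property.1} guarantees that $e \in X_t$ holds if and only if the $\ell$-th coordinate satisfies $x_{t,\ell} = e$. Hence the set of indices $t$ for which $X_t$ contains $e$ coincides with the set of indices at which the coordinate sequence $x_{1,\ell}, x_{2,\ell}, \ldots, x_{k,\ell}$ takes the value $e$, and this set has size $\mu_e(C)$ by the definition of multiplicity.

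Next I would invoke the left-right ordering. Since $C \in L^* = (\Gamma_{\mathrm{lr}}^k, \preceq^k)$, we have $X_1 \preceq X_2 \preceq \cdots \preceq X_k$ in the componentwise order, which in particular forces $x_{1,\ell} \preceq_\ell x_{2,\ell} \preceq_\ell \cdots \preceq_\ell x_{k,\ell}$ along the chain $E_\ell$. Because $E_\ell$ is a chain, its order $\preceq_\ell$ is total, so this is a non-decreasing sequence in a totally ordered set.

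The final step is the elementary observation that a monotone sequence in a totally ordered set attains any fixed value on a contiguous range of indices: if $x_{a,\ell} = e = x_{b,\ell}$ with $a < b$, then for every $a \leq t \leq b$ monotonicity gives $e = x_{a,\ell} \preceq_\ell x_{t,\ell} \preceq_\ell x_{b,\ell} = e$, and antisymmetry yields $x_{t,\ell} = e$. Thus the indices carrying $e$ form a contiguous block $\{i, \ldots, j\}$, which is precisely a contiguous subsequence $C' = (X_i, \ldots, X_j)$ of $C$, and its length $j - i + 1$ equals the number of solutions containing $e$, namely $\mu_e(C)$.

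I expect the only real subtlety to lie in the first step---correctly using property \ref{property.1} to pin $e$ to a single chain and to identify membership of $e$ in a solution with a single coordinate value---rather than in the contiguity argument itself, which is a one-line consequence of monotonicity together with the totality and antisymmetry of the chain order. Notably, no appeal to distributivity or to the lattice operations is required here; the chain-decomposition hypothesis and the left-right ordering already suffice, which is why the paper emphasizes that this claim relies crucially on property \ref{property.1}.
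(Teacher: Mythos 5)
Your proposal is correct and is essentially the paper's own argument: both reduce membership of $e$ to the single coordinate in its chain $E_\ell$, use the left-right order to get a monotone coordinate sequence, and apply antisymmetry to force any solution sandwiched between two occurrences of $e$ to contain $e$. The only difference is presentational---the paper runs this squeeze argument as a proof by contradiction (assuming an intermediate $X_h$ omits $e$), while you state it directly.
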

\begin{proof}
    The case when $\mu_e(C) = 1$ is trivial. Next, we prove the case when $\mu_e(C) \geq 2$. By contradiction, suppose that $e$ does not appear in a contiguous subsequence of $C$. Then, there exists some feasible solution $X_h \in C$ with $i < h < j$ such that $e \in X_i$, $e \not\in X_h$, and $e \in X_j$. We know that the tuple $C$ is in left-right order, thus we have that $X_i \preceq X_j$ for every $i < j$. Without loss of generality, let $e$ belong to the $\ell$-th chain of the chain decomposition of $E$. 
    Since $e \in X_i$, there must be an element $f \in X_h$ such that $e \preceq_\ell f$. But since $e \in X_j$, it must also hold that $f \preceq_\ell e$, which can only be satisfied if $e = f$. However, we assumed that $e \not\in X_h$, hence $e \neq f$, which gives the necessary contradiction. 
\end{proof}

Claim \ref{claim.2} enables us to represent the containment of an element $e$ in a tuple $C \in L^*$ as the interval $I_e(C) = (i, j)$, where $i \leq j$, of length $\mu_e(C)$ defined on the set of integers $[k]$. In this interval representation, the elements of $I_e(C)$ correspond bijectively to the positions taken by the solutions that contain the element $e$ in the tuple $C$. See Figure \ref{fig:intervalRepresentation} for an example. This representation is useful for proving the subsequent claim. 

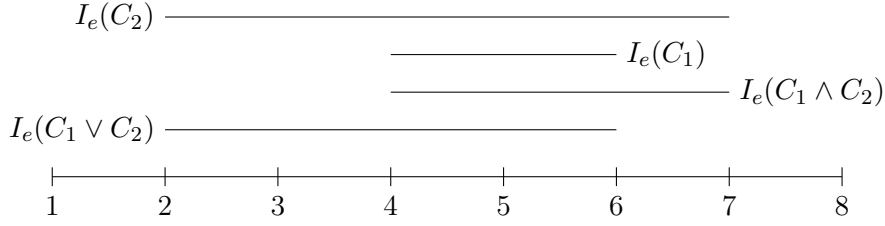
\begin{figure}
    \centering
    \begin{tikzpicture}[scale=0.5]
    \draw (3,0) -- (24,0); %Axis
    \foreach \x in {1,2,...,8} {
        \draw (\x*3,0.25) -- (\x*3,-0.25) node[below] {\x};
    }
    % Intervals
    \draw (6,1.25) node[left] {$I_e(C_1 \vee C_2)$} -- (18,1.25); % I_e(C_1 \vee C_2)
    \draw (12,2.25) -- (21,2.25) node[right] {$I_e(C_1 \wedge C_2)$}; % I_e(C_1 \wedge C_2)
    \draw (12,3.25) -- (18,3.25) node[right] {$I_e(C_1)$}; % I_e(C_1)
    \draw (6,4.25) node[left] {$I_e(C_2)$} -- (21,4.25); % I_e(C_2)   
    
    \end{tikzpicture}
    \caption{Interval representation of the containment of an element $e \in E$ in left-right ordered collections $C_1$ and $C_2$ of feasible solutions, as well as in their join and meet. In the example, there are $k = 8$ solutions in each tuple, and eight corresponding elements in the domain of the intervals. Observe that neither $I_e(C_1 \vee C_2)$ nor $I_e(C_1 \wedge C_2)$ are longer than $I_e(C_1)$ or $I_e(C_2)$. Also, note that the corresponding sums of their lengths are equal. 
    }
    \label{fig:intervalRepresentation}
\end{figure}

% More or less in p.626 of main nb.
\begin{restatable}[]{claim}{multiplicityProperty} \label{claim.3}
For any two $C_1, C_2 \in L^*$ and $e \in E(C_1) \cup E(C_2)$, it holds that $\max(\mu_e(C_1 \vee C_2), \mu_e(C_1 \wedge C_2)) \leq \max(\mu_e(C_1), \mu_e(C_2))$.
\end{restatable}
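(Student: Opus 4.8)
The plan is to argue entirely within the interval representation set up just before the statement. Fix $e$ and suppose it belongs to the $\ell$-th chain $(E_\ell, \preceq_\ell)$ of the decomposition of $E$. For a left-right ordered tuple $C = (X_1, \ldots, X_k)$, restricting the order $\preceq^k$ to the $\ell$-th coordinate shows that the $\ell$-th coordinates of $X_1, \ldots, X_k$ form a non-decreasing sequence in the chain $E_\ell$; by Claim \ref{claim.2} the positions at which this coordinate equals $e$ form the contiguous interval $I_e(C)$. I would first treat the main case $e \in E(C_1) \cap E(C_2)$, writing $I_e(C_1) = [i_1, j_1]$ and $I_e(C_2) = [i_2, j_2]$, so that $\mu_e(C_1) = j_1 - i_1 + 1$ and $\mu_e(C_2) = j_2 - i_2 + 1$, and denoting by $a_p, b_p$ the $\ell$-th coordinates of the $p$-th entries of $C_1, C_2$. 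Monotonicity then reads $a_p \preceq_\ell e \iff p \le j_1$ and $a_p \succeq_\ell e \iff p \ge i_1$, and likewise for $b_p$.

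The central step is to determine the intervals of the join and the meet. Since join and meet in $L$ act as componentwise maximum and minimum over chains, the $\ell$-th coordinate of the $p$-th entry of $C_1 \vee C_2$ is $\max_\ell(a_p, b_p)$, and of $C_1 \wedge C_2$ is $\min_\ell(a_p, b_p)$. Now $\max_\ell(a_p, b_p) = e$ holds exactly when $a_p \preceq_\ell e$ and $b_p \preceq_\ell e$ with at least one equality, which in position terms is $\min(i_1, i_2) \le p \le \min(j_1, j_2)$; the symmetric computation gives $\min_\ell(a_p, b_p) = e$ exactly for $\max(i_1, i_2) \le p \le \max(j_1, j_2)$. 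Hence $I_e(C_1 \vee C_2) = [\min(i_1, i_2), \min(j_1, j_2)]$ and $I_e(C_1 \wedge C_2) = [\max(i_1, i_2), \max(j_1, j_2)]$, both nonempty since $i_s \le j_s$ for $s \in \{1,2\}$. This matches Figure \ref{fig:intervalRepresentation} and is consistent with the modularity $\mu_e(C_1 \vee C_2) + \mu_e(C_1 \wedge C_2) = \mu_e(C_1) + \mu_e(C_2)$ of Claim \ref{claim.1}, though I will not need that identity.

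It then remains to compare lengths. For the join, let $t \in \{1,2\}$ be an index attaining $\min(i_1, i_2) = i_t$; using $\min(j_1, j_2) \le j_t$ gives $\mu_e(C_1 \vee C_2) = \min(j_1, j_2) - i_t + 1 \le j_t - i_t + 1 = \mu_e(C_t) \le \max(\mu_e(C_1), \mu_e(C_2))$. A mirror-image argument, choosing the index attaining $\max(j_1, j_2)$, bounds $\mu_e(C_1 \wedge C_2)$ the same way, and the two bounds together yield the claim. For the degenerate case $e \in E(C_1) \setminus E(C_2)$ (and symmetrically), every position where the join or meet coordinate equals $e$ already forces $a_p = e$, so both $I_e(C_1 \vee C_2)$ and $I_e(C_1 \wedge C_2)$ are sub-intervals of $I_e(C_1)$ and the inequality is immediate.

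The one place demanding care—and the step I expect to be the main obstacle—is fixing the direction of the inequalities in the length comparison: the naive pairing that bounds the join length against $\mu_e(C_1)$ and the meet length against $\mu_e(C_2)$ does not work. One must instead pair the join length with whichever tuple attains the smaller left endpoint and the meet length with whichever attains the larger right endpoint. Once this pairing is identified, the remaining arithmetic is a direct unwinding of the interval picture.
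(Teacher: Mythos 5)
Your proof is correct, and it rests on the same foundation as the paper's---the interval representation from Claim \ref{claim.2} and the fact that join and meet act coordinatewise as max and min along chains---but the execution is genuinely different. The paper proceeds by case analysis on how $I_e(C_1)$ and $I_e(C_2)$ interact: the one-sided cases are dispatched via the modularity identity of Claim \ref{claim.1}, and the two-sided case is split into disjoint, partially overlapping, and nested subcases, each argued qualitatively through chain-predecessors and chain-successors. You instead derive uniform closed-form identities $I_e(C_1 \vee C_2) = [\min(i_1,i_2), \min(j_1,j_2)]$ and $I_e(C_1 \wedge C_2) = [\max(i_1,i_2), \max(j_1,j_2)]$, valid whenever both intervals are nonempty, and finish with a single length comparison under the correct pairing (join against the tuple attaining the smaller left endpoint, meet against the tuple attaining the larger right endpoint). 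This buys uniformity: your formulas subsume all three of the paper's subcases, including Subcase 3.2.1, which the paper only sketches by analogy, and they re-derive the modularity of Claim \ref{claim.1} as a byproduct. What the paper's decomposition buys in exchange is geometric transparency (each subcase matches a picture as in Figure \ref{fig:intervalRepresentation}) and a shortcut in the one-sided cases, where Claim \ref{claim.1} yields the bound with no interval computation; your containment argument for that degenerate case is equally valid and more self-contained. Your closing caveat is also accurate: for nested intervals the join length can strictly exceed the multiplicity of the inner tuple, so identifying the right comparison partner is indeed the step where a naive argument would fail.
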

\begin{proof}
    We prove this by case distinction on the containment of $e$ in $E(C_1) \cup E(C_2)$. Without loss of generality, assume that $e$ belongs to the $\ell$-th chain in the chain decomposition of $E$ dictated by property \ref{property.1} of Theorem \ref{theorem.1}; i.e., $e \in E_\ell$. There are three cases to consider: $e \in E(C_1) \setminus E(C_2)$, $e \in E(C_2) \setminus E(C_1)$, and $e \in E(C_1) \cap E(C_2)$. 
\begin{description}
   \item[Case 1: $e \in E(C_1) \setminus E(C_2)$.] 
    By Claim \ref{claim.1}, we know that $\mu_e(C_1 \vee C_2) + \mu_e(C_1 \wedge C_2) = \mu_e(C_1)$. Since $\mu_e(C_1 \vee C_2)$ and $\mu_e(C_1 \wedge C_2)$ are nonnegative, neither term can be greater than $\mu_e(C_1)$ and thus, the claim holds. 
   \item[Case 2: $e \in E(C_2) \setminus E(C_1)$.] 
   This case is symmetrical to Case 1. 
   \item[Case 3: $e \in E(C_1) \cap E(C_2)$.]
   Consider the interval representation of edge $e$ in $E(C_1)$ and $E(C_2)$. Let $I_e(C_1) = (\alpha, \beta)$ and $I_e(C_2) = (\sigma, \tau)$ be such intervals. There are two subcases to consider: $I_e(C_1) \cap I_e(C_2) = \emptyset$, and $I_e(C_1) \cap I_e(C_2) \neq \emptyset$. 
   \begin{description}
       \item[Subcase 3.1: $I_e(C_1) \cap I_e(C_2) = \emptyset$.] We claim that $\max({\mu_e(C_1 \vee C_2), \mu_e(C_1 \wedge C_2)})$ is equal to $\max(\mu_e(C_1), \mu_e(C_2))$ in this subcase. To see this, w.l.o.g., suppose that $\beta < \sigma$. Then, because $C_2$ is in left-right order, the solutions of $C_2$ in the interval $(\alpha, \beta)$ each contain a chain-predecessor of $e$. Then, by definition of the join operation in $L^*$, we have $I_e(C_1 \vee C_2) = (\alpha, \beta)$. Similarly, the solutions of $C_1$ in the interval $(\sigma, \tau)$ each contain a chain-successor of $e$. Hence, by the meet operation in $L^*$, we have $I_e(C_1 \wedge C_2) = (\sigma, \tau)$. Comparing the length of the intervals, we obtain that $\mu_e(C_1 \vee C_2) = \mu_e(C_1)$ and $\mu_e(C_1 \wedge C_2) = \mu_e(C_2)$, from which the claim follows. 
       \item[Subcase 3.2: $I_e(C_1) \cap I_e(C_2) \neq \emptyset$.] We have two further subcases to consider: $I_e(C_1) \not\subseteq I_e(C_2)$ (or $I_e(C_2) \not\subseteq I_e(C_1)$), and $I_e(C_2) \subseteq I_e(C_1)$ (or vice versa). 
       \begin{description}
           \item[Subcase 3.2.1: $I_e(C_1) \not\subseteq I_e(C_2)$.]
           The proof of this subcase is analogous to the proof of subcase (3.1), where we also obtain that $\max({\mu_e(C_1 \vee C_2), \mu_e(C_1 \wedge C_2)})$ equals $\max(\mu_e(C_1), \mu_e(C_2))$. It suffices to notice that the intersection of intervals contributes equally to the multiplicities $\mu_e(C_1)$ and $\mu_e(C_2)$, as well as to $\mu_e(C_1\vee C_2)$ and $\mu_e(C_1 \wedge C_2)$. Hence we can treat the complement in the same way as Subcase (3.1).
           \item[Subcase 3.2.2: $I_e(C_2) \subseteq I_e(C_1)$.] Recall that $I_e(C_1) = (\alpha, \beta)$, and $I_e(C_2) = (\sigma, \tau)$. Then, in this subcase: $\alpha \leq \sigma \leq \tau \leq \beta$. Again, by definition of join and meet, we have that $I_e(C_1 \vee C_2) = (\alpha, \tau)$ and $I_e(C_1 \wedge C_2) = (\sigma, \beta)$. Now, since $\tau - \alpha \leq \beta - \alpha$ and $\beta - \sigma \leq \beta - \alpha$, we obtain $\max(\mu_e(C_1 \vee C_2), \mu_e(C_1 \wedge C_2)) \leq \max(\mu_e(C_1), \mu_e(C_2))$, which is what we wanted.
       \end{description}
   \end{description}
\end{description}

Since the claim is true for all cases covered and all cases have been considered, the result is proved.
\end{proof}

With these three claims at our disposal, we are just one step away from proving the submodularity of $\hat{d}_\mathrm{sum}$. Let $B_e: U_{\mathrm{lr}}^k \rightarrow \mathbb{N}$ be the function defined by $B_e(C) = \binom{\mu_e(C)}{2}$. We can rewrite Eq. \eqref{eq:dsumMultiplicityMin} as $\hat{d}_\mathrm{sum}(C) = \sum_{e \in E(C)} B_e(C)$. The following proposition is a consequence of Claims \ref{claim.1} and \ref{claim.3}. 

\begin{proposition} \label{prop.1}
For any two $C_1, C_2 \in L^*$ and $e \in E$, we have $B_e(C_1 \vee C_2) + B_e(C_1 \wedge C_2) \leq B_e(C_1) + B_e(C_2)$.
\end{proposition}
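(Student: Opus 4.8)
The plan is to reduce the proposition to an elementary inequality about the convex function $g(x) = \binom{x}{2}$, feeding it the two facts already established in Claims \ref{claim.1} and \ref{claim.3}. Abbreviate $a = \mu_e(C_1)$, $b = \mu_e(C_2)$, $a' = \mu_e(C_1 \vee C_2)$, and $b' = \mu_e(C_1 \wedge C_2)$, so that $B_e(C_1) = g(a)$, $B_e(C_2) = g(b)$, $B_e(C_1 \vee C_2) = g(a')$, and $B_e(C_1 \wedge C_2) = g(b')$. If $e \notin E(C_1) \cup E(C_2)$ then all four multiplicities vanish and the inequality is trivial, so I assume $e$ lies in the union, where Claim \ref{claim.3} applies. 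The goal is then to prove $g(a') + g(b') \le g(a) + g(b)$, and the two available inputs are: (i) by the modularity of $\mu_e$ (Claim \ref{claim.1}) the pairs have equal total, $a' + b' = a + b$; and (ii) by Claim \ref{claim.3} the larger of the primed pair does not exceed the larger of the unprimed pair, $\max(a', b') \le \max(a, b)$.

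Together, (i) and (ii) say exactly that $(a', b')$ is majorized by $(a, b)$. Since $g$ is convex, the majorization (Karamata) inequality yields $g(a') + g(b') \le g(a) + g(b)$ at once. Because only a two-element comparison is at stake, I would prefer to keep this self-contained rather than invoke majorization theory.

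For the direct verification, observe that $g(a)+g(b)$ and both hypotheses are invariant under sorting each pair independently, so I may assume $a \ge b$ and $a' \ge b'$. Then (ii) gives $a' \le a$, so I write $a' = a - t$ with $t \ge 0$, and the equal-sum condition (i) forces $b' = b + t$. The ordering $a' \ge b'$ gives $t \le (a-b)/2$, hence $a - b - t \ge 0$. A short computation with $g(x) = \tfrac12 x(x-1)$ then gives
\[
\big(g(a) + g(b)\big) - \big(g(a') + g(b')\big) = t\,(a - b - t) \ge 0,
\]
which is precisely the desired inequality.

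The work here is deliberately light, because the substantive content—simultaneously controlling the sum and the maximum of the multiplicities under meet and join—has already been absorbed into Claims \ref{claim.1} and \ref{claim.3}. The only further ingredient is the convexity of $\binom{\cdot}{2}$, and I do not expect any genuine obstacle beyond the routine algebra above. The one point that warrants care is justifying the two simultaneous ``without loss of generality'' orderings; this is legitimate precisely because both the objective $g(a)+g(b)$ and the two hypotheses are unchanged when each pair is sorted, so no generality is lost.
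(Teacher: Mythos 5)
Your proof is correct and takes essentially the same route as the paper: both reduce the proposition, via Claims \ref{claim.1} and \ref{claim.3}, to the two-point inequality $\binom{a'}{2}+\binom{b'}{2}\le\binom{a}{2}+\binom{b}{2}$ under the equal-sum and dominated-maximum hypotheses, and then finish with elementary algebra. Your parametrization $a'=a-t$, $b'=b+t$ with the explicit difference $t(a-b-t)\ge 0$ is a slightly cleaner execution of the same final step, which the paper instead carries out via the ``fixed sum, smaller gap, larger product'' argument followed by expanding $(a+b)^2=(c+d)^2$.
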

\begin{proof}

We show that $\binom{a}{2} + \binom{b}{2} \leq \binom{c}{2} + \binom{d}{2}$ for $a, b, c, d \in \mathbb{N}$, given the following conditions: (i) $a + b = c + d$, and (ii) $\max(a, b) \leq \max(c, d)$. By setting $a = \mu_e(C_1 \vee C_2)$, $b = \mu_e(C_1 \wedge C_2)$, $c = \mu_e(C_1)$, and $d = \mu_e(C_2)$, the proposition follows, as conditions (i) and (ii) correspond to the properties of the multiplicity function stated in Claims \ref{claim.1} and \ref{claim.2}, respectively.

Without loss of generality, assume that $a \leq b$ and $c \leq d$. Condition (ii) ensures that $b \leq d$, and together with (i) implies that $a \geq c$, and $b \geq c$. Therefore, $(b - a) \leq (d - c)$. It is folklore that for $a, b > 0$ with fixed sum, the product $a \cdot b$ becomes larger as $|a - b|$ becomes smaller. Hence, we have $a \cdot b \geq c \cdot d$. Next, from condition (i), we know that $(a+b)^2 = (c+d)^2$. Combining this with the inequality just established, and subtracting $(a+b)$ and $(c+d)$ from each side, respectively, we get $a(a-1) + b(b-1) \leq c(c-1) + d(d-1)$. By definition of the binomial coefficient, the proposition immediately follows. 
\end{proof}

With Proposition \ref{prop.1}, we are ready to prove that $\hat{d}_\mathrm{sum}(C)$ is submodular in the lattice $L^*$. 

\begin{lemma}[Submodularity Lemma] \label{lemma.3}
    The function $\hat{d}_{\mathrm{sum}}: \Gamma_{\mathrm{lr}}^k \rightarrow \mathbb{N}$ is submodular in $L^*$.
\end{lemma}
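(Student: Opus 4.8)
The plan is to prove submodularity of $\hat{d}_{\mathrm{sum}}$ by reducing it to the per-element inequality already established in Proposition \ref{prop.1}, together with the standard fact that a sum of submodular functions is submodular. Recall from Eq. \eqref{eq:dsumMultiplicityMin} that $\hat{d}_{\mathrm{sum}}(C) = \sum_{e \in E} B_e(C)$, where $B_e(C) = \binom{\mu_e(C)}{2}$. The key observation is that submodularity is preserved under summation, so it suffices to show that each individual term $B_e$ is submodular on $L^*$; the total function then inherits the property.

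First I would fix arbitrary $C_1, C_2 \in L^*$ and an arbitrary element $e \in E$, and appeal directly to Proposition \ref{prop.1}, which states exactly that
\[
B_e(C_1 \vee C_2) + B_e(C_1 \wedge C_2) \leq B_e(C_1) + B_e(C_2).
\]
This is precisely the submodularity inequality \eqref{eq:submodular} for the single function $B_e$ on the lattice $L^*$, so each $B_e$ is submodular. Next, I would sum this inequality over all $e \in E$. Since the sum ranges over a fixed index set independent of the tuple, summing the left-hand sides and right-hand sides separately yields
\[
\sum_{e \in E} \big( B_e(C_1 \vee C_2) + B_e(C_1 \wedge C_2) \big) \leq \sum_{e \in E} \big( B_e(C_1) + B_e(C_2) \big),
\]
which is exactly $\hat{d}_{\mathrm{sum}}(C_1 \vee C_2) + \hat{d}_{\mathrm{sum}}(C_1 \wedge C_2) \leq \hat{d}_{\mathrm{sum}}(C_1) + \hat{d}_{\mathrm{sum}}(C_2)$.

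The only subtlety worth flagging is the switch between summing over $e \in E(C)$ (as in the definition $\hat{d}_{\mathrm{sum}}(C) = \sum_{e \in E(C)} B_e(C)$) and summing over all $e \in E$. These agree because $B_e(C) = \binom{\mu_e(C)}{2} = 0$ whenever $\mu_e(C) \leq 1$, and in particular whenever $e \notin E(C)$, so extending the index set to all of $E$ adds only zero terms and lets us compare the two tuples over a common fixed domain. With this remark in place, there is no real obstacle: the heavy lifting was done in Proposition \ref{prop.1} (and, upstream, in Claims \ref{claim.1} and \ref{claim.3}), and the present lemma is the routine aggregation step that assembles those per-element facts into the global statement.
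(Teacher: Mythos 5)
Your proposal is correct and follows essentially the same route as the paper: both invoke Proposition \ref{prop.1} for the per-element submodularity of $B_e$, sum over all $e \in E$ using the closure of submodularity under addition, and handle the $E(C)$-versus-$E$ index-set issue by noting that $B_e(C) = \binom{\mu_e(C)}{2} = 0$ whenever $\mu_e(C) < 2$. No gaps; the lemma is indeed just the aggregation step on top of Proposition \ref{prop.1}.
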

\begin{proof}
    Proposition \ref{prop.1} states that the function $B_e(C)$ is submodular in the lattice $L^*$. Now, recall that the sum of submodular functions is also submodular. Then, taking the sum of $B_e(C)$ over all elements $e \in E$ results in a submodular function. From here, notice that $B_e(C) = 0$ when $\mu_e(C) < 2$. This means that such elements do not contribute to the sum. It follows that, for any two $C_1, C_2 \in L^*$, we have
\begin{equation*}
    \sum_{e \in E}B_e(C_1 \vee C_2) + \sum_{e \in E}B_e(C_1 \wedge C_2) \leq \sum_{e \in E}B_e(C_1) + \sum_{e \in E}B_e(C_2). 
\end{equation*}
Each sum in this inequality corresponds to the definition of $\hat{d}_{sum}$ applied to the arguments $C_1 \vee C_2$, $C_1 \wedge C_2$, $C_1$ and $C_2$, respectively. Hence, by definition of submodularity, we obtain our desired result.
\end{proof}

\paragraph{Part 4: Compactness.} 
While Lemmas \ref{lemma.1}-\ref{lemma.3} already demonstrate the reduction of \textsc{Max-Sum $k$-Diverse Solutions} to SFM, this reduction alone does not guarantee an efficient algorithm. To complete the proof of Theorem \ref{theorem.1}, it remains to show that a compact representation of the left-right ordered lattice $L^*$ exists and can be constructed efficiently. This is done in the following lemma, assuming that property \ref{property.3} holds. 

\begin{lemma}[Compactness Lemma] \label{lemma.4}
    A compact representation of $L^*$ can be constructed in polynomial time in $n$ and $k$. 
\end{lemma}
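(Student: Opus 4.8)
The plan is to pin down the poset of join-irreducibles $J(L^*)$ explicitly and then build its Hasse diagram from the compact representation of $L$ that property \ref{property.3} supplies. By Birkhoff's theorem (Theorem \ref{thm.birkhoff}), property \ref{property.3} delivers, in polynomial time, the diagram $G(J(L))$; in particular $|J(L)|$ is polynomial in $n$. So the whole lemma reduces to expressing $J(L^*)$ in terms of $J(L)$ and checking that the resulting object has polynomial size and is easy to assemble.

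First I would recast everything in terms of ideals via the isomorphism $L \cong \mathcal{D}(J(L))$: each feasible solution $X \in \Gamma$ corresponds to an ideal $D_X$ of $J(L)$, with $X \preceq Y$ if and only if $D_X \subseteq D_Y$. Under this correspondence a tuple $C = (X_1,\ldots,X_k)$ lies in $\Gamma_{\mathrm{lr}}^k$ exactly when its ideals form a chain $D_{X_1} \subseteq D_{X_2} \subseteq \cdots \subseteq D_{X_k}$, and the product order $\preceq^k$ on $L^*$ translates into coordinatewise inclusion of these chains of ideals. Thus $L^*$ is isomorphic to the lattice of length-$k$ chains of ideals of $J(L)$.

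The key step is to show that this lattice of chains of ideals is itself isomorphic to $\mathcal{D}(J(L) \times [k])$, the lattice of ideals of the product of the poset $J(L)$ with the $k$-element chain $[k]$ (taken with the componentwise order). I would exhibit the map sending a chain $D_1 \subseteq \cdots \subseteq D_k$ to the down-set $S = \{(p,i) \in J(L)\times[k] : p \in D_{k+1-i}\}$, and verify that it is a bijection onto $\mathcal{D}(J(L)\times[k])$: the two defining conditions for $S$ to be a down-set of the product are precisely that each slice $S_i = \{p : (p,i)\in S\}$ is an ideal of $J(L)$ and that $S_1 \supseteq \cdots \supseteq S_k$, which is exactly the data of the chain $D_1 \subseteq \cdots \subseteq D_k$ after reindexing. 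Since $C \preceq^k C'$ iff $D_m \subseteq D'_m$ for all $m$ iff $S \subseteq S'$, the map is an order isomorphism. By the uniqueness half of Birkhoff's theorem this gives $J(L^*) \cong J(L) \times [k]$, a poset with $k\cdot|J(L)|$ elements.

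Finally I would construct $G(J(L^*))$ directly as the Hasse diagram of $J(L)\times[k]$: take $k$ disjoint copies of $G(J(L))$ and, for each join-irreducible $p$ and each $i \in [k-1]$, add the cover edge from $(p,i)$ to $(p,i+1)$; these are exactly the cover relations of the product poset. Because $G(J(L))$ is obtained in polynomial time by property \ref{property.3}, the diagram $G(J(L^*))$ has $k\cdot|J(L)|$ nodes and hence size polynomial in $n$ and $k$, and it is assembled in polynomial time. The main obstacle is the middle step: establishing that length-$k$ chains of ideals of $J(L)$ correspond bijectively and order-isomorphically to ideals of $J(L)\times[k]$, thereby identifying $J(L^*)$ exactly. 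Once this structural identity is in hand, both the size bound and the construction are routine.
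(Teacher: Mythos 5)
Your proof is correct, but it reaches the characterization of $J(L^*)$ by a genuinely different route than the paper. The paper works directly inside $L^*$: it exhibits the candidate join-irreducibles explicitly as the tuples $(0_L,\ldots,0_L,p,\ldots,p)$ with $p \in J(L)$, shows each has a unique immediate predecessor, and then shows every remaining tuple $T$ splits as a join $T = T_1 \vee T_2$ of two strictly smaller left-right ordered tuples, so nothing else is join-irreducible. You instead argue by Birkhoff duality: identifying $\Gamma$ with $\mathcal{D}(J(L))$, the left-right ordered $k$-tuples become length-$k$ increasing chains of ideals, and your slice-reversal map $S = \{(p,i) : p \in D_{k+1-i}\}$ correctly identifies these chains, order-isomorphically, with the down-sets of the product poset $J(L)\times[k]$; the uniqueness half of Birkhoff's theorem (equivalently, the standard fact that the join-irreducibles of $\mathcal{D}(P)$ are exactly the principal ideals of $P$) then gives $J(L^*) \cong J(L)\times[k]$. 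The two answers agree: the principal down-set of $(p,i)$ corresponds precisely to the paper's tuple with $k-i$ leading copies of $0_L$ and $i$ trailing copies of $p$. What your route buys is a cleaner completeness argument---no case analysis on immediate predecessors---and an explicit Hasse diagram for $J(L^*)$ ($k$ stacked copies of $G(J(L))$ joined by the vertical cover edges $(p,i) \to (p,i+1)$), which makes both the $k\,|J(L)|$ size bound and the polynomial-time assembly immediate. What it costs is reliance on the uniqueness direction of Birkhoff's theorem, which the paper's Theorem~\ref{thm.birkhoff} states only tersely; the paper's hands-on argument needs only the representation direction and stays entirely inside $L^*$. (A minor point in your favor: property~\ref{property.3} literally guarantees only that $|J(L)|$ is polynomial in $n$, which is all you claim, whereas the paper's proof asserts $|J(L)| = O(n)$, a slightly stronger reading of the hypothesis.)
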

\begin{proof}
    By Birkhoff's representation theorem, we need only specify the set of join-irreducibles of $L^*$ for us to obtain a compact representation in $O(|J(L^*)|^2)$ time.    
    Next, we prove that the set of join-irreducibles of $L^*$ is of size $O(kn)$ and is given by
    \begin{center}  
        $J(L^*) = \bigcup_{i = 1}^k J_i$, where $J_i := \{(\underbrace{0_L, \ldots, 0_L}_{i-1 \text{ times}}, \underbrace{p, \ldots, p}_{k-i+1 \text{ times}})\, : \, p \in J(L)\}$;
    \end{center} 
that is, the join-ireducibles of $L^*$ are determined by the irreducibles of $L$. Since, by property \ref{property.3} of Theorem \ref{theorem.1}, the join-irreducibles of $L$ can be found in polynomial time, the lemma follows.   

    First, recall that an element $x \in L^*$, where $x \neq 0_L$, is a join-irreducible if and only if $x$ has a single immediate predecessor. To prove our claim, we show that (i) the $k$-tuples $J_i$, with $1 \leq i \leq k$, are in $L^*$ and satisfy this property, and (ii) that no other tuple in $L^*$ satisfies it. 
    
    To prove (i), let $C(i, p)$ denote the $k$-tuple $(0_L, \ldots, 0_L, p, \ldots, p) \in J_i$, where the first $i-1$ entries contain $0_L$ and the remaining $\ell = k-i+1$ entries contain the element $p$, with $i \in [k]$ and $p \in J(L)$. It is clear that $C(i, p) \in L^*$, since each entry in $C(i, p)$ is a feasible solution from the set $\Gamma$, and $0_L \preceq X$ for any $X \in \Gamma$. Notice that any predecessor of $C(i, p)$ in $L^*$ must be of the form $(0_L, \ldots, 0_L, q_\ell, \ldots, q_k)$ with $q_j \in L$, where $q_j \preceq p$ for all $j \in [\ell, k]$ and $(q_\ell, \ldots, q_{k})$ is in left-right order. Thus, an immediate predecessor of $C(i, p)$ in $L^*$ must be a $k$-tuple of the form $Q(i, q, p) := (0_L, \ldots, 0_L, q, p, \ldots, p)$ constructed by replacing the $i$th entry of $C(i, p)$ with an element $q \in L$ such that $q \prec p$. Assume, for the sake of contradiction, that $C(i, p)$ has more than a single immediate predecessor. Then, the join of any two such tuples $Q(i, x, p)$ and $Q(i, y, p)$ would result in $C(i, p)$. By construction, this can only happen if either $x = p$ or $y = p$, but we assumed that $x \prec p$ and $y \prec p$, which is a contradiction. Hence, $C(i, p)$ must have a single immediate predecessor. Since this holds for all $i\in [k]$ and arbitrary $p$, it follows that each of the tuples in $J(L^*)$ has a single immediate predecessor. 

    It remains to show (ii); that is, that there is no tuple in $L^* \setminus \bigcup_{i = 1}^k J_i$ which is also a join-irreducible of $L^*$. For this, it suffices to show that any tuple in $L^* \setminus \bigcup_{i = 1}^k J_i$ can be written as the join of two other elements in $L^*$. Indeed, consider an arbitrary tuple $T \in L^* \setminus \bigcup_{i = 1}^k J_i$. Such a tuple has the form $T = (0_L, \ldots, 0_L, a, \ldots, b, \ldots)$, with $a \prec b$, where $a \in L$ is the first entry in $T$ that is not the bottom element $0_L$, and $b \in L$ is the first entry in $T$ that is neither $0_L$ nor $a$. Now, consider the tuple $T_1 = (0_L, \ldots, 0_L, a, \ldots, a)$ obtained by replacing every successor of $a$ in $T$ by element $a$, and the tuple $T_2 = (0_L, \ldots, 0_L, b, \ldots)$ obtained from $T$ by replacing every predecessor of $b$ by $0_L$. It is clear that $T_1, T_2 \in L^*$. Moreover, it holds that $T_1 \prec T$ and $T_2 \prec T$, while $T_1$ and $T_2$ are incomparable. Then, by definition of the join operation in the lattice $L^*$, we have that $T = T_1 \vee T_2$. Hence, $T$ can be can be written as the join of two other elements in $L^*$ and thus, is not a join-irreducible of $L^*$. 
    
    From (i) and (ii) above, we have thus shown that the set of join-irreducibles $J(L^*)$ is given by $\bigcup_{i = 1}^k J_i$. To conclude the proof, we look at the size of $J(L^*)$. First, observe that the index $i$ runs from $1$ to $k$. Also, by property \ref{property.3} of Theorem \ref{theorem.1} we know that $|J(L)| = O(n)$. It then follows that $|J(L^*)| = O(kn)$.
\end{proof}

% The most detailed reference of such an algorithm is \cite{garg2018applying}. 
\begin{remark}
    Consider the lattice $L$ of feasible solutions. Let $J(e) \in J(L)$ be the minimum join-irreducible element of $L$ that contains element $e \in E$. It is clear that every join-irreducible element of $L$ is of this form. Hence, we may write $J(L) = \{J(e) \; | \; e \in E\}$. Then, to construct $J(L)$, it is sufficient to give a procedure to find $J(e)$ for a given $e \in E$. We could then rewrite Property \ref{property.3} of Theorem \ref{theorem.1} by assuming the existence of such an algorithm running in polynomial time. We remark that such a procedure has been shown to exist for various combinatorial objects like stable matchings and the set of \textit{consistent cuts of a computation} \cite{garg2001slicing, mittal2001computation, garg2018applying}.    
\end{remark}

With Lemma \ref{lemma.4}, a compact representation of $L^*$ can constructed in polynomial time. It is also clear that the function $\hat{d}_\mathrm{sum}$ can be computed efficiently. Then, by Theorem \ref{theorem:sfmDistributiveLattices} and Lemmas \ref{lemma.1}-\ref{lemma.4}, the proof of Theorem \ref{theorem.1} is complete. In the next section, we apply this framework to concrete combinatorial problems, verifying that their feasible solution sets satisfy properties \ref{property.1}-\ref{property.3} of the theorem. 

\section{Applications of the Framework} \label{sec.4}

We present examples of combinatorial problems whose feasible solution sets meet each of the conditions outlined in Theorem \ref{theorem.1}, allowing for the generation of maximally diverse solutions within our framework. Specifically, we discuss minimum $s$-$t$ cuts (Section \ref{section.mincuts}) and stable matchings (Section \ref{section.stableMatchings}). %and market clearing price vectors (Section \ref{section.orientations}). 

\subsection{\texorpdfstring{Minimum $s$-$t$ cuts}{Minimum s-t Cuts}} \label{section.mincuts}
The \textsc{Minimum $s$-$t$ Cut} problem is a classic combinatorial optimization problem. Given a directed graph $G = (V, E)$ and two special vertices $s, t \in V$, the problem asks for a subset $S \subseteq E$ of minimum cardinality that separates vertices $s$ and $t$, meaning that removing these edges from $G$ ensures there is no path from $s$ to $t$. Such a set is called a \textit{minimum $s$-$t$ cut} or \textit{$s$-$t$ mincut}. Here, we consider the problem of finding diverse minimum $s$-$t$ cuts, formally defined below. 

\begin{extthm}[\textsc{Max-Sum $k$-Diverse Minimum s-t Cuts}] \label{problem:2}
Given are a directed graph $G = (V, E)$ and vertices $s,t \in V$. Let $\Gamma \subseteq 2^E$ be the set of minimum $s$-$t$ cuts in $G$, and let $\Gamma_k$ be the set of $k$-multisets over $\Gamma$. Find $C \in \Gamma_k$ such that $d_\mathrm{sum}(C) = \max_{S \in \Gamma_k} d_\mathrm{sum}(S)$.
\end{extthm}

Using our framework, we reproduce the findings of De Berg et al. \cite{de2023finding} for \textsc{Max-Sum $k$-Diverse Minimum s-t Cuts}; that is, we show that the problem can be solved in polynomial time. For this, it suffices to show that the set $\Gamma$ of minimum $s$-$t$ cuts satisfies properties \ref{property.1}-\ref{property.3} of Theorem \ref{theorem.1}. We prove these statements in order. 

\begin{lemma}[Property 1] \label{lemma.MinCuts.1}
    There is a chain decomposition of the edge set $E$ into $r$ disjoint chains, such that each minimum $s$-$t$ cut $X \subseteq E$ contains exactly one element from each chain.  
\end{lemma}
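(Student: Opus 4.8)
The plan is to exploit the classical connection between minimum $s$-$t$ cuts and maximum flows, using the fact that the problem asks for cuts of minimum \emph{cardinality} (equivalently, unit edge capacities). First I would fix a maximum $s$-$t$ flow $f$ of value $\lambda$; since flow along a directed cycle can always be cancelled without changing the flow value, I may assume $f$ is acyclic. By the flow decomposition theorem, an acyclic integral flow of value $\lambda$ on a unit-capacity graph decomposes into $\lambda$ directed $s$-$t$ paths $P_1, \ldots, P_\lambda$, and these paths are edge-disjoint because no edge can carry more than one unit of flow. By the max-flow min-cut theorem, $\lambda$ equals the cardinality of every minimum $s$-$t$ cut.

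The chains will be exactly these paths. Formally, I would take the ground set to be the set $E' = \bigcup_{i} E(P_i)$ of flow-carrying edges, and define the relation $\le$ so that two edges are comparable only when they lie on a common path $P_i$, in which case they are ordered by their position along $P_i$ (say, from $s$ towards $t$). Since the paths are pairwise edge-disjoint, this makes $(E', \le)$ a disjoint union of $\lambda$ chains. Restricting the ground set to $E'$ is harmless for the diverse-solutions framework: every edge of a minimum cut is saturated in every maximum flow, hence carries flow and lies on exactly one $P_i$, so in fact every minimum cut is a subset of $E'$, whereas the edges of $E \setminus E'$ appear in no feasible solution and can be discarded.

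It then remains to establish the ``exactly one per chain'' property, which follows from a short counting argument. I would argue that any minimum cut $X$ contains at least one edge of each path $P_i$: removing $X$ destroys every $s$-$t$ path, and $P_i$ is an $s$-$t$ path, so $X \cap E(P_i) \neq \emptyset$. Because the $P_i$ are edge-disjoint, $X$ thus contains at least $\lambda$ distinct edges, one from each path; but $|X| = \lambda$, forcing $X$ to meet each $P_i$ in exactly one edge.

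I expect the main obstacle to be bookkeeping rather than a deep idea: justifying that the ground set may be taken to be the flow-carrying edges (i.e.\ that no minimum cut uses a flow-free edge), and ensuring that the chosen maximum flow is acyclic so that its decomposition yields genuine $s$-$t$ paths with no residual cycles. The orientation of each chain (from $s$ to $t$) is irrelevant for Property~\ref{property.1} itself, but I would fix it consistently at this stage, since it will determine the bottom and top elements of the lattice needed for Property~\ref{property.2}.
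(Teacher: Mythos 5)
Your proposal is correct and takes essentially the same route as the paper: both construct the chains from a maximum family of edge-disjoint $s$-$t$ paths (you obtain them via an acyclic maximum flow and flow decomposition, the paper invokes Menger's theorem directly, which is the same duality), order the edges along each path, and finish with the identical counting argument that a cut of minimum size $\lambda = r$ must meet each of the $r$ disjoint paths in exactly one edge. The only divergence is the bookkeeping for edges lying on no path: you discard them from the ground set (correctly justified, since no minimum cut uses them), whereas the paper keeps the ground set equal to $E$ --- as the lemma statement requires --- by distributing these residual edges arbitrarily over the $r$ chains; your version needs that one-line adjustment, or an explicit remark that replacing $E$ by the set $E'$ of flow-carrying edges leaves the feasible solutions and all diversity values unchanged.
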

\begin{proof}
    We construct the $r$ chains as follows. Let $\mathcal{P}$ be an (arbitrary) set of edge-disjoint $s$-$t$ paths in $G$ with maximum cardinality $r$. Define $E(p_i)$ as the set of edges traversed by the path $p_i \in \mathcal{P}$. For each path $p_i \in \mathcal{P}$, consider the order relation $\preceq_i$ defined as follows: for any $x, y \in E(p_i)$, we say $x \preceq_i y$ if and only if path $p_i$ meets edge $x$ before edge $y$, or if $x$ and $y$ are the same edge. Since every pair of edges within a path $p_i$ is comparable under this relation, each poset $(E(p_i), \preceq_i)$, for $i \in [r]$, forms a chain. Moreover, these chains are disjoint by the definition of the set $\mathcal{P}$. 

    By Menger's theorem, the size of a minimum $s$-$t$ cut in $G$ equals the maximum number of edge-disjoint $s$-$t$ paths, which is $r$. Consequently, any minimum $s$-$t$ cut $X \subseteq E$ must include exactly one edge from each chain $(E(p_i), \preceq_i)$, $i \in [r]$. Otherwise, if $X$ contained fewer than $r$ edges, it would not be a valid $s$-$t$ cut, and if it contained more, it would not be of minimum size. Hence, $\Gamma \subseteq E(p_1) \times \dots \times E(p_r)$. 

    Consider now the edges in $E' = E \setminus \bigcup_{1 \leq i \leq r} E(p_i)$. We call these edges \textit{residual edges}. Observe that these edges can never be part of a minimum $s$-$t$ cut. This follows because such a cut must contain exactly one edge from each chain in $\mathcal{P}$, and cutting any additional edge from $E'$ would only increase the cut size, violating minimality. Hence, we simply distribute the residual edges arbitrarily over the $r$ chains. This does not change the fact that the chains are disjoint, or that the set of minimum $s$-$t$ cuts is a subset of the cartesian product of the augmented chains. This completes the proof. 
\end{proof}

\begin{figure}
    \centering
    \begin{tikzpicture}[scale=0.9, transform shape]
    \tikzset{edge/.style = {->,> = latex}}
            \node[draw=black,circle,inner sep=2.5pt,minimum size=1pt] (v1) at (-1,0) {};
        \node[draw=black,circle,inner sep=2.5pt,minimum size=1pt] (v2) at (0,0) {};
        \node (dots1) at (1, 0) {$\dots$};
        \node[draw=black,circle,inner sep=2.5pt,minimum size=1pt] (w1) at (2,0) {};
        \node[draw=black,circle,inner sep=2.5pt,minimum size=1pt] (w2) at (3,0) {};
        \node[label=right:{$t$},draw=black,circle,fill=black,inner sep=2.5pt,minimum size=1pt] (dots2) at (4, 0) {};
        \draw[edge] (-2,0) node[left,label=left:{$s$},draw=black,circle,fill=black,inner sep=2.5pt,minimum size=1pt] {} -- (v1);

        \draw[edge,blue] (v1) -- node[above] {$x$} (v2);
    \draw[edge] (v2) -- (dots1);
    \draw[edge] (dots1) -- (w1);
    \draw[edge, blue] (w1) -- node[above] {$y$} (w2);
    \draw[edge] (w2) -- (dots2);
        \end{tikzpicture}
    \caption{Illustration of the order relation $\preceq_i$ over the edges of an $s$-$t$ path $p_i \in \mathcal{P}$.}
    \label{fig:posetPath}
\end{figure}
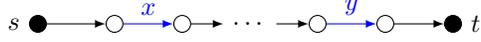 

By Lemma \ref{lemma.MinCuts.1}, the set $\Gamma \subseteq E(p_1) \times \dots \times E(p_r)$ of minimum $s$-$t$ cuts with componentwise order---defined by: $(x_1, \ldots, x_r) \preceq (y_1, \ldots, y_r)$ for $(x_1, \ldots, x_r), (y_1, \ldots, y_r) \in \Gamma$ iff $x_i \preceq_i y_i$ for all $i \in [r]$---forms a poset $L = (\Gamma, \preceq)$. It is well known that this poset defines a distributive lattice \cite{escalante1972schnittverbande, meyer1982lattices, halin1993lattices}. Specifically, proving that $\Gamma$ is closed under the joins and meets induced by $\preceq$ suffices to establish this property (see e.g., \cite[Claim A.1]{de2023finding}). Thus, property \ref{property.2} of Theorem \ref{theorem.1} follows directly. 

\begin{lemma}[Property 2] \label{lemma.MinCuts.2}
    The set $\Gamma$ of minimum $s$-$t$ cuts with componentwise order $\preceq$ defines a distributive lattice $L$. 
\end{lemma}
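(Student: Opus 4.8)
The plan is to realize $L=(\Gamma,\preceq)$ as a sublattice of the product-of-chains lattice supplied by Lemma~\ref{lemma.MinCuts.1}, and then invoke the closure properties of distributive lattices recorded in the preliminaries. By Lemma~\ref{lemma.MinCuts.1} we have $\Gamma \subseteq E(p_1)\times\dots\times E(p_r)$, where each $(E(p_i),\preceq_i)$ is a chain and hence a distributive lattice with $\vee=\max$ and $\wedge=\min$. Since a product of distributive lattices is distributive, the ambient poset $E(p_1)\times\dots\times E(p_r)$ with componentwise order is a distributive lattice whose join and meet are the componentwise maximum and minimum. Because a sublattice of a distributive lattice is itself distributive, it suffices to prove that $\Gamma$ is a sublattice, i.e., that $X\vee Y\in\Gamma$ and $X\wedge Y\in\Gamma$ for every pair of minimum $s$-$t$ cuts $X,Y\in\Gamma$.

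To establish this closure I would pass to the vertex-set (source-side) description of cuts and exploit submodularity of the directed cut function. Identify each minimum cut with the source side $A\subseteq V$ (with $s\in A$, $t\notin A$) that it induces, and let $c(A)$ denote the number of edges directed from $A$ to $V\setminus A$. The cut function $c$ is submodular, so $c(A\cap B)+c(A\cup B)\le c(A)+c(B)$; if $A$ and $B$ are both minimum (each attaining the minimum cut value $r$), then neither $c(A\cap B)$ nor $c(A\cup B)$ can drop below $r$, forcing both to equal $r$. Hence the source sides $A\cap B$ and $A\cup B$ again induce minimum $s$-$t$ cuts.

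The remaining---and most delicate---step is to match these set operations with the componentwise operations on the chain representation: I would show that the minimum cut induced by $A\cup B$ is exactly the edge tuple obtained by taking, on each path $p_i$, the later of the two crossing edges of $X$ and $Y$ (so that it equals $X\vee Y$), and dually that $A\cap B$ yields $X\wedge Y$. The content here is that, since each minimum cut crosses every $p_i$ exactly once (Lemma~\ref{lemma.MinCuts.1}), the position at which the source side leaves $p_i$ is monotone in $A$; combined with the fact that $A\cup B$ and $A\cap B$ are themselves minimum cuts (hence also cross each path exactly once), this pins down their crossing edges as the coordinatewise maximum and minimum, respectively. I expect this correspondence to be the main obstacle, since it is where the edge-disjointness of $\mathcal{P}$ and the single-crossing property must be combined carefully; once it is in place, closure under $\vee$ and $\wedge$ follows, and with it the distributive-lattice structure. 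Alternatively, one may bypass the explicit correspondence by appealing to the classical fact that minimum $s$-$t$ cuts form a distributive lattice \cite{escalante1972schnittverbande, meyer1982lattices, halin1993lattices} and to \cite[Claim A.1]{de2023finding}, which verifies precisely that $\Gamma$ is closed under the componentwise join and meet.
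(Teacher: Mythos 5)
Your proposal is correct, and its skeleton coincides with the paper's: realize $\Gamma$ as a subposet of the product of chains $E(p_1)\times\dots\times E(p_r)$, note that this ambient lattice is distributive and that sublattices of distributive lattices are distributive, and reduce everything to the closure of $\Gamma$ under componentwise join and meet. The difference is what happens at that closure step. The paper stops there: it declares the lattice structure ``well known,'' citing \cite{escalante1972schnittverbande, meyer1982lattices, halin1993lattices}, and points to \cite[Claim A.1]{de2023finding} for the verification of closure---exactly the fallback you mention in your last sentence. You, by contrast, sketch an actual proof of closure: submodularity of the directed cut function $c(\cdot)$ forces $c(A\cap B)=c(A\cup B)=r$ for source sides $A,B$ of minimum cuts (using that $A\cap B$ and $A\cup B$ still separate $s$ from $t$), and then a single-crossing/prefix argument identifies $\delta^+(A\cup B)$ with the componentwise maximum $X\vee Y$ and $\delta^+(A\cap B)$ with $X\wedge Y$. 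This correspondence does go through: since a minimum cut meets each $p_i$ exactly once, each path's vertex set intersects a source side in a prefix, unions and intersections of prefixes are prefixes, and since $\delta^+(A\cup B)$ and $\delta^+(A\cap B)$ are themselves minimum cuts with exactly one edge per path, containing the coordinatewise max (resp.\ min) edge pins them down completely. Your route buys a self-contained argument that makes visible precisely where edge-disjointness of $\mathcal{P}$ and Menger's theorem are used, at the cost of length; the paper's route is shorter but opaque, outsourcing the only nontrivial content to citations. One detail you should make explicit if you write this out in full: fix the canonical source side $A_X$ of an edge cut $X$ (vertices reachable from $s$ in $G-X$) and check via minimality that $\delta^+(A_X)=X$, since the edge-set/vertex-set correspondence is not bijective in general (e.g., vertices lying on no $s$-$t$ path can be moved across the cut without changing its edges).
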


Next, we establish that a compact representation of the lattice of minimum $s$-$t$ cuts can be constructed in polynomial time. As with Lemma \ref{lemma.MinCuts.2}, this result is well known from the work of Picard and Queyranne \cite{picard1980structure}, who provided an algorithm to construct this representation via a so-called residual graph. 

\begin{lemma}[Property 3] \label{lemma.MinCuts.3}
    Let $L$ be the distributive lattice of $s$-$t$ mincuts in a graph $G$, there is a compact representation $G(L)$ of $L$ with the following properties: 
    \begin{enumerate}
        \item The vertex set is $J(L) \cup 0_L$,
        \item $|G(L)| \leq |V(G)|$,
        \item Given $G$ as input, $G(L)$ can be constructed in $O(|V(G)|^2)$ time. 
    \end{enumerate}
\end{lemma}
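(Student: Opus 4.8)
The plan is to follow the classical construction of Picard and Queyranne \cite{picard1980structure}, which encodes \emph{all} minimum $s$-$t$ cuts of $G$ in the strongly connected component structure of a single residual graph. First I would compute a maximum $s$-$t$ flow $f$ in $G$ (treating edges as unit-capacity, so that the flow value equals $r$, the mincut size) and build the associated residual graph $G_f$: for each edge $(u,v)$ carrying no flow keep the forward arc $(u,v)$, and for each saturated edge insert the reverse arc $(v,u)$. Since $f$ is maximum there is no augmenting path, so $t$ is unreachable from $s$ in $G_f$, and in particular $s$ and $t$ lie in distinct strongly connected components.

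Next I would compute the strongly connected components (SCCs) of $G_f$ and form its condensation $D$, the DAG whose nodes are the SCCs. The key structural fact --- the Picard--Queyranne correspondence --- is that a vertex set $A$ with $s \in A$ and $t \notin A$ is the source side of a minimum $s$-$t$ cut if and only if $A$ is a union of SCCs that is closed under the residual arcs of $D$ (no arc of $G_f$ leaves $A$). Such closed sets are exactly the order ideals of the poset $P$ underlying $D$, and under union and intersection they form a distributive lattice isomorphic to $L$. Invoking Birkhoff's theorem (Theorem \ref{thm.birkhoff}), $P$ is then precisely the poset $J(L)$ of join-irreducibles: the source-reachable component plays the role of the bottom cut $0_L$, and each remaining non-sink component is a join-irreducible. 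Reading the Hasse relations off the arcs of $D$ (after discarding the sink component and identifying the source component with $0_L$) yields a compact representation $G(L)$ whose vertex set is $J(L) \cup 0_L$, establishing property~1.

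For the size bound (property~2), I would simply observe that the SCCs partition $V(G)$, so the number of nodes of $D$, and hence of $G(L)$, is at most the number of SCCs, which is at most $|V(G)|$. For the running time (property~3), a maximum flow is computed in polynomial time; given $f$, the residual graph is assembled in $O(|V|+|E|)$ time, its SCCs and condensation are found by a linear-time (e.g. Tarjan) traversal, and the Hasse edges are just the arcs of $D$. Since $|E| = O(|V|^2)$, the structural construction is $O(|V(G)|^2)$, matching the bound stated by Picard and Queyranne \cite{picard1980structure}.

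The main obstacle is not the construction itself, which is routine once the flow is in hand, but the verification of the Picard--Queyranne correspondence: that the residual-closed sets are in bijection with minimum cuts and that this bijection is a lattice isomorphism onto $L$. Since this is exactly the result of \cite{picard1980structure} (and the same representation reused by De Berg et al.\ \cite{de2023finding}), I would cite it rather than reprove it, and spend the remaining care on matching the component poset to $J(L) \cup 0_L$ and confirming that both the $\leq |V(G)|$ size bound and the $O(|V(G)|^2)$ time bound hold as stated.
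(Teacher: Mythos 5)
Your proposal is correct and matches the paper's approach exactly: the paper offers no proof of this lemma beyond deferring to Picard and Queyranne's residual-graph construction, and your argument is precisely a reconstruction of that construction (max flow, residual graph, strongly connected components, closed sets as ideals, Birkhoff). The only caveat worth noting is that the $O(|V(G)|^2)$ bound as stated covers the whole computation starting from $G$, whereas your accounting cleanly covers only the post-flow structural part and delegates the flow computation to the cited bound---but since the paper itself rests entirely on the same citation, this is consistent with the paper's treatment.
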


Then, by Theorem \ref{theorem.1} and Lemmas \ref{lemma.MinCuts.1}-\ref{lemma.MinCuts.3}, we obtain a polynomial time algorithm for \textsc{Max-Sum $k$-Diverse Minimum s-t Cuts} via submodular function minimization. 

\begin{theorem} \label{theorem.2}
    \textsc{Max-Sum $k$-Diverse Minimum s-t Cuts} can be solved in polynomial time. 
\end{theorem}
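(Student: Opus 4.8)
The plan is to invoke the framework of Theorem~\ref{theorem.1} directly. Since that theorem guarantees a polynomial-time algorithm for \textsc{Max-Sum $k$-Diverse Solutions} whenever the feasible set $\Gamma$ satisfies properties~\ref{property.1}--\ref{property.3}, the entire argument reduces to verifying that the set of minimum $s$-$t$ cuts meets all three structural conditions; once they are confirmed, the result follows with no further work. So the proof is really an application of the machinery of Section~\ref{sec.3}, and its substance lies entirely in the three verification lemmas.

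First I would establish property~\ref{property.1}, the chain decomposition. The natural idea is to fix a maximum collection $\mathcal{P}$ of edge-disjoint $s$-$t$ paths; by Menger's theorem its size $r$ equals the minimum cut value. Orienting the edges of each path along its direction of travel yields $r$ disjoint chains, and minimality forces every mincut to select exactly one edge from each chain, so $\Gamma \subseteq E(p_1) \times \cdots \times E(p_r)$. The one subtlety I would treat carefully is the residual edges (those lying on no chosen path): I would argue they can never belong to a mincut, so distributing them arbitrarily among the chains preserves both disjointness and the containment in the Cartesian product. This is precisely Lemma~\ref{lemma.MinCuts.1}.

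Next I would establish properties~\ref{property.2} and~\ref{property.3} by appealing to the classical theory of mincut lattices. Property~\ref{property.2}---that $\Gamma$ under the componentwise order $\preceq$ is a distributive lattice---follows from the well-known closure of minimum $s$-$t$ cuts under the meet and join induced by $\preceq$ (Lemma~\ref{lemma.MinCuts.2}). Property~\ref{property.3}---that a compact representation of the lattice can be built efficiently---follows from the Picard--Queyranne construction via the residual graph, which computes $G(L)$ in $O(|V(G)|^2)$ time (Lemma~\ref{lemma.MinCuts.3}). Both of these are standard results that I would cite rather than reprove.

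With all three properties in hand, the theorem is immediate from Theorem~\ref{theorem.1}. I expect the main obstacle to lie not in the high-level combination---which is essentially a one-line deduction---but in the clean verification of property~\ref{property.1}, specifically in arguing that residual edges are irrelevant and that the augmented chains still yield $\Gamma \subseteq E(p_1) \times \cdots \times E(p_r)$. The distributive-lattice and compactness properties are classical and can be invoked directly, so the only genuinely new bookkeeping is in the chain-decomposition step.
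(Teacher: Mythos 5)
Your proposal is correct and follows exactly the paper's own route: the chain decomposition from a maximum family of edge-disjoint $s$-$t$ paths via Menger's theorem (including the treatment of residual edges), the classical closure of mincuts under componentwise meet and join for distributivity, and the Picard--Queyranne residual-graph construction for the compact representation, after which Theorem~\ref{theorem.1} applies immediately. No meaningful differences from the paper's argument.
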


\begin{remark}
Similar results to those presented in Lemmas \ref{lemma.MinCuts.1}-\ref{lemma.MinCuts.3} can be established for \textit{minimum $s$-$t$ vertex cuts}. Since a vertex-connectivity version of Menger's theorem also exists, the arguments in Lemma \ref{lemma.MinCuts.1} remain valid when replacing $E$ with $V$. Moreover, the poset of minimum $s$-$t$ vertex cuts, ordered componentwise, forms a distributive lattice, which can be demonstrated analogously to Lemma \ref{lemma.MinCuts.2}. Finally, a compact representation of this lattice can be computed in polynomial time, as shown by Bonsma \cite[Sec. 6]{bonsma2010most}, or via the constructive version of Birkhoff’s theorem for computing a \textit{slice}, as described by Garg \cite{mittal2001computation} (see also \cite[Ch. 10]{garg2015introduction}).
\end{remark}

\subsection{Stable Matchings} \label{section.stableMatchings}
Finding a matching is one of the most fundamental combinatorial problems in graphs. Given a graph $G = (V, E)$, a \textit{matching} is any subset $M \subseteq E$ of edges such that no two edges in $M$ have a common endpoint. A matching $M$ is called a \textit{perfect matching} if every vertex in $G$ is incident to an edge in $M$. 

In the \textsc{Stable Matching} problem, we are given a complete bipartite graph $K_{n,n} = (A \cup B, E)$ along with a linear ordering $\preceq_a$ over $B$ for each vertex $a \in A$, and similarly a linear ordering $\preceq_b$ over $A$ for each vertex $b \in B$. For a vertex $a \in A$ (resp. $b \in B$), the poset $L_a = (B, \preceq_v)$ (resp. $L_b = (A, \preceq_v)$) is referred to as its \textit{preference list}. The task is to find a perfect matching $M$ in $K_{n, n}$ such that no two vertices $a \in A$ and $b \in B$ prefer each other over their matched partners. Such a matching is called a \textit{stable matching}. 

The \textsc{Stable Matching} problem models a wide range of real-world problems where two disjoint sets of entities are to be matched based on strict preferences (see e.g., \cite{Roth_Sotomayor_1990} for an overview). In this section, we consider the problem of finding diverse stable matchings. 

\begin{extthm}[\textsc{Max-Sum $k$-Diverse Stable Matching}] \label{problem:3}
Given are a complete bipartite graph $K_{n,n} = (A \cup B, E)$, along with preference lists $L_a$ and $L_b$ for each $a \in A$ and $b \in B$. Let $\Gamma \subseteq 2^E$ be the set of stable matchings in $G$, and let $\Gamma_k$ denote the set of $k$-multisets over $\Gamma$. Find $C \in \Gamma_k$ such that $d_\mathrm{sum}(C) = \max_{S \in \Gamma_k} d_\mathrm{sum}(S)$.
\end{extthm}

We now show that \textsc{Max-Sum $k$-Diverse Stable Matching} can be solved in polynomial time by proving that the set $\Gamma$ of stable matchings satisfies properties \ref{property.1}-\ref{property.3} of Theorem \ref{theorem.1}. 

\begin{lemma}[Property 1] \label{lemma.StableMatching.1}
    There is a chain decomposition of the edge set $E$ into $r$ disjoint chains, such that each stable matching $X \subseteq E$ contains exactly one element from each chain.  
\end{lemma}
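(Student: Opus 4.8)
The plan is to mirror the structure of the minimum $s$-$t$ cut argument (Lemma \ref{lemma.MinCuts.1}), exploiting the fact that every stable matching in $K_{n,n}$ is a \emph{perfect} matching. The chains will be indexed by the vertices of $A$: since each stable matching matches every $a \in A$ to exactly one partner, the edges incident to a fixed vertex $a$ play the role that the edges of a single $s$-$t$ path played in the mincut case, namely a pool from which each feasible solution selects exactly one element.

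First I would recall the classical structural fact that, in the complete bipartite setting with complete strict preference lists, the Gale--Shapley algorithm terminates with a perfect stable matching, and that by the Rural Hospitals (or ``Lone Wolf'') theorem the set of matched vertices is identical across all stable matchings. Consequently every $X \in \Gamma$ is a perfect matching, so each $a \in A$ is matched to exactly one $b \in B$ in every stable matching.

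Next I would define the chain decomposition. I set $r = n$ and, for each $a \in A$, let $E_a = \{(a,b) : b \in B\}$ be the set of the $n$ edges incident to $a$, ordered by $a$'s preference list: $(a,b) \preceq_a (a,b')$ iff $a$ weakly prefers $b$ over $b'$. Since $\preceq_a$ is a strict linear order on $B$, each $(E_a, \preceq_a)$ is a chain; and because every edge of $K_{n,n}$ has a unique endpoint in $A$, the chains $\{E_a\}_{a \in A}$ are pairwise disjoint and partition $E$. Combining this with perfection, any $X \in \Gamma$ contains precisely the edge joining $a$ to its unique partner from $E_a$, hence exactly one element from each chain, giving $\Gamma \subseteq \prod_{a \in A} E_a$.

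The main obstacle---indeed essentially the only nontrivial ingredient---is the perfection claim: ``exactly one element per chain'' would weaken to ``at most one'' if some vertex of $A$ could be left unmatched in a stable matching, so I would take care to justify that all stable matchings share the same (here, full) set of matched agents. A minor point, relevant to the later Property \ref{property.2} rather than to this lemma, is the direction chosen for $\preceq_a$: orienting each chain by the man's preference (favorite partner as the chain minimum) is what will make the componentwise order coincide with the man-optimal stable-matching lattice order, where the meet assigns each man the better of his two partners; but Property \ref{property.1} itself holds for any total order on each $E_a$. Edges that never appear in any stable matching---the analogue of residual edges---may simply be kept inside their chain without affecting the ``exactly one'' property.
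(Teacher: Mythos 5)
Your proposal is correct and takes essentially the same approach as the paper: the chains are the edge sets $E(a) = \{(a,b) : b \in B\}$ ordered by each $a$'s preference list, disjointness follows from bipartiteness, and ``exactly one edge per chain'' follows from perfection. The only difference is your detour through Gale--Shapley and the Rural Hospitals theorem to justify perfection, which the paper does not need because it defines a stable matching to be a \emph{perfect} matching admitting no blocking pair.
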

\begin{proof}
    Let $r = n$. Note that the posets $L_a$ and $L_b$ are chains. We claim that the chains $L_a$, $a \in A$ define a disjoint chain decomposition of the ground set $E$.\footnote{Note that we may also choose the chains $L_b$, $b \in B$ and get similar results.} First, we argue for disjointness. Let $E(a) = \{(a, b) \; | \; b \in L_a\}$ denote the set of edges defined by the preference list $L_a$ of an arbitrary vertex $a \in A$. Since $K_{n, n}$ is bipartite, there are no edges between the vertices of $A$. This implies that $E(a_1) \cap E(a_2) = \emptyset$ for all distinct $a_1, a_2 \in A$. Moreover, $E = \bigcup_{a \in A} E(a)$. Hence, the chains $L_a$, $a \in A$ define a disjoint chain decomposition of $E$. 
    
    Now, we argue that a stable matching must contain exactly one element from each chain $L_a$. This follows immediately from the definition of perfect matching, which requires every vertex in $A$ to be matched to exactly one vertex in $B$. Consequently, each stable matching selects precisely one edge from $E(a)$ for each $a \in A$. This completes the proof of the lemma. 
\end{proof}

To verify property \ref{property.2}, we use the following well-established result from the stable matchings literature \cite{knuth1997stable, blair1988lattice}. 

\begin{claim}[{\cite[Thm. 7 \& Cor. 1]{knuth1997stable}}] \label{claim.StableMatching.1}
    Given any two stable matchings $X = ((a_1, b_1), \ldots, (a_n, b_n))$ and $Y = ((a_1, b_1'), \ldots, (a_n, b_n'))$, then  
    \begin{align*}
        X \vee Y & = ((a_1, \max_{\preceq_{a_1}}(b_1, b_1')), \ldots, (a_n, \max_{\preceq_{a_n}}(b_n, b_n')))) \quad \text{and} \\
        X \wedge Y & = ((a_1, \min_{\preceq_{a_1}}(b_1, b_1')), \ldots, (a_n, \min_{\preceq_{a_n}}(b_n, b_n'))))
    \end{align*}
    are also stable matchings.
\end{claim}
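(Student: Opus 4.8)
The plan is to verify the two assertions hidden in the claim --- that $X \vee Y$ and $X \wedge Y$ are (i) perfect matchings and (ii) stable --- treating the statement as a self-contained lemma even though it is classical. Throughout, for a matching $M$ I write $M(a)$ for the partner of $a \in A$ and $M(b)$ for the partner of $b \in B$, and I say ``$a$ prefers $b$ to $b'$'' to mean $b \succ_a b'$. With this notation, $X \vee Y$ assigns to each $a \in A$ the woman $\max_{\preceq_a}(X(a), Y(a))$ and $X \wedge Y$ the woman $\min_{\preceq_a}(X(a), Y(a))$.

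First I would establish the crux of the argument, a polarity (opposition) lemma: for any two stable matchings $X, Y$ and any $a \in A$, if $a$ strictly prefers $X(a)$ to $Y(a)$, then the woman $b := X(a)$ strictly prefers her $Y$-partner $Y(b)$ to her $X$-partner $a$. This is proved by contradiction through the blocking-pair definition: if instead $b$ preferred $a = X(b)$ to $Y(b)$, then in the matching $Y$ the man $a$ would prefer $b$ to $Y(a)$ and the woman $b$ would prefer $a$ to $Y(b)$, so $(a,b)$ would block $Y$, contradicting its stability. I expect this lemma to be the main obstacle, since everything else follows from it fairly mechanically; its content is precisely that over two stable matchings the two sides have exactly reversed preferences.

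Next I would use the polarity lemma to show $X \vee Y$ is a perfect matching. Fix a woman $b$ with $X$-partner $a_1 = X(b)$ and $Y$-partner $a_2 = Y(b)$ (if $a_1 = a_2$ she is trivially matched once). The man $a_1$ selects $b$ in $X \vee Y$ exactly when $a_1$ prefers $X(a_1) = b$ to $Y(a_1)$, and symmetrically for $a_2$. Applying the polarity lemma from $b$'s side shows that exactly one of $a_1, a_2$ makes this choice --- namely the one $b$ likes less --- so each woman is chosen by exactly one man and $X \vee Y$ is a bijection, hence a perfect matching. The same bookkeeping records a fact I will reuse: in $X \vee Y$ every woman is matched to the worse of her two partners, i.e. $(X \vee Y)(b) = \min_{\preceq_b}(X(b), Y(b))$. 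The argument for $X \wedge Y$ is symmetric, with women receiving their better partner.

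Finally I would prove stability of $X \vee Y$ by contradiction. Suppose $(a, b)$ blocks $X \vee Y$. Then $a$ prefers $b$ to $(X \vee Y)(a) = \max_{\preceq_a}(X(a), Y(a))$, so $a$ prefers $b$ to both $X(a)$ and $Y(a)$. On the other side, by the fact recorded above, $(X \vee Y)(b)$ is the worse of $X(b)$ and $Y(b)$; since $b$ prefers $a$ to $(X \vee Y)(b)$, she prefers $a$ to whichever of $X(b), Y(b)$ is that worse one. If the worse partner is $X(b)$, then $a$ prefers $b$ to $X(a)$ and $b$ prefers $a$ to $X(b)$, so $(a,b)$ blocks $X$; if it is $Y(b)$, then $(a,b)$ blocks $Y$. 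Either way we contradict the stability of $X$ or of $Y$, so $X \vee Y$ has no blocking pair. The identical argument (using that $X \wedge Y$ gives women their better partner and men their worse) shows $X \wedge Y$ is stable, completing the claim.
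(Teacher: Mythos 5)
Your proposal is correct, but it is worth noting that the paper does not prove this claim at all: it is imported as a black box, cited to Knuth's monograph (Thm.~7 \& Cor.~1), exactly because it is the classical ``lattice of stable matchings'' theorem attributed to Conway. What you have written is essentially that classical proof: the polarity (opposition) lemma via a blocking-pair contradiction, the observation that under $\vee$ each woman receives the worse of her two partners, and the case split sending any blocking pair of $X \vee Y$ to a blocking pair of $X$ or of $Y$. So your route differs from the paper's only in that you make the paper self-contained where the authors chose to delegate to the literature; what the citation buys them is brevity, and what your proof buys is a verification that the componentwise $\max/\min$ operations named in the claim really are the lattice operations, which is the form in which the paper actually uses the result (Lemma~\ref{lemma.StableMatching.2} needs closure of $\Gamma$ under exactly these operations).

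One step deserves tightening. You assert that the polarity lemma ``from $b$'s side'' shows that \emph{exactly} one of $a_1 = X(b)$, $a_2 = Y(b)$ selects $b$, namely the one $b$ likes less. The polarity lemma by itself only gives ``at most one'': it shows the man $b$ prefers \emph{cannot} select her, but no purely local argument forces the other man to select her (his comparison involves a third woman, about whom polarity says nothing). The correct order of deductions is: at most one man selects each woman (polarity), hence the selection map from the $n$ men into the $n$ women is injective, hence bijective by counting, and only \emph{then} does it follow that each woman is selected by exactly one man --- necessarily the one she likes less. Since you do invoke bijectivity to conclude the perfect-matching property, the counting argument is implicitly present and the proof goes through; but as written the ``exactly one'' claim is an overstatement of what polarity delivers, and the downstream fact $(X \vee Y)(b) = \min_{\preceq_b}\bigl(X(b), Y(b)\bigr)$, which your stability argument relies on, should be derived after bijectivity rather than before it.
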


By standard results in lattice theory (see e.g., \cite{gratzer2009lattice}), the cartesian product $E_\mathrm{prod} = E(a_1) \times \dots \times E(a_n)$, with componentwise order $\preceq$, forms a distributive lattice $(E_\mathrm{prod}, \preceq)$. Then, by Lemma \ref{lemma.StableMatching.1} and Claim \ref{claim.StableMatching.1}, the poset $L = (\Gamma, \preceq)$ is a sublattice of $E_\mathrm{prod}$, which implies that $L$ is also distributive.

\begin{lemma}[Property 2] \label{lemma.StableMatching.2}
    The set $\Gamma$ of stable matchings with componentwise order $\preceq$ defines a distributive lattice $L$. 
\end{lemma}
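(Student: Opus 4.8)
The plan is to show that the set $\Gamma$ of stable matchings, equipped with the componentwise order $\preceq$ inherited from the chain decomposition of Lemma \ref{lemma.StableMatching.1}, is a \emph{sublattice} of the distributive lattice $(E_\mathrm{prod}, \preceq)$, where $E_\mathrm{prod} = E(a_1) \times \dots \times E(a_n)$. Once this is established, distributivity of $L$ follows for free: a sublattice of a distributive lattice is itself distributive (as recalled in the Preliminaries), and $E_\mathrm{prod}$ is distributive because it is a product of the chains $L_{a}$, $a \in A$, each of which is a distributive lattice. So the entire content of the lemma reduces to verifying that $\Gamma$ is closed under the join $\vee$ and meet $\wedge$ operations of $E_\mathrm{prod}$.

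First I would make explicit the correspondence between the abstract componentwise order and the preference orders. By Lemma \ref{lemma.StableMatching.1}, each stable matching $X$ is identified with a tuple $((a_1, b_1), \ldots, (a_n, b_n))$ selecting one edge $(a_i, b_i) \in E(a_i)$ per chain, and the order $\preceq_{a_i}$ on the chain $E(a_i)$ is precisely the preference list of $a_i$ over $B$. Hence the componentwise join and meet of two matchings $X, Y \in \Gamma$ are exactly the tuples formed by taking, coordinate by coordinate, the $\preceq_{a_i}$-maximum and $\preceq_{a_i}$-minimum of the two matched partners. This is the ``$A$-optimal'' direction of the classical construction, and it matches verbatim the expressions for $X \vee Y$ and $X \wedge Y$ given in Claim \ref{claim.StableMatching.1}.

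The second and decisive step is to invoke Claim \ref{claim.StableMatching.1}, which states precisely that these coordinatewise max/min tuples are again stable matchings. This is the classical theorem of Knuth (attributed there to Conway) that the set of stable matchings is closed under these two operations, and it is exactly the closure property needed. Since $X \vee Y$ and $X \wedge Y$ lie in $\Gamma$ for every $X, Y \in \Gamma$, the set $\Gamma$ is closed under the lattice operations of $E_\mathrm{prod}$, and therefore $L = (\Gamma, \preceq)$ is a sublattice of $E_\mathrm{prod}$.

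The main obstacle is entirely conceptual rather than technical: one must check that the componentwise max/min dictated by the abstract order $\preceq$ coincides with the partner-wise preference max/min appearing in Claim \ref{claim.StableMatching.1}, so that the cited closure result is genuinely applicable to our lattice. Since the chain $E(a_i)$ is ordered exactly by $a_i$'s preferences, this identification is immediate, and the heavy lifting---proving that max/min of stable matchings are stable---is deferred to the cited literature. Thus the proof is a short assembly: establish the sublattice embedding via Claim \ref{claim.StableMatching.1}, then conclude distributivity from the sublattice property. I would conclude by noting that $L$ is a distributive lattice, completing the verification of property \ref{property.2} for stable matchings.
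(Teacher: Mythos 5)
Your proposal is correct and follows essentially the same route as the paper: both identify $\Gamma$ as a subset of the product-of-chains lattice $E_\mathrm{prod}$ via Lemma \ref{lemma.StableMatching.1}, invoke Claim \ref{claim.StableMatching.1} (Knuth) to get closure under componentwise join and meet, and conclude that $L$ is a sublattice of a distributive lattice and hence distributive. Your extra step of checking that the componentwise order coincides with the preference orders is a worthwhile explicit detail that the paper leaves implicit.
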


It only remains to verify that property \ref{property.3} of Theorem \ref{theorem.1} is satisfied by the set $\Gamma$ of stable matchings. As with minimum $s$-$t$ cuts, this property follows directly, since the so-called \textit{poset of rotations} introduced by Gusfield \cite{gusfield1989stable} provides the required structure (see also, e.g., \cite[Sec 2.3]{gangam2022structural}).  

\begin{lemma}[Property 3 {\cite[Lemma 3.3.2]{gusfield1989stable}}] \label{lemma.StableMatching.3}
    A compact representation of the lattice $L$ of stable matchings can be constructed in $O(|V|^2)$ time. 
\end{lemma}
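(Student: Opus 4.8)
The plan is to obtain the compact representation through Birkhoff's theorem (Theorem~\ref{thm.birkhoff}). Since $L$ is a distributive lattice by Lemma~\ref{lemma.StableMatching.2}, it suffices to exhibit its poset of join-irreducibles $J(L)$ together with the order induced from $L$, and then argue that the Hasse diagram of $J(L)$ can be built in $O(|V|^2)$ time. The crux is the classical fact that, for stable matchings, this poset of join-irreducibles is precisely the \emph{rotation poset}.

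First I would compute the man-optimal stable matching $M_0$, which is the bottom element $0_L$ of the lattice, using the Gale--Shapley deferred-acceptance algorithm in $O(n^2)$ time. This anchors the representation, because every stable matching is reachable from $M_0$ by a sequence of \emph{rotation eliminations}. I would then recall the notion of an exposed rotation: a cyclic sequence of pairs $(a_0, b_0), \ldots, (a_{\ell-1}, b_{\ell-1})$ in a stable matching $M$ whose elimination---rematching each $a_i$ from $b_i$ to $b_{i+1 \bmod \ell}$---produces another stable matching lying immediately above $M$ in $L$.

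The key structural step is to identify $J(L)$ with the rotation poset. I would invoke the standard result that stable matchings are in an order-preserving bijection with the ideals of the rotation poset: a matching $M$ corresponds to the down-closed set of rotations eliminated along any path from $M_0$ to $M$, and this set is an ideal. Comparing this bijection with Birkhoff's theorem shows that each rotation is a join-irreducible of $L$, and that the precedence relation among rotations is exactly the order $J(L)$ inherits from $L$. Thus specifying the rotations and their covering relations specifies $J(L)$.

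Finally I would bound the size and construction cost. Since each pair $(a,b)$ participates in at most one rotation as $a$ is rotated down its preference list, there are $O(n^2)$ rotations, so $|J(L)| = O(n^2)$. Gusfield's algorithm \cite{gusfield1989stable} traces the elimination process and extracts all rotations together with their covering relations in $O(n^2)$ time; since $|V| = 2n$, this is $O(|V|^2)$. The resulting Hasse diagram of $J(L)$ is the desired compact representation. The main obstacle is exactly this last identification-and-construction step: establishing the rotation-to-join-irreducible correspondence and producing all rotations with their order in a single $O(n^2)$ sweep (rather than repeatedly re-running Gale--Shapley) rests on the detailed combinatorics of exposed rotations and minimal differences, and for this I would cite Gusfield's analysis rather than rederive it.
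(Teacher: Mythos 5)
Your proposal matches the paper's approach: the paper likewise establishes Property~3 by appealing to Gusfield's rotation poset \cite{gusfield1989stable}, whose elements serve as the join-irreducibles $J(L)$ and whose $O(|V|^2)$-time construction gives the compact representation via Birkhoff's theorem. You simply unpack the citation in more detail (Gale--Shapley anchor, rotation eliminations, the ideal bijection, the $O(n^2)$ bound on the number of rotations), while deferring to Gusfield for the same核心 construction the paper cites -- so the two arguments are essentially identical.
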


Then, by Theorem \ref{theorem.1} and Lemmas \ref{lemma.StableMatching.1}-\ref{lemma.StableMatching.3}, the following theorem holds. 

\begin{theorem} \label{theorem.3}
    \textsc{Max-Sum $k$-Diverse Stable Matching} can be solved in polynomial time. 
\end{theorem}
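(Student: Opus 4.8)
The plan is to obtain Theorem \ref{theorem.3} as a direct instantiation of the general framework of Theorem \ref{theorem.1}: it suffices to verify that the set $\Gamma$ of stable matchings in $K_{n,n}$ satisfies the three structural properties, and then invoke Theorem \ref{theorem.1}. These three verifications are precisely the content of Lemmas \ref{lemma.StableMatching.1}--\ref{lemma.StableMatching.3}, so the proof amounts to assembling them.

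First I would confirm Property \ref{property.1}. By Lemma \ref{lemma.StableMatching.1}, taking $r = n$ and letting each chain be the preference list $L_a$ of a vertex $a \in A$ yields a decomposition of the ground set $E$ into $n$ disjoint chains, and every perfect matching---hence every stable matching---selects exactly one edge incident to each $a \in A$, i.e.\ exactly one element from each chain. Thus $\Gamma \subseteq E(a_1) \times \dots \times E(a_n)$, as required by the framework.

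Next, for Property \ref{property.2}, I would appeal to Lemma \ref{lemma.StableMatching.2}: Claim \ref{claim.StableMatching.1} shows that the componentwise join and meet of two stable matchings (assigning to each $a \in A$ the more- and the less-preferred, respectively, of its two partners) are again stable, so $\Gamma$ is closed under these operations and forms a sublattice of the distributive lattice $(E_\mathrm{prod}, \preceq)$; being a sublattice of a distributive lattice, $L = (\Gamma, \preceq)$ is itself distributive. For Property \ref{property.3}, Lemma \ref{lemma.StableMatching.3} supplies a compact representation of $L$ constructible in $O(|V|^2)$ time via Gusfield's poset of rotations.

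With all three properties in hand, the result follows by applying Theorem \ref{theorem.1}, which reduces \textsc{Max-Sum $k$-Diverse Stable Matching} to submodular function minimization on the distributive lattice $L^*$ of left-right ordered $k$-tuples, solvable in time polynomial in $|J(L^*)| = O(k \cdot |J(L)|)$. Since the ground set has size $|E| = n^2$ and $k$ is a fixed constant, this is polynomial in $n$. The genuinely nontrivial content sits in Properties \ref{property.2} and \ref{property.3}---the lattice structure and the efficiently constructible compact representation---but both are classical results from the stable-matching literature already captured by the cited lemmas; consequently the proof of Theorem \ref{theorem.3} itself presents no real obstacle and is a straightforward assembly of these ingredients.
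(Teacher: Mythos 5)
Your proposal is correct and follows exactly the same route as the paper: Theorem \ref{theorem.3} is obtained by verifying Properties \ref{property.1}--\ref{property.3} through Lemmas \ref{lemma.StableMatching.1}--\ref{lemma.StableMatching.3} (chain decomposition via preference lists, distributivity via Claim \ref{claim.StableMatching.1}, and Gusfield's rotation poset for the compact representation) and then invoking Theorem \ref{theorem.1}. Your added remarks on the size of $J(L^*)$ and the polynomial bound in $n$ are consistent with the paper's Compactness Lemma, so nothing further is needed.
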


%% Section 10.4 of Vijay's book
%\subsection{Regular Predicates in Distributed Computations}\label{section.distribComp}

\section{Other Diversity Measures} \label{sec.5}
The proof of Theorem \ref{theorem.1} relies on four lemmas, with the diversity measure playing a role in only two of them: the Cost-Equivalence (Lemma \ref{lemma.2}) and Submodularity (Lemma \ref{lemma.3}) lemmas. For simplicity, we have presented our main result in terms of the $d_\mathrm{sum}$ diversity measure. However, the framework is not limited to this specific measure. Just as it applies to problems whose solution sets satisfy the properties of Theorem \ref{theorem.1}, it also extends to other diversity measures, provided they satisfy both the Cost-Equivalence and Submodularity lemmas.

Here, we mention two such diversity measures: the \textit{coverage} diversity $d_\mathrm{cov}$, and the $L_1$- or \textit{absolute-difference} diversity $d_{abs}$. Let $E$ be a finite set with $n$ elements, and let $\Gamma \subseteq 2^E$ be a set of feasible solutions. Given a $k$-tuple of feasible solutions $(X_1, \ldots, X_k) \in \Gamma^k$, these measures are defined as follows:
\begin{align}
    d_\mathrm{cov}(X_1, X_2, \ldots, X_k) & = \bigcup_{1 \leq i \leq k} |X_i|, \quad \text{and} \label{eq:coverage} \\
    d_\mathrm{abs}(X_1, X_2, \ldots, X_k) & = \sum_{1 \leq i < j \leq k} f(X_i, Y_j), \label{eq:L1}
\end{align}
where $f(X, Y) = \sum_i^r \left\|x_i - y_i\right\|$ for any two $X = (x_1, \ldots, x_r), Y = (y_1, \ldots, y_r) \in \Gamma$.

The coverage diversity measures the number of distinct elements appearing across solutions, while the absolute-difference diversity quantifies diversity by summing coordinate-wise differences between solutions. We remark that the absolute-difference measure is defined only for solutions that can be represented as an $r$-tuple, as the function $f$ requires component-wise comparisons between elements in the ground set $E$. Moreover, it requires a notion of difference between elements of $E$ (e.g., the set of $r$-dimensional integer vectors in $[-M, M]^r$, with $M \in \mathbb{N}$). 

In the remainder of this section, we demonstrate that $d_\mathrm{cov}$ (Section \ref{sec.cov}) and $d_\mathrm{abs}$ (Section \ref{sec.abs}) satisfy the Cost-Equivalence and Submodularity lemmas of Section \ref{sec.3}, and thus can be used as optimization objectives to generate diverse solutions efficiently.

\subsection{Coverage Diversity} \label{sec.cov}

We consider the problem of finding maximally diverse solutions with respect to the coverage diversity measure, defined formally as follows. 

\begin{extthm}[\textsc{Max-Cov $k$-Diverse Solutions}] \label{problem:6} 
Given a finite set $E$ of size $n$ and a membership oracle for $\Gamma \subseteq 2^E$, find a $k$-element multiset $C = (X_1, X_2, \ldots, X_k)$ with $X_1, X_2, \ldots, X_k \in \Gamma$, such that $d_\mathrm{cov}(C)$ is maximum. 
\end{extthm}

We prove the following result by means of the reduction to submodular function minimization established in Section \ref{sec.3}. 

\begin{theorem} \label{theorem.5}
    \textsc{Max-Cov $k$-Diverse Solutions} can be solved in polynomial time if the set of feasible solutions $\Gamma$ satisfies the three properties of Theorem \ref{theorem.1}. 
\end{theorem}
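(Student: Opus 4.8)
The plan is to reuse the reduction of Section~\ref{sec.3} essentially verbatim. The construction of the left-right ordered lattice $L^*$ (Lemma~\ref{lemma.1}) and of its compact representation (Lemma~\ref{lemma.4}) never refer to the diversity measure, so they carry over unchanged; only the Cost-Equivalence Lemma (Lemma~\ref{lemma.2}) and the Submodularity Lemma (Lemma~\ref{lemma.3}) must be re-established for $d_\mathrm{cov}$. The first step is to rewrite the objective in terms of multiplicities: an element $e \in E$ lies in $\bigcup_i X_i$ precisely when $\mu_e(C) \geq 1$, so $d_\mathrm{cov}(C) = \sum_{e \in E} \mathbbm{1}[\mu_e(C) \geq 1]$, which depends only on the multiplicity vector of $C$.

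Cost-equivalence is then immediate: by Claim~\ref{claim.0} every $C \in \Gamma^k$ admits a left-right ordered $\hat{C} \in \Gamma^k_\mathrm{lr}$ with $\mu_e(C) = \mu_e(\hat{C})$ for all $e$, and since $d_\mathrm{cov}$ is a function of the multiplicities alone this yields $d_\mathrm{cov}(\hat{C}) = d_\mathrm{cov}(C)$. Hence maximizing $d_\mathrm{cov}$ over $\Gamma_k$ reduces to maximizing it over $L^*$. As SFM is a minimization problem, I will instead minimize $-d_\mathrm{cov}$ and must therefore show that $-d_\mathrm{cov}$ is submodular on $L^*$, i.e., that each per-element term $B_e^{\mathrm{cov}}(C) := \mathbbm{1}[\mu_e(C) \geq 1]$ is \emph{supermodular}.

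The submodularity step is the crux, and it is the concave analogue of Proposition~\ref{prop.1}. Set $a = \mu_e(C_1 \vee C_2)$, $b = \mu_e(C_1 \wedge C_2)$, $c = \mu_e(C_1)$, $d = \mu_e(C_2)$. Claim~\ref{claim.1} gives $a + b = c + d$, and Claim~\ref{claim.3} gives $\max(a, b) \leq \max(c, d)$; together with the equal sums this forces $\min(a, b) \geq \min(c, d)$, so the pair $\{a, b\}$ is ``more balanced'' than $\{c, d\}$. It then remains to prove the elementary inequality that for nonnegative integers with $a + b = c + d$ and $\max(a, b) \leq \max(c, d)$ one has $\mathbbm{1}[a \geq 1] + \mathbbm{1}[b \geq 1] \geq \mathbbm{1}[c \geq 1] + \mathbbm{1}[d \geq 1]$, i.e. $B_e^{\mathrm{cov}}(C_1 \vee C_2) + B_e^{\mathrm{cov}}(C_1 \wedge C_2) \geq B_e^{\mathrm{cov}}(C_1) + B_e^{\mathrm{cov}}(C_2)$. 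This holds because $\mathbbm{1}[\,\cdot \geq 1]$ is a nondecreasing concave function on the nonnegative integers; concretely, ordering $c \leq d$ and $a \leq b$ so that $c \leq a \leq b \leq d$, a short case distinction on whether $c = 0$ settles it (if $c \geq 1$ both sides equal $2$; if $c = 0$ and $d \geq 1$, then $b = \max(a, b) \geq 1$ forces the left side to be at least $1$).

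Summing this supermodular inequality over all $e \in E$ shows $d_\mathrm{cov} = \sum_e B_e^{\mathrm{cov}}$ is supermodular, hence $-d_\mathrm{cov}$ is submodular on the distributive lattice $L^*$. Since $-d_\mathrm{cov}$ admits a trivial polynomial-time evaluation oracle and a compact representation of $L^*$ of size $O(kn)$ is available by Lemma~\ref{lemma.4}, Theorem~\ref{theorem:sfmDistributiveLattices} minimizes it in polynomial time, proving Theorem~\ref{theorem.5}. I expect the only real subtlety to be bookkeeping the direction of the inequalities (concave/supermodular here, versus convex/submodular for $d_\mathrm{sum}$) and handling the boundary case $e \notin E(C_1) \cup E(C_2)$, where all four multiplicities vanish and the inequality degenerates to a trivial equality.
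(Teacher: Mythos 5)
Your proposal is correct, and its overall strategy---reuse Lemmas \ref{lemma.1} and \ref{lemma.4} untouched, re-derive Cost-Equivalence from Claim \ref{claim.0}, and re-derive submodularity from Claims \ref{claim.1} and \ref{claim.3}---is exactly the paper's. Where you genuinely diverge is in how the submodularity step is organized. You prove \emph{pointwise} supermodularity of the per-element indicator $\mathbbm{1}[\mu_e(C) \geq 1]$: with $a,b,c,d$ the four multiplicities, Claim \ref{claim.1} gives $a+b=c+d$, Claim \ref{claim.3} gives $\max(a,b)\leq\max(c,d)$, and your short case analysis on $c=0$ versus $c\geq 1$ settles the scalar inequality; summing over $e$ finishes. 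This is the exact concave analogue of Proposition \ref{prop.1}, so your treatment of $d_\mathrm{cov}$ is uniform with that of $d_\mathrm{sum}$: the only measure-specific input is a concavity/convexity property of the per-element cost as a function of multiplicity. The paper instead minimizes $\hat{d}_\mathrm{cov}(C)=\sum_{e\in E_\mathrm{shr}(C)}(\mu_e(C)-1)$ (which equals $kr-d_\mathrm{cov}(C)$, so the two objectives agree up to an additive constant), decomposes the modular function $\sum_{e\in E}(\mu_e(C)-1)$ into a shared-element part plus $|E(C)|-|E|$, and then proves the aggregate set-cardinality inequality $|E(C_1\vee C_2)|+|E(C_1\wedge C_2)|\geq|E(C_1)|+|E(C_2)|$ (Proposition \ref{claim:edgeSetSizesInequality}) by symmetric-difference bookkeeping. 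The mathematical content is identical---that Proposition is precisely the supermodularity of $d_\mathrm{cov}=|E(\cdot)|$ which you establish element by element---but your route is more direct, avoids the shared/unshared splitting, and arguably reads as the cleaner template for extending the framework to further measures. You were also right to flag the boundary case: Claim \ref{claim.3} only covers $e\in E(C_1)\cup E(C_2)$, and for the remaining elements Claim \ref{claim.1} forces all four multiplicities to vanish, so the inequality holds trivially.
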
 

It suffices to show that the Cost-Equivalence and Submodularity lemmas hold for the $d_\mathrm{cov}$ measure. We prove these statements in order, closely following the work of De Berg et al. \cite{de2023finding}. 

\paragraph{Proof of Cost-Equivalence.}
Let $C \in \Gamma^k$ be a $k$-tuple of solutions. Recall that the multiplicity $\mu_e(C)$ of an element $e \in E$, with respect to $C$, is the number of feasible solutions in $C$ that contain $e$. By property \ref{property.1} of Theorem \ref{theorem.1}, we can rewrite the coverage diversity directly in terms of the multiplicity as 
\begin{equation}
    d_{\text{cov}}(C) = k \cdot r - \left(\sum_{e \in E_{\mathrm{shr}}(C)} \left( \mu_e(C) - 1 \right) \right), \label{eq:dcovMultiplicity}
\end{equation}
where $E_\mathrm{shr}(C) \subseteq E$ denotes the subset of elements whose multiplicity satisfies $\mu_e(C) \geq 2$. Consider an element $e \in E$. If $e \in E_\mathrm{shr}$, we say that $e$ is a \textit{shared} element; otherwise, it is an \textit{unshared} element.  

By Claim \ref{claim.0}, any $k$-tuple in $\Gamma^k$ can be reordered into a left-right ordered form while preserving element multiplicities. Moreover, Equation \eqref{eq:dcovMultiplicity} establishes that the diversity $d_\mathrm{cov}$ of a collection of solutions depends solely on element multiplicities since the terms outside the summation are constant. Then, any $k$-tuple can be reordered into a left-right ordered form with the same coverage diversity value. This implies that $d_\mathrm{cov}$ satisfies the Cost-equivalence lemma. 

\begin{lemma}[Cost-Equivalence Lemma] \label{lemma.Cov.1}
    For every $k$-tuple $C \in \Gamma^k$ there exists $\hat{C} \in \Gamma_{\mathrm{lr}}^k$ such that $d_{\mathrm{cov}}(\hat{C}) = d_{\mathrm{cov}}(C)$.
\end{lemma}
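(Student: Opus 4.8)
The plan is to mirror the proof of the Cost-Equivalence Lemma for $d_\mathrm{sum}$ (Lemma \ref{lemma.2}), reusing Claim \ref{claim.0} as a black box. Claim \ref{claim.0} already guarantees that for every $k$-tuple $C \in \Gamma^k$ there exists a left-right ordered tuple $\hat{C} \in \Gamma_\mathrm{lr}^k$ with $\mu_e(C) = \mu_e(\hat{C})$ for all $e \in E$; this is the only structural input I need, and it is stated in a form independent of the diversity measure. So the entire argument reduces to showing that $d_\mathrm{cov}$ is a function of the multiplicity profile $(\mu_e(C))_{e \in E}$ alone.

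First I would invoke Equation \eqref{eq:dcovMultiplicity}, which expresses $d_\mathrm{cov}(C) = k \cdot r - \sum_{e \in E_\mathrm{shr}(C)} (\mu_e(C) - 1)$. The key observation is that both the constant term $k \cdot r$ and the summation are determined entirely by the multiplicities: the set $E_\mathrm{shr}(C) = \{e \in E : \mu_e(C) \geq 2\}$ depends only on the values $\mu_e(C)$, and each summand $\mu_e(C) - 1$ likewise depends only on $\mu_e(C)$. Therefore $d_\mathrm{cov}$ factors through the multiplicity vector.

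Combining these two facts completes the proof: given $C \in \Gamma^k$, apply Claim \ref{claim.0} to obtain $\hat{C} \in \Gamma_\mathrm{lr}^k$ with $\mu_e(\hat{C}) = \mu_e(C)$ for every $e \in E$. Since $E_\mathrm{shr}(\hat{C}) = E_\mathrm{shr}(C)$ and the two tuples agree on every multiplicity, Equation \eqref{eq:dcovMultiplicity} evaluates identically on $C$ and $\hat{C}$, so $d_\mathrm{cov}(\hat{C}) = d_\mathrm{cov}(C)$. This is exactly the statement of the lemma.

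I do not anticipate any genuine obstacle here, since the hard combinatorial work has already been done in Claim \ref{claim.0}; the only point requiring care is confirming that Equation \eqref{eq:dcovMultiplicity} truly depends on nothing beyond the multiplicities. The one subtlety worth checking explicitly is that $\binom{\mu_e}{2}$ in the $d_\mathrm{sum}$ case is here replaced by the simpler linear term $(\mu_e - 1)$ restricted to shared elements, and that the underlying identity \eqref{eq:dcovMultiplicity} itself follows from property \ref{property.1} (each feasible solution contributes exactly one element per chain, so $\sum_{e} \mu_e(C) = k \cdot r$ and the count of distinct covered elements is $k \cdot r$ minus the overcounting $\sum_e (\mu_e(C) - 1)$ over shared elements). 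Once that identity is accepted, the measure is manifestly multiplicity-determined and the lemma is immediate.
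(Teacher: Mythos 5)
Your proposal is correct and matches the paper's proof essentially verbatim: the paper likewise combines Claim~\ref{claim.0} (multiplicity-preserving reordering) with Equation~\eqref{eq:dcovMultiplicity} to conclude that $d_\mathrm{cov}$, being determined solely by the multiplicity profile, is unchanged under left-right reordering. Your extra check that Equation~\eqref{eq:dcovMultiplicity} itself follows from property~\ref{property.1} (via $\sum_{e} \mu_e(C) = k \cdot r$) is a sound and welcome bit of diligence, but does not change the route.
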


\paragraph{Proof of Submodularity.}
Next, observe that maximizing $d_\mathrm{cov}$ is equivalent to minimizing $\hat{d}_\mathrm{cov}$, defined as
\begin{equation}
    \hat{d}_\mathrm{cov}(C) = \sum_{e \in E_{\mathrm{shr}}(C)} \left( \mu_e(C) - 1 \right). \label{eq:dcovMultiplicityMin}
\end{equation}

Given the Claims \ref{claim.1}-\ref{claim.3} of Section \ref{sec.3}, we now show that $\hat{d}_\mathrm{cov}(C)$ is submodular on the lattice $L^* = (\Gamma^k_\mathrm{lr}, \preceq^k)$ of left-right ordered $k$-tuples of feasible solutions. 

First, consider the function $F_e(C) : \Gamma_{\mathrm{lr}}^k \rightarrow \mathbb{N}$ defined by $F_e(C) = \mu_e(C)-1$. It is an immediate corollary of Claim \ref{claim.1} that $F_e(C)$ is modular in $L^*$. Then, the sum $\sum_e F_e(C)$ taken over all elements $e \in E$ is still a modular function. Notice that only shared elements in $C$ contribute positively to the sum, while the contribution of unshared elements can be neutral or negative. We split this sum into two parts: the sum over shared elements $e \in E_\mathrm{shr}(C)$, and the sum over unshared elements $e \in E \setminus E_\mathrm{shr}(C)$. The latter sum can be further simplified to $|E(C)| - |E|$ by observing that only unshared elements make a (negative) contribution. Therefore, we can write
\begin{equation} \label{eq:sumCoverage}
     \sum\nolimits_{e \in E} F_e(C) = \left( \sum\nolimits_{e \in E_\mathrm{shr}(C)} (\mu_e(C) - 1)\right) + |E(C)| - |E|.
\end{equation}
We know $\sum_e F_e(C)$ to be a modular function on $L^*$, hence for any two $C_1, C_2 \in L^*$ we have
\begin{equation*}
    \sum_{e \in E} F_e(C_1 \vee C_2) + \sum_{e \in E} F_e(C_1 \wedge C_2) = \sum_{e \in E} F_e(C_1) + \sum_{e \in E} F_e(C_2),
\end{equation*}
which, by equation \eqref{eq:sumCoverage}, is equivalent to
\begin{align} \label{eq:coverageEquality}
    & \left( \sum_{e \in E_\mathrm{shr}(C_1 \vee C_2)} (\mu_e(C_1 \vee C_2) - 1)  + \sum_{e \in E_\mathrm{shr}(C_1 \wedge C_2)} (\mu_e(C_1 \wedge C_2) - 1) \right) + |E(C_1 \vee C_2)| + |E(C_1 \wedge C_2)| \notag \\ 
    & = \left( \sum_{e \in E_\mathrm{shr}(C_1)} (\mu_e(C_1) - 1) + \sum_{e \in E_\mathrm{shr}(C_2)} (\mu_e(C_2) - 1) \right) + |E(C_1)| + |E(C_2)|. 
\end{align}
Now, from Claims \ref{claim.1} and \ref{claim.3}, we make the following proposition (see also \cite[Sec. A.5]{de2023finding}). 

\begin{proposition} \label{claim:edgeSetSizesInequality}
For any two $C_1, C_2 \in L^*$ we have $|E(C_1 \vee C_2)| + |E(C_1 \wedge C_2)| \geq |E(C_1)| + |E(C_2)|$.
\end{proposition}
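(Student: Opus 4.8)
The plan is to reduce the proposition to an element-wise statement. Observe that $E(C)$ is precisely the set of elements with positive multiplicity, so $|E(C)| = \sum_{e \in E} g_e(C)$, where $g_e(C) = 1$ if $e \in E(C)$ (equivalently $\mu_e(C) \geq 1$) and $g_e(C) = 0$ otherwise. Since a sum of supermodular functions is supermodular, it suffices to prove that each $g_e$ is supermodular on $L^*$; that is, for any $C_1, C_2 \in L^*$,
\[
    g_e(C_1 \vee C_2) + g_e(C_1 \wedge C_2) \geq g_e(C_1) + g_e(C_2).
\]
Summing this inequality over all $e \in E$ yields the claimed bound $|E(C_1 \vee C_2)| + |E(C_1 \wedge C_2)| \geq |E(C_1)| + |E(C_2)|$.

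To verify the element-wise inequality I would write $a = \mu_e(C_1 \vee C_2)$, $b = \mu_e(C_1 \wedge C_2)$, $c = \mu_e(C_1)$, and $d = \mu_e(C_2)$, all nonnegative integers. The two available tools are Claim \ref{claim.1}, giving the modularity identity $a + b = c + d$, and Claim \ref{claim.3}, giving $\max(a, b) \leq \max(c, d)$ (valid whenever $e \in E(C_1) \cup E(C_2)$; if $e$ lies in neither then $c = d = 0$, and modularity forces $a = b = 0$, so the inequality holds trivially). In these terms the desired inequality says that the number of strictly positive entries among $\{a, b\}$ is at least the number among $\{c, d\}$.

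I would then argue by cases on how many of $c, d$ are positive. If $c = d = 0$ the right-hand side is $0$ and nothing is needed. If exactly one of $c, d$ is positive, then $a + b = c + d \geq 1$ forces at least one of $a, b$ to be positive, so the left-hand side is at least $1$. The delicate case is $c \geq 1$ and $d \geq 1$, where the right-hand side equals $2$ and I must show both $a \geq 1$ and $b \geq 1$. This is exactly where Claim \ref{claim.3} is indispensable: if, say, $b = 0$, then modularity gives $a = c + d$, hence $\max(a, b) = c + d$; but $c, d \geq 1$ give $\max(c, d) \leq c + d - 1 < c + d$, contradicting $\max(a, b) \leq \max(c, d)$. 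Thus neither $a$ nor $b$ can vanish, the left-hand side equals $2$, and the inequality holds.

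The main obstacle is precisely this last case. The modularity identity by itself permits all of the combined multiplicity to pile onto one of the join/meet terms, leaving the other equal to $0$ and breaking supermodularity; ruling this out genuinely requires the max-bound of Claim \ref{claim.3}, which forbids such a degenerate split. Once this case is settled, summing the element-wise inequalities over $E$ completes the argument and shows that $|E(\cdot)|$ is supermodular (equivalently, $-|E(\cdot)|$ is submodular) on $L^*$.
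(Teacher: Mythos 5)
Your proof is correct, and it reaches the result by a different decomposition than the paper's. You write $|E(\cdot)| = \sum_{e \in E} g_e(\cdot)$ with $g_e$ the indicator of $\mu_e \geq 1$, and prove each $g_e$ is supermodular on $L^*$ by case analysis on how many of $\mu_e(C_1), \mu_e(C_2)$ are positive, settling the crucial case (both positive) by a contradiction between the modularity identity of Claim~\ref{claim.1} and the max-bound of Claim~\ref{claim.3}: if, say, $\mu_e(C_1 \wedge C_2) = 0$, then $\mu_e(C_1 \vee C_2) = c + d > \max(c,d)$, which is forbidden. The paper instead argues globally with set cardinalities: writing $A, B, C, D$ for $E(C_1 \vee C_2), E(C_1 \wedge C_2), E(C_1), E(C_2)$, it asserts two element-wise facts---$e \in A \triangle B \Rightarrow e \in C \triangle D$ and $e \in C \cap D \Rightarrow e \in A \cap B$---and then counts: with $X = (C \triangle D) \cap (A \cap B)$ one gets $|C| + |D| = |A| + |B| - |X| \leq |A| + |B|$. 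The two arguments rest on the same underlying facts: your Case 3 is exactly the paper's second fact, and the paper's first fact follows from that fact plus modularity, just as in your Case 2. What your route buys is completeness and locality: the paper only remarks that its two facts ``can be derived from Claims \ref{claim.1} and \ref{claim.3},'' whereas your contradiction argument supplies precisely that missing derivation, and your formulation exhibits $|E(\cdot)|$ explicitly as a sum of pointwise supermodular indicator functions, which is the cleaner statement to reuse. What the paper's route buys is a quantitative refinement: it identifies the exact slack $|X|$ in the inequality, rather than only its sign.
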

\begin{proof}
    For simplicity, we denote sets $E(C_1 \vee C_2)$, $E(C_1 \wedge C_2)$, $E(C_1)$, and $E(C_2)$ by $A$, $B$, $C$, and $D$, respectively. Consider an element $e \in E$. We begin with two simple facts that can be derived from Claims \ref{claim.1} and \ref{claim.3}. First, if $e \in A \triangle B$ then $e \in C \triangle D$. Second, if $e \in C \cap D$, then $e \in A \cap B$. We remark that the reverse direction in each of these facts is not always true. Let $X = (C \triangle D) \cap (A \cap B)$ be the set of elements that appear in $C \setminus D$ or $D \setminus C$ and also appear in $A \cap B$. We may then write: 
    \begin{align*}
    |C| + |D| & = |C \setminus D| + |D \setminus C| + 2 |C \cap D| \notag \\
    & = |A \setminus B| + |B \setminus A| + |X| + 2 |A \cap B| - 2 |X| \notag \\
    & = |A| + |B| - |X| \notag \\ 
    & \leq |A| + |B|
\end{align*}
where the second inequality follows from observing that $|C \setminus D| + |D \setminus C| = |A \setminus B| + |B \setminus A| + |X|$ and $|A \cap B| = |C \cap D| + |X|$. 
\end{proof}

Given Claim \ref{claim:edgeSetSizesInequality}, it is clear that to satisfy equality in equation \eqref{eq:coverageEquality} it must be that: 
\begin{align*} %\label{eq:coverageInequality}
    \sum_{e \in E_\mathrm{shr}(C_1 \vee C_2)} (\mu_e(C_1 \vee C_2) - 1) & + \sum_{e \in E_\mathrm{shr}(C_1 \wedge C_2)} (\mu_e(C_1 \wedge C_2) - 1) \\ 
    & \leq \sum_{e \in E_\mathrm{shr}(C_1)} (\mu_e(C_1) - 1) + \sum_{e \in E_\mathrm{shr}(C_2)} (\mu_e(C_2) - 1),
\end{align*}
from which the submodularity of $\hat{d}_\mathrm{cov}$ immediately follows.

\begin{lemma}[Submodularity Lemma] \label{lemma.Cov.2}
The function $\hat{d}_\mathrm{cov}: \Gamma_{\mathrm{lr}}^k \rightarrow \mathbb{N}$ is submodular in $L^*$.
\end{lemma}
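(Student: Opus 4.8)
The plan is to deduce the submodular inequality for $\hat{d}_\mathrm{cov}$ directly from the modular identity and Proposition \ref{claim:edgeSetSizesInequality} established above, since together they pin down the sign of the quantity we need. The target is to show that for all $C_1, C_2 \in L^*$,
\[
\hat{d}_\mathrm{cov}(C_1 \vee C_2) + \hat{d}_\mathrm{cov}(C_1 \wedge C_2) \leq \hat{d}_\mathrm{cov}(C_1) + \hat{d}_\mathrm{cov}(C_2).
\]
First I would recall that $F_e(C) = \mu_e(C) - 1$ is modular on $L^*$ (an immediate corollary of Claim \ref{claim.1}, as subtracting a constant preserves modularity), so the total $\sum_{e \in E} F_e$ is modular as well. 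Evaluating this modular identity at the pair $(C_1, C_2)$ and inserting the decomposition \eqref{eq:sumCoverage}, which splits $\sum_e F_e(C)$ into the shared-element sum plus the correction $|E(C)| - |E|$, yields precisely the equality \eqref{eq:coverageEquality}; the constant $-|E|$ cancels from both sides.

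Next I would identify the shared-element sums in \eqref{eq:coverageEquality} as exactly $\hat{d}_\mathrm{cov}$ evaluated at the four arguments $C_1 \vee C_2$, $C_1 \wedge C_2$, $C_1$, and $C_2$, so that \eqref{eq:coverageEquality} reads
\[
\hat{d}_\mathrm{cov}(C_1 \vee C_2) + \hat{d}_\mathrm{cov}(C_1 \wedge C_2) + |E(C_1 \vee C_2)| + |E(C_1 \wedge C_2)| = \hat{d}_\mathrm{cov}(C_1) + \hat{d}_\mathrm{cov}(C_2) + |E(C_1)| + |E(C_2)|.
\]
Applying Proposition \ref{claim:edgeSetSizesInequality}, namely $|E(C_1 \vee C_2)| + |E(C_1 \wedge C_2)| \geq |E(C_1)| + |E(C_2)|$, then forces the shared-element contributions to satisfy the reverse inequality, which after cancellation is exactly the claimed submodular inequality.

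The genuinely substantive ingredient is Proposition \ref{claim:edgeSetSizesInequality} (the supermodularity of $C \mapsto |E(C)|$), but this is already available from the two containment facts it derives via Claims \ref{claim.1} and \ref{claim.3}. Given that proposition, the remainder is purely the rearrangement of a modular equality into an inequality, so I expect no real obstacle beyond bookkeeping. The one subtlety worth flagging is that the index set $E_\mathrm{shr}$ varies across the four tuples, so the shared-element sums cannot be compared term-by-term; this is handled automatically because the splitting \eqref{eq:sumCoverage} is valid for each tuple individually, and it is the aggregate equality \eqref{eq:coverageEquality}, together with the single inequality on the ground-set sizes, that delivers the conclusion.
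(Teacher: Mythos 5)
Your proposal is correct and follows essentially the same route as the paper: modularity of $\sum_{e \in E} F_e$ via Claim \ref{claim.1}, the decomposition \eqref{eq:sumCoverage} leading to the aggregate equality \eqref{eq:coverageEquality}, and then Proposition \ref{claim:edgeSetSizesInequality} to force the shared-element sums (i.e., $\hat{d}_\mathrm{cov}$) into the submodular inequality. Your remark that the varying index sets $E_\mathrm{shr}(\cdot)$ preclude term-by-term comparison, and that the aggregate argument sidesteps this, is exactly the point implicit in the paper's treatment.
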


Replacing the Cost-Equivalence and Submodularity lemmas of Section \ref{sec.3} with Lemmas \ref{lemma.Cov.1} and \ref{lemma.Cov.2} above, the proof of Theorem \ref{theorem.5} is complete. 

\subsection{Absolute-Difference Diversity} \label{sec.abs}
In this section, we consider the problem of finding maximally diverse solutions with respect to the absolute-difference diversity measure. In the following problem definition, we already assume that the property \ref{property.1} of Theorem \ref{theorem.1} holds, since $d_\mathrm{abs}$ requires componentwise comparisons. Furthermore, we require a transitive binary difference operation between componentwise elements.\footnote{This is related to the notion of \textit{difference poset} or \textit{D-poset} \cite{kopka1994d}.} 

\begin{extthm}[\textsc{Max-Abs $k$-Diverse Solutions}] \label{problem:7} 
Given a poset $E$ of $n$ elements formed by the disjoint union of $r$ chains $(E_i, \preceq_i)$, $i \in [r]$, and equipped with a binary difference operation ($-$), along with a membership oracle for $\Gamma \subseteq E_1 \times \dots \times E_r$, find a $k$-element multiset $C = (X_1, X_2, \ldots, X_k)$ with $X_1, X_2, \ldots, X_k \in \Gamma$, such that $d_\mathrm{abs}(C)$ is maximum. 
\end{extthm}

We prove the following result by means of the reduction to submodular function minimization established in Section \ref{sec.3}. 

\begin{theorem} \label{theorem.6}
    \textsc{Max-Abs $k$-Diverse Solutions} can be solved in polynomial time if the set of feasible solutions $\Gamma$ satisfies the three properties of Theorem \ref{theorem.1}. 
\end{theorem}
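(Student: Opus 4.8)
The plan is to follow the template of Section \ref{sec.5}: since the Distributivity and Compactness lemmas do not depend on the diversity measure, it suffices to re-establish the Cost-Equivalence and Submodularity lemmas for $d_\mathrm{abs}$. For Cost-Equivalence, I would first note that $d_\mathrm{abs}$ splits coordinate-wise: writing $g_\ell(C) = \sum_{1 \le i < j \le k}\|x_i^\ell - x_j^\ell\|$ for the contribution of the $\ell$-th chain $E_\ell$, we have $d_\mathrm{abs}(C) = \sum_{\ell=1}^r g_\ell(C)$. Each $g_\ell$ is symmetric in the solution indices, so it depends only on the multiset of $\ell$-th coordinate values of $C$---equivalently, only on the multiplicities $\{\mu_e(C) : e \in E_\ell\}$. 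Since the LRO procedure of Claim \ref{claim.0} preserves every multiplicity $\mu_e(C)$, it preserves each $g_\ell$, and hence $d_\mathrm{abs}$, giving the Cost-Equivalence Lemma exactly as for $d_\mathrm{sum}$ and $d_\mathrm{cov}$.

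For Submodularity, I would exploit the left-right order together with the transitivity of the difference operation. Transitivity lets one fix, for each chain $E_\ell$, an order-preserving map $\phi_\ell : E_\ell \to \mathbb{R}$ with $\|x - y\| = |\phi_\ell(x) - \phi_\ell(y)|$ (assign $0$ to the chain-minimum and accumulate consecutive gaps), so that distances add along the chain. On $L^*$ every tuple is left-right ordered, so in each coordinate $\ell$ the values of $C = (X_1, \ldots, X_k)$ satisfy $\phi_\ell(x_1^\ell) \le \cdots \le \phi_\ell(x_k^\ell)$; the absolute value then collapses to a signed difference, and a short computation gives $g_\ell(C) = \sum_{m=1}^k (2m - 1 - k)\,\phi_\ell(x_m^\ell)$, a \emph{linear} function of the sorted embedded values. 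The crucial point is that, by Lemma \ref{lemma.1}, both $C_1 \vee C_2$ and $C_1 \wedge C_2$ lie in $L^*$ and are thus again sorted, with $m$-th coordinate-$\ell$ value $\max(\phi_\ell(x_m^\ell), \phi_\ell(y_m^\ell))$ and $\min(\phi_\ell(x_m^\ell), \phi_\ell(y_m^\ell))$ respectively. Since $\max(a,b) + \min(a,b) = a + b$, we obtain $g_\ell(C_1 \vee C_2) + g_\ell(C_1 \wedge C_2) = g_\ell(C_1) + g_\ell(C_2)$ for every $\ell$, so that $d_\mathrm{abs}$ is in fact \emph{modular} on $L^*$. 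In particular $-d_\mathrm{abs}$ is submodular, and maximizing $d_\mathrm{abs}$ reduces to minimizing a submodular function on $L^*$; Theorem \ref{theorem:sfmDistributiveLattices}, together with the measure-independent Distributivity and Compactness lemmas, then completes the reduction.

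The main obstacle is the linearization underlying the Submodularity step. The norm $\|\cdot\|$ is not submodular over arbitrary tuples; it is precisely the restriction to the left-right ordered sublattice $L^*$ that removes the absolute value and turns each coordinate's contribution into a linear---hence modular---function. Making this rigorous requires pinning down the transitive difference operation: one must verify that it induces a consistent real embedding $\phi_\ell$ of each chain along which distances are additive, and that the componentwise meet and join of two sorted tuples are again sorted with the expected entries (which follows from $L^*$ being closed under $\vee$ and $\wedge$). Once these are in place, modularity---a strengthening of the submodularity we actually need---follows immediately, and no analogue of the more delicate Proposition \ref{prop.1} is required.
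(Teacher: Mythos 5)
Your proposal is correct, and its overall skeleton---re-establishing only the Cost-Equivalence and (sub)modularity lemmas for $d_\mathrm{abs}$, then invoking the measure-independent Distributivity and Compactness lemmas together with Theorem \ref{theorem:sfmDistributiveLattices}---is exactly the paper's. Your Cost-Equivalence argument (coordinate-wise decomposition plus multiplicity preservation via Claim \ref{claim.0}) matches Lemma \ref{lemma.Abs.1}. Where you genuinely differ is the modularity step. The paper (Lemma \ref{lemma.Abs.2}) decomposes $d_\mathrm{abs}$ into pairwise terms $f'_\ell(X_i, X_j)$, works in the lattice of left-right ordered $2$-tuples, and proves modularity of each $f'_\ell$ by a case analysis on how the intervals $[X_1(\ell), X_2(\ell)]$ and $[Y_1(\ell), Y_2(\ell)]$ interact (disjoint, overlapping, nested). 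You instead embed each chain order-isometrically into $\mathbb{R}$ and observe that on $L^*$ the coordinate values of any tuple are sorted, so each $g_\ell$ collapses to the linear functional $\sum_{m=1}^{k}(2m-1-k)\,\phi_\ell(x^\ell_m)$ (the coefficient is correct: the $m$-th smallest value is the larger element in $m-1$ pairs and the smaller in $k-m$); since meet and join act entrywise as $\min$ and $\max$ and preserve sortedness, the identity $\max(a,b)+\min(a,b)=a+b$ yields modularity of every $g_\ell$ in one line, with no reduction to pairs and no case analysis---in particular no analogue of Proposition \ref{prop.1} is needed, as you note. Both routes establish the same stronger-than-needed statement (modularity rather than mere submodularity). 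What your approach buys is uniformity and transparency: all $k$ solutions are treated at once, and it becomes clear that modularity is purely a consequence of sortedness plus linearity. What it costs is the justification of the embedding $\phi_\ell$: you need the difference operation to be additive along chains, i.e.\ $\|x-z\| = \|x-y\| + \|y-z\|$ whenever $x \preceq_\ell y \preceq_\ell z$. This is not an extra hypothesis in substance---the paper's own case (iii) computation, which rewrites $(X_2(\ell)-Y_1(\ell)) + (Y_2(\ell)-X_1(\ell))$ as $(X_2(\ell)-X_1(\ell)) + (Y_2(\ell)-Y_1(\ell))$, silently uses exactly this additivity---but in a final write-up you should state it explicitly as the operative meaning of the paper's ``transitive binary difference operation'' rather than treat it as a derived fact.
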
 

Like before, it suffices to show that the Cost-Equivalence and Submodularity lemmas hold for the absolute difference measure. 

\paragraph{Proof of Cost-Equivalence.} The following lemma is an immediate consequence of Claim \ref{claim.0}, which states that any $k$-tuple in $\Gamma^k$ can be reordered into a left-right ordered form while preserving element multiplicities. 

\begin{lemma}[Cost-Equivalence Lemma] \label{lemma.Abs.1}
    Let $C \in \Gamma^k$ such that $d_{\text{abs}}(C) = \max_{S \in \Gamma^k} d_{\mathrm{abs}}(S)$. Then there exists $\hat{C} \in \Gamma_{\mathrm{lr}}^k$ such that $d_{\mathrm{abs}}(\hat{C}) = d_{\mathrm{abs}}(C)$. 
\end{lemma}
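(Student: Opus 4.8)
The plan is to mirror the arguments used for $d_\mathrm{sum}$ and $d_\mathrm{cov}$: I will show that $d_\mathrm{abs}$, like those measures, is completely determined by the element multiplicities $\mu_e(C)$, after which the lemma follows immediately from Claim \ref{claim.0}. Concretely, Claim \ref{claim.0} guarantees that the tuple $\hat{C} = \mathrm{LRO}(C)$ is left-right ordered and satisfies $\mu_e(C) = \mu_e(\hat{C})$ for every $e \in E$; so if I can argue that $d_\mathrm{abs}$ is a function of the multiplicity vector alone, then $d_\mathrm{abs}(\hat{C}) = d_\mathrm{abs}(C)$, which yields the lemma (in fact without any appeal to optimality).

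The key step is to rewrite $d_\mathrm{abs}$ in terms of multiplicities. By property \ref{property.1} of Theorem \ref{theorem.1} each feasible solution contains exactly one element from each chain $E_m$, so for a solution $X_i = ((X_i)_1, \ldots, (X_i)_r)$ the coordinate $(X_i)_m$ is precisely the unique element of $X_i$ lying in $E_m$. Interchanging the order of summation in the definition of $d_\mathrm{abs}$, I would write
\begin{equation*}
    d_\mathrm{abs}(C) = \sum_{m=1}^{r} \sum_{1 \leq i < j \leq k} \left\| (X_i)_m - (X_j)_m \right\|.
\end{equation*}
For a fixed chain $E_m$, the inner double sum depends only on the multiset of values $\{(X_1)_m, \ldots, (X_k)_m\}$ realized by the $k$ solutions in coordinate $m$, and this multiset is exactly encoded by the numbers $\{\mu_e(C) : e \in E_m\}$, since $\mu_e(C)$ counts how many solutions take the value $e \in E_m$ in that coordinate. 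Grouping the pairs by the two chain-elements they realize gives the explicit multiplicity formula
\begin{equation*}
    d_\mathrm{abs}(C) = \sum_{m=1}^{r} \sum_{\substack{e, e' \in E_m \\ e \prec_m e'}} \mu_e(C)\, \mu_{e'}(C)\, \left\| e - e' \right\|,
\end{equation*}
which makes the dependence on multiplicities manifest.

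With this reformulation in hand, the proof concludes exactly as in Lemma \ref{lemma.2}: the right-hand side is a function of the multiplicities $\mu_e(C)$ only, and Claim \ref{claim.0} produces a left-right ordered $\hat{C} \in \Gamma_{\mathrm{lr}}^k$ with identical multiplicities, so $d_\mathrm{abs}(\hat{C}) = d_\mathrm{abs}(C)$; taking $C$ to be an optimizer then yields the stated lemma. I expect the only real obstacle to be the bookkeeping in the second paragraph---namely, verifying that the per-chain contribution is genuinely invariant under any rearrangement of which solution realizes which value. This is where property \ref{property.1} is essential: the one-element-per-chain structure lets me treat each coordinate independently and identify ``containing $e$'' with ``taking value $e$ in coordinate $m$,'' so that the multiset of coordinate values, and hence $d_\mathrm{abs}$, is a pure function of $(\mu_e(C))_{e \in E}$. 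Equivalently, one may verify directly that each elementary step of $\mathrm{LRO}$, which replaces a pair $(X_i, X_j)$ by $(X_i \wedge X_j, X_i \vee X_j)$, leaves every coordinate's multiset of values---and thus $d_\mathrm{abs}$---unchanged, since $\{\min(a,b), \max(a,b)\} = \{a,b\}$ in each chain.
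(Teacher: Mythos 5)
Your proposal is correct and takes essentially the same approach as the paper: both decompose $d_\mathrm{abs}$ chain-by-chain, observe that each chain's contribution is a symmetric function of the coordinate values and hence determined by the multiplicities $\mu_e(C)$, and then invoke Claim \ref{claim.0} to conclude (noting, as you do, that optimality of $C$ is not even needed). Your explicit formula $\sum_{m}\sum_{e \prec_m e'} \mu_e(C)\mu_{e'}(C)\left\|e - e'\right\|$ just spells out a step the paper leaves implicit.
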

\begin{proof}
Let $C \in \Gamma^k$ be an arbitrary $k$-tuple of solutions, and let $\hat{C} \in \Gamma^k_\mathrm{lr}$ be its reordering into left-right order by the algorithm of Claim \ref{claim.0}. 
For a feasible solution $X \in \Gamma$, let $X(\ell)$ denote its $\ell$-th component. 

Consider the $k$-tuples 
$C(\ell) = (X_1(\ell), \ldots, X_k(\ell))$ and $\hat{C}(\ell) = (\hat{X}_1(\ell), \ldots, \hat{X}_k(\ell))$, where $X_i \in C$, $\hat{X}_i \in \hat{C}$ for all $i \in [k]$. These represent the $\ell$-th component of each solution in $C$ and $\hat{C}$, respectively. Now, define the function $f_{\ell} : E_\ell^k \rightarrow \mathbb{R}$ as:
\[
f_{\ell}(x_1, \ldots, x_k) = \sum_{1 \leq i < j \leq k} \left\|x_i - x_j\right\|.
\]
By Claim \ref{claim.0}, the multiplicity of each element in $C(\ell)$ is preserved in $\hat{C}(\ell)$, implying that $f_{\ell}(C(\ell)) = f_{\ell}(\hat{C}(\ell))$. Since the absolute-difference diversity measure decomposes as:
\[
d_\mathrm{abs}(X_1, \ldots, X_k) = \sum_{\ell=1}^{r} f_{\ell}(X_1(\ell), \ldots, X_k(\ell)),
\]
it follows that $d_{\mathrm{abs}}(\hat{C}) = d_{\mathrm{abs}}(C)$. In particular, this holds for tuples that achieve maximum diversity.
\end{proof}

\paragraph{Proof of Submodularity.} In this case, the Submodularity Lemma actually becomes a Modularity Lemma. First, consider the function $f_{\ell}' : \Gamma^2_\mathrm{lr} \rightarrow \mathbb{R}$ defined by $f_{\ell}'(X_1, X_2) = \left\| X_1(\ell) - X_2(\ell) \right\|$, where $X_1 \preceq X_2$ and $\ell \in [r]$. We can rewrite the absolute difference diversity measure as: 
\[
d_\mathrm{abs}(X_1, \ldots, X_k) = \sum_\ell^r \sum_{1 \leq i < j \leq k} f_{\ell}'(X_i, X_j).
\]

If we establish that $f_{\ell}'(\cdot)$ is modular in $L$ then, because the sum of modular functions is modular, $d_\mathrm{abs}$ is also modular. This is done in the following lemma. 

\begin{lemma}[Modularity Lemma] \label{lemma.Abs.2}
The function $\hat{d}_\mathrm{abs}: \Gamma_{\mathrm{lr}}^k \rightarrow \mathbb{N}$ is modular in $L^*$.
\end{lemma}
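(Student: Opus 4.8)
The plan is to reduce the modularity of $d_\mathrm{abs}$ to the modularity of a single summand, and then to a one-line numerical identity. Using the decomposition $d_\mathrm{abs}(C)=\sum_{\ell=1}^{r}\sum_{i<j} f_\ell'(X_i,X_j)$ together with the fact (noted in the text) that a sum of modular functions is modular, it suffices to show that each map $C\mapsto f_\ell'(X_i,X_j)$ is modular on $L^*$. Because $L^*$ is a sublattice of $L^k$ (Lemma \ref{lemma.1}), its meet and join act componentwise, so the coordinate projection $\pi_{ij}\colon(X_1,\dots,X_k)\mapsto(X_i,X_j)$ commutes with $\vee$ and $\wedge$ and is therefore a lattice homomorphism of $L^*$ onto a sublattice of $\Gamma^2_\mathrm{lr}$ (the image is left-right ordered since $i<j$). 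As a modular function stays modular when pulled back along a homomorphism, I only need $f_\ell'$ itself to be modular on $\Gamma^2_\mathrm{lr}$.

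To prove that, I would fix $P_1=(X_1,X_2),P_2=(Y_1,Y_2)\in\Gamma^2_\mathrm{lr}$ and restrict to coordinate $\ell$, treating the chain $E_\ell$ as a set of reals on which $\preceq_\ell$ is the usual order and $(-)$ the usual difference. Writing $a=X_1(\ell)\le b=X_2(\ell)$ and $c=Y_1(\ell)\le d=Y_2(\ell)$ (these two inequalities are exactly the left-right constraints), and using that join and meet are componentwise maximum and minimum, the modular identity $f_\ell'(P_1\vee P_2)+f_\ell'(P_1\wedge P_2)=f_\ell'(P_1)+f_\ell'(P_2)$ becomes
\[
\bigl\|\max(b,d)-\max(a,c)\bigr\|+\bigl\|\min(b,d)-\min(a,c)\bigr\|=\|b-a\|+\|d-c\|.
\]
The clean way to see this is to note that monotonicity of $\max$ and $\min$, combined with $a\le b$ and $c\le d$, forces $\max(b,d)\ge\max(a,c)$ and $\min(b,d)\ge\min(a,c)$; hence both absolute values can be dropped, and using $\max(b,d)+\min(b,d)=b+d$ and $\max(a,c)+\min(a,c)=a+c$, the left-hand side collapses to $(b+d)-(a+c)=(b-a)+(d-c)$, which is the right-hand side. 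Summing over all $\ell$ and all pairs $i<j$ then yields the modularity of $d_\mathrm{abs}$, and hence of $\hat{d}_\mathrm{abs}$.

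The only real subtlety—and the reason the lemma lives on $L^*$ rather than on the full power lattice $L^k$—is that both left-right constraints are indispensable for the monotonicity step, since they are precisely what guarantees $b\ge a$ and $d\ge c$ and thus that the two $\max$/$\min$ differences are nonnegative. Dropping them breaks the identity: for instance $a=1,b=0,c=0,d=1$ gives left-hand side $0$ but right-hand side $2$. I therefore expect the main (modest) obstacle to be nothing more than making this monotonicity argument precise and stressing that the sortedness of each coordinate within a left-right ordered tuple is exactly what renders the identity an equality.
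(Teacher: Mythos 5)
Your proof is correct, and it follows the paper's overall skeleton---decompose $d_\mathrm{abs}$ into the per-coordinate pairwise terms $f_\ell'$ and prove modularity of $f_\ell'$ on $\Gamma^2_\mathrm{lr}$---but it differs in two worthwhile ways. First, where the paper establishes the key identity by a three-way case analysis on how the intervals $I_X=[X_1(\ell),X_2(\ell)]$ and $I_Y=[Y_1(\ell),Y_2(\ell)]$ interact (disjoint, overlapping, nested), you avoid cases entirely: monotonicity of $\max$ and $\min$ under the left-right constraints $a\le b$, $c\le d$ lets you drop both absolute values, and the identity $\max(x,y)+\min(x,y)=x+y$ collapses the left-hand side to $(b+d)-(a+c)$ in one line. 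This is cleaner and makes transparent exactly where left-right order is used (your counterexample $a=1,b=0,c=0,d=1$ showing failure without it is a nice touch the paper lacks). Second, you make explicit a step the paper glosses over: the paper proves modularity of $f_\ell'$ as a function on $(\Gamma^2_\mathrm{lr},\preceq^2)$ and then asserts modularity of $d_\mathrm{abs}$ on $L^*$, which implicitly requires that each map $C\mapsto f_\ell'(X_i,X_j)$ on $L^*$ inherits modularity; your observation that the projection $\pi_{ij}$ is a lattice homomorphism from $L^*$ into $\Gamma^2_\mathrm{lr}$ (using $i<j$ and componentwise meet/join), together with the fact that modular functions pull back along homomorphisms, supplies precisely this missing link. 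Both proofs share the same implicit assumption that the difference operation is compatible with the chain order (so that $\|x-y\|=x-y$ whenever $y\preceq_\ell x$), so on that front you are no worse off than the paper.
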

\begin{proof}We prove that, for any two $C_1, C_2 \in \Gamma^2_\mathrm{lr}$, the function $f_{\ell}'(\cdot)$ is modular in the lattice $(\Gamma^2_\mathrm{lr}, \preceq^2)$, where $\preceq$ is the componentwise order of the poset $L$ of feasible solutions. Since the sum of modular functions is modular, and $d_\mathrm{abs}$ can be written as a sum of functions $f_{\ell}'(\cdot)$ over all $\ell \in [r]$, the modularity of $d_\mathrm{abs}$ follows.

Let $C_1 = (X_1, X_2)$ and $C_2 = (Y_1, Y_2)$. By definition, the join ($\vee$) and meet ($\wedge$) of $C_1$ and $C_2$ are given by componentwise maximum and minimum. Then, for each $\ell \in [r]$, 
\begin{align*}
f_\ell'(C_1 \wedge C_2) & = \left\| \min(X_1(\ell), Y_1(\ell)) - \min(X_2(\ell), Y_2(\ell)) \right\| \quad \text{and} \\
f_\ell'(C_1 \vee C_2) & = \left\| \max(X_1(\ell), Y_1(\ell)) - \max(X_2(\ell), Y_2(\ell)) \right\|.
\end{align*}

Consider an arbitrary $\ell \in [r]$. Because $C_1$ and $C_2$ are each in left-right order, we have: $X_1(\ell) \preceq X_2(\ell)$ and $Y_1(\ell) \preceq Y_2(\ell)$. Consider then the intervals $I_X = [X_1(\ell), X_2(\ell)]$ and $I_Y = [Y_1(\ell), Y_2(\ell)]$. Without loss of generality, assume that $X_2(\ell) \preceq Y_2(\ell)$. There are three possibilities for the interaction of $I_X$ and $I_Y$: (i) the intervals are disjoint (i.e., $I_X \cap I_Y = \emptyset$), they overlap (i.e., $I_X \cap I_Y \neq \emptyset$), or (iii) one is contained in the other (i.e., $I_X \subset I_Y$). We now compare the sums $f_\ell'(C_1 \wedge C_2) + f_\ell'(C_1 \vee C_2)$ and $f_\ell'(C_1) + f_\ell'(C_2)$ in each of these cases. 

In cases (i) and (ii), we have that $X_1(\ell) \preceq Y_1(\ell)$ and $X_2(\ell) \preceq Y_2(\ell)$. Hence, 
\[
f_\ell'(C_1 \wedge C_2) + f_\ell'(C_1 \vee C_2) = \left\| X_1(\ell) - X_2(\ell) \right\| + \left\| Y_1(\ell) - Y_2(\ell) \right\| = f_\ell'(C_1) + f_\ell'(C_2),
\]
and thus, modularity is satisfied. 

In case (iii), we have $Y_1(\ell) \preceq X_1(\ell)$ and $X_2(\ell) \preceq Y_2(\ell)$. Then:
\begin{align*}
    f_\ell'(C_1 \wedge C_2) + f_\ell'(C_1 \vee C_2) & = \left\| Y_1(\ell) - X_2(\ell) \right\| + \left\| X_1(\ell) - Y_2(\ell) \right\| \\ 
    & = (X_2(\ell) - Y_1(\ell)) + (Y_2(\ell) - X_1(\ell)) \\
    & = (X_2(\ell) - X_1(\ell)) + (Y_2(\ell) - Y_1(\ell)) \\
    & = f_\ell'(C_1) + f_\ell'(C_2),
\end{align*}
which again satisfies modularity. 

Therefore, the function $f_{\ell}'(\cdot)$ is modular in $(\Gamma^2_\mathrm{lr}, \preceq^2)$, and the lemma is proved. 
\end{proof}

By replacing the Cost-Equivalence and Submodularity lemmas of Section \ref{sec.3} with Lemmas \ref{lemma.Abs.1} and \ref{lemma.Abs.2} above, the proof of Theorem \ref{theorem.6} is complete. 

\section{A Simple Framework for Disjoint Solutions} \label{sec.6}
In this section, we consider the problem of finding a largest set of pairwise disjoint solutions. More precisely, we present an algorithm for solving \textsc{Max-Disjoint Solutions}, formally defined below. 

\begin{extthm}[\textsc{Max-Disjoint Solutions}] \label{problem:8}
Given a finite set $E$ of size $n$, an implicitly defined family $\Gamma$ of subsets of $E$, referred to as feasible solutions, and a membership oracle $\mathcal{O}_{\Gamma}$ for $\Gamma$, find a set $C \subseteq \Gamma$ such that $X \cap Y = \emptyset$ for all $X, Y \in C$, and $|C|$ is as large as possible. 
\end{extthm}

We assume that the set $\Gamma$ of feasible solutions satisfies the structural properties \ref{property.1} and \ref{property.2} of Theorem \ref{theorem.1}. That is, there is a poset $P = (E, \leq)$ that is the disjoint union of $r$ chains $(E_i, \preceq_i)$, $i \in [r]$, and the set $\Gamma \subseteq E_1 \times \dots \times E_r$, with componentwise order $\preceq$, forms a distributive lattice. 

The idea behind the algorithm is simple: start by finding the bottom element of the lattice of feasible solutions, remove it along with any other solutions that overlap with it, and then repeat this process on the remaining sublattice until no feasible solutions remain. Of course, we want to avoid working on the lattice directly, as it can be of exponential size. Instead, we assume that the algorithm has access to the following oracles, or subroutines:  
    \begin{itemize}
    \item Minimal/Maximal Solution Oracles ($\mathcal{O}_{\min}$ and $\mathcal{O}_{\max}$): On input $\langle P, \mathcal{O}_\Gamma \rangle$, the \textit{minimal solution oracle} ($\mathcal{O}_{\min}$) returns the bottom element of the distributive lattice $(\Gamma, \preceq)$, while the \textit{maximal solution oracle} ($\mathcal{O}_{\max}$) returns its top element; i.e., 
    \begin{equation*}
    \mathcal{O}_{\min}(P, \mathcal{O}_\Gamma) = \bigwedge_{X \in \Gamma} X, \quad \text{and} \quad \mathcal{O}_{\max}(P, \mathcal{O}_\Gamma) = \bigvee_{X \in \Gamma} X.
    \end{equation*}
    
    \item Disjoint Successors Oracle ($\mathcal{O}_\mathrm{ds}$): For a feasible solution $X \in \Gamma$, the subset $\Gamma(X) \subset \Gamma$ of \textit{disjoint successors of $X$} consists of all feasible solutions that are both disjoint from $X$ and successors of $X$ with respect to the order $\preceq$; that is, 
    $
    \Gamma(X) = \{ Y \mid Y \in \Gamma, X \cap Y = \emptyset, X \preceq Y \}.
    $
    Given an input $\langle X, P, \mathcal{O}_\Gamma \rangle$, this oracle returns the subposet of $P$ induced by the subset of elements of $E$ that appear in the disjoint succesors of $X$; i.e.,   
    \[
    \mathcal{O}_\mathrm{ds}(X, E, \mathcal{O}_\Gamma) = P \left[ \bigcup \Gamma(X) \right].
    \]
\end{itemize}

In this general framework, we achieve the following result. 

\begin{theorem} \label{theorem.7}
    \textsc{Max-Disjoint Solutions} can be solved in $O(n)$ oracle calls.  
\end{theorem}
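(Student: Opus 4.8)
The plan is to analyze the greedy procedure sketched just before the statement: repeatedly take the bottom solution of the current lattice, add it to the output, and recurse on its disjoint successors. Concretely, I would maintain a poset $P'$ (initially $P$), call $X \leftarrow \mathcal{O}_{\min}(P', \mathcal{O}_\Gamma)$ to obtain the bottom element of the current lattice, append $X$ to the output collection $C$, and then replace $P'$ by $\mathcal{O}_{\mathrm{ds}}(X, P', \mathcal{O}_\Gamma)$, halting once no feasible solution remains. The first thing to establish is that this recursion is well-defined, i.e., that the feasible solutions living inside the restricted poset $P' = P[\bigcup\Gamma(X)]$ are \emph{exactly} the disjoint successors $\Gamma(X)$, and that $\Gamma(X)$ is again a distributive lattice, so that $\mathcal{O}_{\min}$ returns its bottom element. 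Using property \ref{property.1}, if $X$ picks $x_i$ from chain $E_i$, then every element of $\bigcup\Gamma(X)$ lying in $E_i$ is strictly above $x_i$; hence any solution contained in $P'$ differs from $X$ in every coordinate and dominates it, so it lies in $\Gamma(X)$. Closure of $\Gamma(X)$ under $\wedge$ and $\vee$ follows because $x_i \prec_i y_i$ and $x_i \prec_i y_i'$ imply $x_i \prec_i \min(y_i,y_i')$ and $x_i \prec_i \max(y_i,y_i')$, so meets and joins of disjoint successors are again disjoint successors; thus $\Gamma(X)$ is a sublattice and, by property \ref{property.2}, distributive.

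Next I would prove correctness, for which the key structural fact is an \emph{uncrossing} lemma: any family of pairwise disjoint feasible solutions can be transformed, without changing its size, into one that is totally ordered (a chain in the lattice). Given two incomparable disjoint solutions $Y,Z$, their meet $Y\wedge Z$ and join $Y\vee Z$ are again disjoint (in each chain $\min(y_i,z_i)\neq\max(y_i,z_i)$ since $y_i\neq z_i$) and, crucially, remain disjoint from every other member of the family (in each coordinate a third solution avoids both $y_i$ and $z_i$, hence avoids their min and max). Replacing $Y,Z$ by $Y\wedge Z,\,Y\vee Z$ therefore preserves both size and pairwise disjointness. To see that iterating this terminates in a chain, I would fix an ordering $(O_1,\dots,O_p)$ of the family and always send the meet to the lower index; the integer potential $\Phi=\sum_j j\cdot\big(\sum_i \mathrm{rank}_i(o_i^j)\big)$ strictly increases at every uncrossing of an incomparable pair (the total rank in each chain is conserved while mass moves to higher indices), and $\Phi$ is bounded, so the process halts with no incomparable pair left. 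Hence it suffices to compare greedy against an optimal family that is a chain $O_1\prec\cdots\prec O_p$ with strict increase in every coordinate.

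With this reduction in hand, the optimality of greedy follows by a term-by-term exchange argument: writing the greedy output as $X_1\prec\cdots\prec X_m$, I would show by induction that $X_j\preceq O_j$ for all $j\le p$. The base case is immediate since $X_1=0_L$ is the global minimum. For the step, $X_j\preceq O_j\preceq O_{j+1}$ gives $X_j\preceq O_{j+1}$, and in each chain $x_i^j\preceq_i o_i^j\prec_i o_i^{j+1}$ gives $x_i^j\neq o_i^{j+1}$, so $O_{j+1}\in\Gamma(X_j)$; since $X_{j+1}$ is the bottom of $\Gamma(X_j)$ we get $X_{j+1}\preceq O_{j+1}$, and in particular $\Gamma(X_j)\neq\emptyset$, so greedy does not stop early. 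Therefore $m\ge p$, and since the greedy output is itself a pairwise disjoint family of size $m$, it is optimal.

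Finally, for the running time I would observe that each round issues exactly one call to $\mathcal{O}_{\min}$ and one to $\mathcal{O}_{\mathrm{ds}}$, and that consecutive greedy solutions strictly increase in every chain, so at most $\min_i |E_i|\le n$ rounds occur before the lattice becomes empty; this yields $O(n)$ oracle calls in total. I expect the main obstacle to be the correctness argument rather than the algorithm itself—in particular, verifying that uncrossing preserves disjointness against \emph{all} other solutions simultaneously, and packaging the termination of uncrossing via the potential $\Phi$ so that one may legitimately assume the optimum is a chain before invoking the exchange argument.
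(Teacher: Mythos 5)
Your proposal is correct and follows essentially the same route as the paper: the identical greedy algorithm (Algorithm~\ref{algo.1}), the same observation that the disjoint successors $\Gamma(X)$ again satisfy properties~\ref{property.1} and~\ref{property.2} (Observation~\ref{obs.3}), the same reduction of an optimal disjoint family to a left-right ordered chain (the paper invokes its earlier Claim~\ref{claim.0}, which you re-derive via uncrossing with a potential function), and the same exchange argument against that chain, with the $O(n)$ bound coming from the shortest-chain length in both cases. The only cosmetic differences are your halting test (emptiness of the disjoint-successor poset rather than the paper's intersection test against the top element $X_z$---equivalent, since $\Gamma(X)\neq\emptyset$ iff $X\cap X_z=\emptyset$) and your greedy-stays-ahead induction $X_j \preceq O_j$ in place of the paper's replacement-by-contradiction argument.
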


It is important to note that these subroutines are problem specific, and must be designed and implemented based on the particular problem defined by $P$ and $\Gamma$. This framework has been implicitly applied in previous work to develop efficient algorithms for identifying collections of pairwise disjoint minimum $s$-$t$ cuts \cite{de2023finding} and stable matchings \cite{ganesh2021disjoint}, where the corresponding oracle implementations run in near-linear time. Here, we extend and generalize the ideas in these works to the broader class of problems that satisfy properties \ref{property.1} and \ref{property.2} of Theorem \ref{theorem.1}.

\paragraph{Preliminaries.} Before we formally describe the algorithm, we require some results and notation. Throughout, let $L$ denote the distributive lattice $(\Gamma, \preceq)$. We use $X_z$ and $X_o$ to denote the top and bottom elements of a lattice $L$, respectively, which are the two elements that satisfy $X_o \preceq X \preceq X_z$ for all $X \in \Gamma$. For a feasible solution $X \in \Gamma$, we use $X(\ell)$ to denote the element in the $\ell$-th component of $X$. Note that $X(\ell) \in E_\ell$. The following observation is a necessary condition for the existence of disjoint solutions in $\Gamma$. 

\begin{observation} \label{obs.2}
    Let $e \in E$ be an element of both $X_o$ and $X_z$. Then $e$ must be present in every feasible solution in $\Gamma$. 
\end{observation}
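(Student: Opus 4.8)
The plan is to exploit the componentwise description of the bottom and top elements of $L$, together with Property~\ref{property.1}, which forces each feasible solution to use exactly one element from each chain. First I would locate the chain containing $e$: since $e \in E$, Property~\ref{property.1} tells us that $e$ lies in a unique chain $E_\ell$ of the chain decomposition. Because every feasible solution---and in particular both $X_o$ and $X_z$---contains exactly one element from $E_\ell$, the hypothesis that $e \in X_o$ and $e \in X_z$ is equivalent to saying that $e$ occupies the $\ell$-th component of each, i.e. $X_o(\ell) = e = X_z(\ell)$.

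Next I would invoke the fact that the meet and join in $L$ act as componentwise minimum and maximum over the chains; this is exactly the componentwise description of $\wedge$ and $\vee$ used in Observation~\ref{obs.1} and guaranteed by Properties~\ref{property.1} and~\ref{property.2}. Since $X_o = \bigwedge_{X \in \Gamma} X$ and $X_z = \bigvee_{X \in \Gamma} X$, reading off the $\ell$-th component gives $X_o(\ell) = \min_{\preceq_\ell} \{\, X(\ell) : X \in \Gamma \,\}$ and $X_z(\ell) = \max_{\preceq_\ell} \{\, X(\ell) : X \in \Gamma \,\}$, where the minimum and maximum are taken with respect to the chain order $\preceq_\ell$ on $E_\ell$.

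Combining the two steps, for every feasible solution $X \in \Gamma$ we have $e = X_o(\ell) \preceq_\ell X(\ell) \preceq_\ell X_z(\ell) = e$. By antisymmetry of $\preceq_\ell$, this sandwiches $X(\ell) = e$ for all $X \in \Gamma$, so $e$ appears in the $\ell$-th component of every feasible solution, which is precisely the claim. I do not expect a genuine obstacle here; the only point needing care is the justification that the $\ell$-th components of the global meet and join coincide with the minimum and maximum of the $\ell$-th components over all of $\Gamma$, and this follows directly from the componentwise min/max form of $\wedge$ and $\vee$ already established. Everything else is a one-line squeezing argument within the chain $E_\ell$.
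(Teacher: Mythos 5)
Your proposal is correct and follows essentially the same route as the paper: both arguments sandwich the $\ell$-th component of an arbitrary $X \in \Gamma$ via $e = X_o(\ell) \preceq_\ell X(\ell) \preceq_\ell X_z(\ell) = e$ and conclude $X(\ell) = e$ by antisymmetry of the chain order. The only cosmetic difference is that the paper reads the sandwich directly off the componentwise order in $X_o \preceq X \preceq X_z$, whereas you route it through the min/max characterization of the global meet and join; these are equivalent justifications of the same step.
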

\begin{proof}
    Consider an arbitrary feasible solution $X \in \Gamma$. Without loss of generality, let $e \in E_\ell$. By definition of bottom element of $L$, we have $X_o \preceq X$, and thus $e \preceq_\ell X(\ell)$. On the other hand, by definition of top element of $L$, we have $X \preceq X_z$, which implies $X(\ell) \preceq_\ell e$. By antisymmetry of the partial order $\preceq_\ell$, it follows that $X(\ell) = e$. Hence, $e \in X$, proving the fact.
\end{proof}

We also make the following observation about the set of disjoint successors of a feasible solution. With a slight abuse of notation, we use $\preceq$ to denote to the componentwise ordering arising from $P$ and any induced suposet. 

\begin{observation} \label{obs.3}
    For any $X \in \Gamma$, the set $\Gamma(X)$ of disjoint successors of $X$ satisfies properties \ref{property.1} and \ref{property.2} of theorem \ref{theorem.1}. 
\end{observation}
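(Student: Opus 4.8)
The plan is to verify the two properties directly for the pair consisting of the reduced ground set $E' := \bigcup \Gamma(X)$, equipped with the induced subposet $P[E']$, and the family $\Gamma(X)$ of disjoint successors, ordered componentwise. The case $\Gamma(X) = \emptyset$ is trivial, so I assume $\Gamma(X) \neq \emptyset$.

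For property \ref{property.1}, I would observe that $P[E']$ is obtained from the disjoint union of chains $(E_i, \preceq_i)$ by restricting each $E_i$ to $E_i \cap E'$; since a subposet of a chain is a chain and the chains remain pairwise disjoint, $P[E']$ is again a disjoint union of chains. Because every $Y \in \Gamma(X) \subseteq \Gamma$ already contains exactly one element of each original chain $E_i$ (property \ref{property.1} for $\Gamma$) and satisfies $Y \subseteq E'$, each such $Y$ contains exactly one element of each chain $E_i \cap E'$. In particular $E_i \cap E' \neq \emptyset$ for every $i$, so the decomposition consists of $r$ nonempty chains and property \ref{property.1} holds for $\Gamma(X)$.

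For property \ref{property.2}, the plan is to show that $\Gamma(X)$ is a sublattice of $L = (\Gamma, \preceq)$; distributivity then follows at once from the preliminary fact that any sublattice of a distributive lattice is distributive. Recall that the meet and join in $L$ act componentwise as $\min$ and $\max$ with respect to each $\preceq_\ell$. The key step is the following strictness observation: for any $Y \in \Gamma(X)$ and any chain index $\ell$, the conditions $X \preceq Y$ and $X \cap Y = \emptyset$ force $X(\ell) \prec_\ell Y(\ell)$. Indeed, $X \preceq Y$ gives $X(\ell) \preceq_\ell Y(\ell)$, and since both $X$ and $Y$ select exactly one element of $E_\ell$, disjointness means $X(\ell) \neq Y(\ell)$, so the inequality is strict.

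Given this, take $Y_1, Y_2 \in \Gamma(X)$ and consider $Y_1 \wedge Y_2$ and $Y_1 \vee Y_2$, which lie in $\Gamma$ since $L$ is a lattice. As $X$ is a common lower bound of $Y_1$ and $Y_2$, it lies below their greatest lower bound, so $X \preceq Y_1 \wedge Y_2 \preceq Y_1 \vee Y_2$. For disjointness, fix $\ell$: the $\ell$-th components of $Y_1 \wedge Y_2$ and $Y_1 \vee Y_2$ are the $\preceq_\ell$-minimum and $\preceq_\ell$-maximum of $Y_1(\ell)$ and $Y_2(\ell)$, both of which are strictly above $X(\ell)$ by the strictness observation; hence both components differ from $X(\ell)$, and therefore $X \cap (Y_1 \wedge Y_2) = X \cap (Y_1 \vee Y_2) = \emptyset$. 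Thus $Y_1 \wedge Y_2, Y_1 \vee Y_2 \in \Gamma(X)$, establishing closure and completing the proof. I expect the main obstacle to be precisely this preservation of disjointness under the meet: it is not enough that $X$ be disjoint from $Y_1$ and $Y_2$ individually, and the argument genuinely relies on the strict inequality $X(\ell) \prec_\ell Y_i(\ell)$ that property \ref{property.1} (one element per chain) makes available.
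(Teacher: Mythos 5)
Your proof is correct and follows essentially the same route as the paper: property \ref{property.1} via restricting the chain decomposition to the induced subposet $P\left[\bigcup \Gamma(X)\right]$, and property \ref{property.2} by showing $(\Gamma(X), \preceq)$ is a sublattice of $L$ and invoking that sublattices of distributive lattices are distributive. In fact, your closure argument is more careful than the paper's, which merely asserts that $\Gamma(X)$ is closed under componentwise meet and join; your strictness observation $X(\ell) \prec_\ell Y(\ell)$, forced by disjointness together with the one-element-per-chain structure, is precisely the detail needed to see that the meet and join of two disjoint successors of $X$ remain disjoint from $X$.
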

\begin{proof}
    We start with the first property. Let $P(X) = \left[ \bigcup \Gamma(X) \right]$ denote the subposet induced by $\bigcup \Gamma(X)$. This subposet then consists of the disjoint chains of $P$ but restricted to the elements appearing in $\bigcup \Gamma(X)$. By definition of both $P(X)$ and $\Gamma(X)$, each solution in $\Gamma(X)$ must contain exactly one element from each chain in $P(X)$. Hence, property \ref{property.1} is satisfied. 

    As for the second property, it is clear that $\Gamma(X) \subset \Gamma$. Moreover, because the join ($\vee$) and meet ($\wedge$) operations in $L$ are defined as the componentwise maximum and minimum, respectively, $\Gamma(X)$ remains closed under these operations. 
    This means that the poset $(\Gamma(X), \preceq)$ is a sublattice of $L$ and thus, a distributive lattice. Hence, property \ref{property.2} is satisfied.
\end{proof}

With these results, we are ready to describe and analyze the algorithm. 

\paragraph{The algorithm.} Given an input $\langle P, \mathcal{O}_\Gamma \rangle$, the algorithm begins by determining the bottom element $X_o$ and the top element $X_z$ of lattice $L$ by querying the oracles $\oraclemin$ and $\oraclemax$ with the input $\langle P, \oraclegamma \rangle$. If these two solutions share an element, the algorithm stops, as Observation \ref{obs.2} ensures that no disjoint solutions exist. Otherwise, it proceeds by querying $\oracleds(X_o, P, \oraclegamma)$ to determine the subposet $P(X_o)$ induced by $\bigcup \Gamma(X_o)$. 

By Observation \ref{obs.3}, the set $\Gamma(X_o)$ satisfies properties \ref{property.1} and \ref{property.2} of Theorem \ref{theorem.1}, with the poset $P(X_o)$ serving as the corresponding chain decomposition. Let $L(X_o) = (\Gamma(X_o), \preceq)$ be the associated sublattice of disjoint successors of $X_o$. 
The algorithm proceeds by querying $\oraclemin$ with the input $\langle P(X_o), \oraclegamma \rangle$ to identify the bottom element $X_o'$ of $L(X_o)$. Once more, if $X_o'$ is disjoint from $X_z$, the algorithm queries $\oracleds(X_o', P(X_o), \oraclegamma)$ to determine the subposet $P(X_o')$ induced by the set $\bigcup \Gamma(X_o')$ of disjoint successors of $X_o'$. This process repeats as long as $\oraclemin$ continues to return solutions that are disjoint from $X_z$. Throughout the execution, the algorithm maintains a set $C$ that stores all solutions found that are disjoint from $X_z$ and returns this set upon termination. The algorithm is presented below as Algorithm \ref{algo.1}. 

\begin{algorithm}
\caption[Caption for LOF]{\textsc{Max-Disjoint Solutions}} \label{algo.1}
\vspace{.5em}
\textbf{Input:}  A poset $P$ and a membership oracle $\oraclegamma$ for $\Gamma$ satisfying properties \ref{property.1} and \ref{property.2} of Theorem \ref{theorem.1}. \\
\textbf{Output:}  A maximum cardinality set $C$ of disjoint feasible solutions from $\Gamma$. 
\vspace{.5em}
\begin{algorithmic}[1]
\newcommand\NoDo{\renewcommand\algorithmicdo{}}
\newcommand\ReDo{\renewcommand\algorithmicdo{\textbf{do}}}
\algrenewcommand\algorithmiccomment[1]{\hfill {\color{blue} \(\triangleright\) #1}}

\State $C \leftarrow \emptyset$
\State $X_z \leftarrow \oraclemax(P, \oraclegamma)$ \Comment{Top element of lattice $L$.}
\State $X \leftarrow \oraclemax(P, \oraclegamma)$  \Comment{Bottom element of lattice $L$.}
\State $P(X) \leftarrow \oracleds(P, X_o, \oraclegamma)$ \Comment{This defines a new instance.}
\While{$X \cap X_z = \emptyset$}
\State $C \leftarrow C \cup \{X\}$
\State $X \leftarrow \oraclemin(P(X), \oraclegamma)$    \Comment{New disjoint solution.}
\EndWhile
\State $C \leftarrow C \cup \{X\}$
\State \textbf{return} $C$
\end{algorithmic}
\vspace{.5em}
\end{algorithm}

\paragraph{Correctness.} The solutions in the set $C = \{X_1, X_2, \ldots, X_k\}$ returned by Algorithm \ref{algo.1} are clearly disjoint by construction, as the poset returned by the oracle $\oracleds$ at each step is induced by the set of disjoint successors of the solution identified in the precious step. Moreover, the set $C$ is, in fact, a left-right ordered tuple. This follows again by construction, as each newly identified solution is determined from the subset of elements that are chain-successors of elements included in previously identified solutions. Note that the notion of left-right order here is strict, meaning that $X_i \prec X_j$ for any $1 \leq i < j \leq k$. 

To analyze this further, let us go back for a moment to Section \ref{sec.3}. Note that the $d_\mathrm{sum}$ measure is maximum whenever its input consists of disjoint solutions. Consider then an arbitrary $k$-tuple of disjoint feasible solutions, for some $k > 0$. We know, by Claim \ref{claim.0}, that there exists a $k$-tuple of disjoint feasible solutions that is in left-right order. In particular, this is true for a disjoint-solutions tuple of maximum cardinality $k^*$. Then, as we did in Section \ref{sec.3}, we may restrict our arguments to the set of $k^*$-tuples that are in left-right order without loss of generality. 

To complete the correctness of Algorithm \ref{algo.1}, it remains to show that the tuple returned by the algorithm is of maximum cardinality $k^*$. 

\begin{lemma}
    Algorithm \ref{algo.1} outputs a longest tuple of disjoint feasible solutions. 
\end{lemma}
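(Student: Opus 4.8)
The plan is to prove optimality by induction on $k^*$, the length of a longest left-right-ordered tuple of pairwise disjoint feasible solutions, which, as observed above via Claim \ref{claim.0}, equals the maximum size of any set of pairwise disjoint solutions (reordering preserves multiplicities, and a disjoint collection is precisely one in which every element has multiplicity at most one). The engine of the argument is a greedy exchange lemma showing that the globally minimal solution $X_o = \oraclemin(P, \oraclegamma)$ can always be taken as the first element of \emph{some} optimal tuple, so that the problem reduces to the identical problem on the sublattice $L(X_o) = (\Gamma(X_o), \preceq)$ of disjoint successors of $X_o$, which the algorithm then recurses on.

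First I would dispose of the base case $k^* = 1$. Here I would show that two disjoint solutions exist if and only if $X_o \cap X_z = \emptyset$: the ``if'' direction is immediate since $X_o$ and $X_z$ are themselves disjoint, and the ``only if'' direction is exactly Observation \ref{obs.2}, which forces any common element of $X_o$ and $X_z$ into every feasible solution. Thus when $k^* = 1$ we have $X_o \cap X_z \neq \emptyset$, the \textbf{while}-loop is skipped, and the algorithm returns the singleton $\{X_o\}$, which is optimal.

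For the inductive step assume $k^* \geq 2$, so $X_o \cap X_z = \emptyset$. Take any optimal left-right tuple $(Y_1, \ldots, Y_{k^*})$. Since $X_o$ is the bottom of $L$ we have $X_o \preceq Y_1$, and the key exchange observation is that $X_o$ is in fact disjoint from every $Y_j$ with $j \geq 2$: arguing componentwise, disjointness of $Y_1$ and $Y_j$ together with $Y_1 \preceq Y_j$ forces $Y_1(\ell) \prec_\ell Y_j(\ell)$ in each chain $\ell$, whence $X_o(\ell) \preceq_\ell Y_1(\ell) \prec_\ell Y_j(\ell)$ and so $X_o(\ell) \neq Y_j(\ell)$. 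Consequently $(X_o, Y_2, \ldots, Y_{k^*})$ is again an optimal disjoint left-right tuple, and $(Y_2, \ldots, Y_{k^*})$ is a disjoint tuple lying entirely inside $\Gamma(X_o)$. I would then argue that the longest disjoint tuple in $L(X_o)$ has length exactly $k^*-1$: it is at least $k^*-1$ by the tuple just produced, and it cannot be longer, since prepending $X_o$ (which is below and disjoint from every element of $\Gamma(X_o)$ by definition) to any longer tuple would yield a disjoint tuple in $\Gamma$ of length exceeding $k^*$, contradicting maximality.

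Finally I would verify that the recursive call is a faithful instance of Algorithm \ref{algo.1} on $L(X_o)$, which is where the reuse of the single top element $X_z$ must be justified: since $X_o \cap X_z = \emptyset$ and $X_o \preceq X_z$, one checks that $X_z$ is disjoint from $X_o$ and a successor of it, so $X_z \in \Gamma(X_o)$; being the top of all of $\Gamma$, it is then also the top of $L(X_o)$. Hence the termination test $X \cap X_z = \emptyset$ used by the algorithm is exactly the correct top-element test for the sublattice, and by Observation \ref{obs.3} this sublattice still satisfies properties \ref{property.1} and \ref{property.2}. Applying the induction hypothesis to $L(X_o)$ yields a disjoint tuple of length $k^*-1$, and prepending $X_o$ gives a disjoint tuple of length $k^*$. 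The step I expect to be the crux is this exchange argument combined with the invariance of $X_z$ as the common top across all nested sublattices, as these are precisely what make the greedy choice provably safe and the recursion literally the same algorithm run on a smaller distributive lattice.
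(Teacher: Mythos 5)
Your proof is correct and rests on the same engine as the paper's own argument: the greedy exchange step showing the bottom element $X_o$ can replace $Y_1$ in an optimal left-right-ordered tuple, after which the problem lives entirely in the sublattice $\Gamma(X_o)$ of disjoint successors (the paper packages this as an iterated replacement ending in a contradiction with the termination test, whereas you package it as induction on $k^*$, but the combinatorial content is identical). If anything, your write-up is more careful on two points the paper leaves implicit: the componentwise argument that $X_o$ is disjoint from every $Y_j$ with $j \geq 2$, and the verification that $X_z$ remains the top element of every nested sublattice, so the test $X \cap X_z = \emptyset$ stays valid throughout the recursion.
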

\begin{proof}
    Let $C_\mathrm{ALG} = (X_1, X_2, \ldots, X_k)$ be the $k$-tuple of disjoint feasible solutions returned by Algorithm \ref{algo.1}. For the sake of contradiction, suppose that $C' = (Y_1, Y_2, \ldots, Y_\ell)$ is a longest left-right ordered tuple of disjoint feasible solutions with $\ell > k$. 

    By definition of bottom element, we know that solution $X_1 = X_o$ is a predecessor of every other feasible solution in $\Gamma$. This implies that $Y_1 \cap X_1 \neq \emptyset$; otherwise, we could append $X_1$ to the start of $C'$ and obtain a longer tuple of left-right ordered disjoint solutions. Then, we have $X_1 \preceq Y_1 \prec Y_2$, and we may replace $Y_1$ in $C'$ with $X_1$ to generate a new $\ell$-tuple $C_1$ of disjoint solutions. 

    By Observation \ref{obs.3}, and the definition of bottom element, we know that solution $X_2$ found by the algorithm is a predecessor of every feasible solution in $\Gamma(X_1)$; that is, $X_2$ is a predecessor of every feasible solution disjoint from $X_1$. By the same argument as before, $X_2 \cap Y_2 \neq \emptyset$. We then have $X_2 \preceq Y_2 \prec Y_3$, and we may replace $Y_2$ in $C_1$ with $X_2$ to generate a new $\ell$-tuple $C_2$ of disjoint solutions. 

    By repeating this procedure $k$ times, we end up with the $\ell$-tuple $C_k = (X_1, X_2, \ldots, X_k, Y_{k+1}, \ldots, Y_\ell)$ of left-right ordered disjoint solutions. Then, there exists a feasible solution $Y_{k+1}$ that is a strict successor of $X_k$---the last element of tuple $C_\mathrm{ALG}$. But this implies that $X_k$ is disjoint with the top element $X_z$ of $L$, which we know to be false by construction of $C_\mathrm{ALG}$. Thus, we get the necessary contradiction. 
\end{proof}

\paragraph{Time complexity.} The oracles $\oraclemin$ and $\oracleds$ are called $k^*$ times, and $k^*$ is upper bounded by the length of the shortest chain in $P$, which in the worst case, has length $O(n)$. This completes the proof of Theorem \ref{theorem.7}.

\section{Concluding Remarks} \label{sec.7}
We have shown that \textsc{Max-Cov $k$-Diverse Solutions} can be solved in polynomial time by reducing it to submodular function minimization on a distributive lattice, provided that the set $\Gamma$ of feasible solutions satisfies three structural properties. This establishes a general framework for designing polynomial-time algorithms for diverse variants of combinatorial problems.

We applied this framework to \textsc{Minimum $s$-$t$ Cut} and \textsc{Stable Matching}, proving that their respective solution sets meet the required conditions. An interesting direction for future research is to identify additional problems that satisfy these properties and to explore whether they can be exploited in ways that bypass the complexity of submodular function minimization algorithms. We have seen that when restricted to finding only disjoint solutions, there is a general algorithmic framework that bypasses the generality of SFM, but in turn relies on problem specific oracles. 

Our framework focuses primarily on the pairwise sum of Hamming distances as a diversity measure. A natural question is whether other diversity measures can also be used. We showed that the framework extends to both the coverage and absolute-difference diversity measures. Just as we identified structural properties that solution sets must satisfy to enable efficient computation of diverse solutions, we leave it an open problem to characterize a similar set of conditions that diversity measures must fulfill to be compatible with this approach. 

\section*{Acknowledgement}
This research was supported by the European Union’s Horizon 2020 research and innovation programme under the Marie Skłodowska-Curie grant agreement no. 945045, and by the NWO Gravitation project NETWORKS under grant no. 024.002.003.

\bibliography{biblio}

\end{document}